\documentclass[3p,times]{elsarticle}
\journal{Theoretical Computer Science}

\usepackage{amsmath}
\usepackage{amssymb}
\usepackage{amsthm}
\usepackage{booktabs}
\usepackage{tabularx}
\usepackage{xltabular}
\usepackage{tikz}
\usetikzlibrary{arrows.meta,positioning}

\usepackage[hidelinks]{hyperref}
\usepackage{orcidlink}

\theoremstyle{plain}
\newtheorem{theorem}{Theorem}
\newtheorem{lemma}{Lemma}
\newtheorem{corollary}{Corollary}
\newtheorem{proposition}{Proposition}

\theoremstyle{definition}
\newtheorem*{definition*}{Definition}
\newtheorem{example}{Example}

\theoremstyle{remark}
\newtheorem{remark}{Remark}

\begin{document}

\begin{frontmatter}

\title{Certificates for short extending words in a finite automaton}

\author[a]{Michele Miccinesi\,\orcidlink{0000-0003-3138-8063}}
\ead{mih.ele@gmail.com}
\affiliation[a]{organization={University of Pisa},
            country={Italy}}

\begin{abstract}
Let $\mathcal A$ be a complete deterministic finite automaton on a state set
$Q$ of size $n$ with $k$ letters, and for a proper nonempty subset
$S$ of $Q$ let $\mathrm{minext}(S)$ be the length of a shortest word $u$ with
$|Su^{-1}|>|S|$, where $Su^{-1}=\{q: q\cdot u\in S\}$.  
To each state $q$ attach the integer
$\beta^{\ast}_q=\sum_{t=1}^{n-1}k^{\,n-1-t}(\mathrm{indeg}_t(q)-k^{t})$, where
$\mathrm{indeg}_t(q)$ counts the pairs $(p,u)$ with $|u|=t$ and $p\cdot u=q$,
and let $B(S)=\sum_{q\in S}\beta^{\ast}_q$.  On every synchronizing
automaton, $B(S)\ge0$ implies
$\mathrm{minext}(S)\le n-1$, so, as $B(Q)=0$, one of $S$ and $Q\setminus S$
extends within $n-1$; when $B(S)>0$ no hypothesis is needed.  Kari's
Eulerian extension lemma is the case $\beta^{\ast}=0$, and
$\beta^{\ast}$, like every member of the family
$\sum_{t=1}^{n-1}c_t\sigma_t$, $c_t>0$, vanishes identically if and only
if the automaton is Eulerian, where
$\sigma_t(S)=\sum_{q\in S}(\mathrm{indeg}_t(q)-k^{t})$.
On strongly connected automata $\sigma_t(S)/k^{t}$ has Ces\`aro limit
$n\,e(S)/e(Q)-|S|$ for Friedman's weight $e$; that limit certifies
singletons but no larger subset in general.
The hypothesis $B(S)\ge0$ cannot be relaxed by one integer unit, nor
can the constant $n-1$ be improved.
A second-moment test on the sizes $|Su^{-1}|$ certifies 60 to 95 percent
of the subsets with $B(S)<0$ at $n\le7$.  Along non-Eulerian automata
whose words of length $n-1$ merge a fraction of the state pairs bounded
below, with $\max_q\mathrm{indeg}_{n-1}(q)=o(nk^{n-1})$, it certifies
all but a vanishing share of them.  The functional $B$ certifies half
of the subsets outside $\{B=0\}$.
At each subset size coprime to $n$ ($n\ge4$) some synchronizing Eulerian
binary automaton attains the constant $n-1$; whether only there is open.
No reset bound follows: \v{C}ern\'y's automata have subsets not extending
within $n-1$.
\end{abstract}

\begin{keyword}
synchronizing automaton \sep reset word \sep \v{C}ern\'y conjecture \sep
extension method \sep Eulerian digraph
\MSC[2020] 68Q45 \sep 20M35 \sep 05C50 \sep 15B48
\end{keyword}

\end{frontmatter}

\section{Introduction}
\label{sec:intro}

\v{C}ern\'y's conjecture asks whether every synchronizing $n$-state automaton has a
reset word of length at most $(n-1)^{2}$ \cite{ce}.  One of the two standard
approaches to a quadratic bound is the extension method: if every subset $S$ of the
state set $Q$ of size at least $2$ and at most $n-1$ has an extending word of length
at most $\alpha n$, that is a word $u$ with $|Su^{-1}|>|S|$, then the
automaton resets within $1+\alpha n(n-2)$ \cite[Proposition~1]{vo}.  
The hypothesis is uniform: it must hold for every subset.

This paper exhibits an
explicit integer vector $\beta^{\ast}\in\mathbb Z^{Q}$, computable in
$O(kn^{2})$ operations from the transition table, such that on every
synchronizing complete automaton every proper nonempty subset $S$ with
$B(S)=\sum_{q\in S}\beta^{\ast}_q\ge0$ has an extending word of length
at most $n-1$ (Theorem~\ref{thm:main}).  In the words of the title, the
sign of the linear functional $B$ is a certificate for extension within
$n-1$: a sufficient condition for one subset, read off from $\beta^{\ast}$
in $|S|$ additions, with no requirement on the other subsets.  The entry at
$q$ is $\beta^{\ast}_q=\sum_{t=1}^{n-1}k^{\,n-1-t}(\mathrm{indeg}_t(q)-k^{t})$,
where $\mathrm{indeg}_t(q)$ counts the pairs $(p,u)$ with $|u|=t$ and
$p\cdot u=q$: the $t$-th term is the excess of $q$ over the mean $t$-step
in-degree $k^{t}$, and $\beta^{\ast}$ is a weighted sum of these excesses;
summed over $S$, the excess at length $t$ is the defect
$\sigma_t(S)=\sum_{q\in S}(\mathrm{indeg}_t(q)-k^{t})$, and
$B(S)=\sum_{t=1}^{n-1}k^{\,n-1-t}\sigma_t(S)$.
Call a proper nonempty subset with $\mathrm{minext}(S)>n-1$ stuck.  On a
synchronizing automaton every stuck subset therefore has $B(S)<0$: the
stuck subsets lie on one side of one hyperplane of $\mathbb R^{Q}$,
intersected with the cube.  No
reset bound follows, nor could it in general (Section~\ref{sec:no-reset}).  
The functional $B$ is the member $T=n-1$ of the family
$B_T=\sum_{t=1}^{T}k^{\,T-t}\sigma_t$, each member certifying extension
within $T$ wherever it is strictly positive (Section~\ref{sec:certificate}).
The synchronizing hypothesis is used only on the subsets with every
$\sigma_t(S)=0$, a linear subspace intersected with the cube inside
$\{B=0\}$ (Theorem~\ref{thm:krylov}); on 55 to 93 percent of the
non-Eulerian automata measured (\ref{app:void-share}) that subspace
contains no proper nonempty subset.  Section~\ref{sec:qcert} presents a
second-moment test on the preimage sizes $|Su^{-1}|$, which reduces the
set of uncertified subsets: over the populations measured it establishes
$\mathrm{minext}(S)\le n-1$ on 60.1 to 95.5 percent of the subsets with
$B(S)<0$, at the expense of linearity, $O(kn^{3})$ against $O(kn^{2})$ for all
lengths $t\le n-1$ and $O(2^{n}n)$ against $O(2^{n})$ over all subsets.
Section~\ref{sec:asymptotic} bounds both tests on every automaton.  Call
the pairs $(p,u)$ with $|u|=n-1$ the landings, received at $p\cdot u$.
Along any sequence of non-Eulerian automata whose words of length $n-1$
merge a fraction of the state pairs bounded below, with the largest
fraction of the landings at one state vanishing, the second-moment test
certifies all but a vanishing share of the subsets with $B(S)<0$; on
every automaton $B$ certifies half of the subsets outside $\{B=0\}$.
Along such a sequence the uncertified share vanishes exponentially fast
in $n$ when no state receives more than a bounded multiple of the mean
number of landings.  The second-moment test certifies no stuck subset,
so the stuck subsets are part of that uncertified share.  The merging hypothesis is also necessary: for $n\ge100$, where the
words of length $n-1$ merge fewer pairs on average than $n^{3/2}/425$
times the standard deviation over the states of the number of landings
divided by its mean, more than
$2.7$ percent of the $2^{n}$ subsets fail the test at length $n-1$
(Proposition~\ref{prop:barrier}).

The starting point is Kari's theorem that \v{C}ern\'y's conjecture holds on
Eulerian automata, where every state has in-degree $k$ \cite{ka}.
The vector $\beta^{\ast}$ vanishes identically on the Eulerian automata and nowhere else
(Theorem~\ref{thm:fibre}), 
where Theorem~\ref{thm:main} degenerates to Kari's own extension lemma.
On every other automaton $\beta^{\ast}\ne0$, and Theorem~\ref{thm:main} replaces the Eulerian
hypothesis by the inequality $B(S)\ge0$, one subset at a time.

What the certificate adds to Kari's lemma is twofold.  It replaces the
Eulerian hypothesis by a per-subset inequality whose functional vanishes
identically exactly on the Eulerian automata,
for every strictly positive weighting of the word lengths $t=1,\dots,n-1$.  
Each member of the family is, moreover, a polynomial-time computable
per-subset certificate for growth in cardinality under an arbitrary
in-degree profile.

The counting identity behind the certificate
is Kari's one-step identity applied to the $t$-th power automaton; the chain
argument in the proof of Theorem~\ref{thm:main}, Lemma~\ref{lem:stabilise},
is Kari's Lemma 3 in the form Steinberg gave it; and the passage to
power automata is Steinberg's averaging lemma.  
Section~\ref{sec:previous} traces each ingredient to its source.  What
the paper builds from them is the pair above: the exactness of the
degeneration, a property of the whole positive cone of weightings
(Theorem~\ref{thm:weights}), and the per-subset certificate reading.
The family also has a limit: on strongly connected automata the
Ces\`aro means of the normalised defects $\sigma_t(S)/k^{t}$ converge to
Friedman's weight, the stationary vector of Berlinkov's bound, centred
(Proposition~\ref{prop:cesaro}).  That functional is
linear, integer-valued, antisymmetric and zero exactly on the Eulerian
automata, like $B$, yet it certifies only singletons
(Section~\ref{sec:stationary}), so the limit of the family lies outside
the family.

Kari's concluding section argues that imbalance in the in-degrees helps
synchronization, and that the
digraphs likely to be hard for a general proof are the nearly, but not
fully, balanced ones.  Theorem~\ref{thm:main} is a statement about that
region: a subset is certified when $B(S)\ge0$, whatever the deviation
$d=\sum_q|\mathrm{indeg}(q)-k|$ of the in-degrees from $k$.

The paper is organised as follows.  Section~\ref{sec:setting} fixes
notation.  Section~\ref{sec:certificate} derives the certificate from a
counting identity and gives its closed form.  Section~\ref{sec:main-theorem}
proves the main theorem and localises the only use of synchronization.
Section~\ref{sec:antisym-fibre} proves the antisymmetry $B(Q\setminus S)=-B(S)$, that
$\beta^{\ast}$ vanishes identically exactly on the Eulerian automata, for
every weighting, and that for every $n\ge4$ and every subset size
coprime to $n$ some synchronizing Eulerian binary automaton attains the
constant $n-1$ at that size (Theorem~\ref{thm:gcd}); whether only at
such sizes is open.  Section~\ref{sec:sharpness}
establishes sharpness, introduces the second-moment certificate and
bounds the shares of the two certificates as the state set grows.
Section~\ref{sec:previous} relates the results to the work of Kari,
Steinberg and Berlinkov, its Section~\ref{sec:stationary} settling what the
stationary-vector functional certifies.
Section~\ref{sec:notdo} states three negative results: no reset bound
follows, $B$ is not monotone along flat moves $S\mapsto Sx^{-1}$, those
with $|Sx^{-1}|=|S|$, and a walk of flat moves to a subset with an
extending letter need not exist.  The
appendices hold the supporting material: \ref{sec:verification} the
populations and the verification, \ref{app:fibre-proofs} the proofs of
Theorem~\ref{thm:gcd} and Proposition~\ref{prop:gcd-partial} on the
Eulerian automata, \ref{app:stationary} the proofs, witnesses and
censuses for the stationary-vector functional, and \ref{app:qcert} the
measurements for the second-moment test.  \ref{sec:bounded-deviation}
holds the family of Kisielewicz and Szyku\l{}a, of deviation $d=n-3$, on
which no constant $\alpha$ bounds the extending length by $\alpha n$.

\section{Setting}
\label{sec:setting}

Throughout, $\mathcal A=(Q,\Sigma,\delta)$ denotes a complete deterministic finite automaton:
$|Q|=n\ge2$, $k=|\Sigma|\ge1$, and $\delta$ total.  We write $q\cdot u$ for
the action of a word $u\in\Sigma^{*}$ and, for $S\subseteq Q$,
\[Su^{-1}=\{q\in Q:\ q\cdot u\in S\},\]
\[\mathrm{minext}(S)=\min\{|u|:\ |Su^{-1}|>|S|\}\in\mathbb N\cup\{\infty\}.\]

The automaton is synchronizing if $|Q\cdot r|=1$ for some word $r$; 
$r$ is a reset word, and the reset threshold $\mathrm{rt}(\mathcal A)$ is the
length of a shortest one.  Throughout the paper, a proper nonempty subset
with $\mathrm{minext}(S)>n-1$ is called stuck; 
the certificate below certifies extension within $n-1$.

Matrix conventions: $[S]$ is the characteristic row vector of $S$,
$\pi(u)_{q,q'}=[\,q\cdot u=q'\,]$, $\mathbf1$ is the all-ones column vector,
and
\[M=\sum_{x\in\Sigma}\pi(x),\qquad M_{q,q'}=\#\{x\in\Sigma:\ q\cdot x=q'\}.\]
Completeness alone gives $M\mathbf1=k\mathbf1$ and
$(\mathbf1^{\mathsf T}M)_{q'}=\mathrm{indeg}(q')$.  Define the deviation vector
\begin{equation}
\label{eq:identity}
w:=\mathrm{indeg}-k\cdot\mathbf1,\qquad
\mathbf1^{\mathsf T}M=k\,\mathbf1^{\mathsf T}+w^{\mathsf T},
\end{equation}
so that $\sum_qw_q=0$ always.  The deviation of $\mathcal A$ is
$d:=\sum_q|w_q|$; $d=0$ is the Eulerian case.  Since the entries of
$w$ sum to zero, $d=2\sum_q\max(0,w_q)$ is even at every $n$ and every
alphabet size.

The only input used below is the identity~\eqref{eq:identity}, which holds for every complete
automaton at every alphabet size. 
Nothing below assumes anything about the shape of $w$.  
Some sections report exhaustive machine enumerations;
see \ref{sec:conventions} for the conventions.

\section{The certificate}
\label{sec:certificate}

For $t\ge1$ let
$\mathrm{indeg}_t(q)=\#\{(p,u):\ |u|=t,\ p\cdot u=q\}=(\mathbf1^{\mathsf T}M^{t})_q$
be the $t$-step in-degree.  Since $\sum_q\mathrm{indeg}_t(q)=k^{t}n$, the
mean $t$-step in-degree is $k^{t}$, and
$\mathrm{indeg}_t(q)-k^{t}$ is the excess of $q$ at length $t$; at $t=1$ it
is $w_q$.

The certificate can be deduced by counting one set of pairs in two ways.  
Fix $t\ge1$ and $S\subseteq Q$, and count the pairs $(p,u)$ with $|u|=t$ and
$p\cdot u\in S$.  
Grouping by the landing state gives
$\sum_{q\in S}\mathrm{indeg}_t(q)$; grouping by the word gives
$\sum_{|u|=t}|Su^{-1}|$.  Subtracting $k^{t}|S|$ from both sides, once as a
total and once distributed over the $k^{t}$ words of length $t$, yields an
identity for the defect at each length.

\begin{definition*}
For $t\ge1$ and $S\subseteq Q$ put
$\sigma_t(S)=\mathrm{indeg}_t(S)-k^{t}|S|$, where
$\mathrm{indeg}_t(S)=\sum_{q\in S}\mathrm{indeg}_t(q)$.
\end{definition*}

\begin{lemma}
\label{lem:defect}
For every complete automaton, every $S\subseteq Q$ and every
$t\ge1$,
\begin{equation}
\label{eq:defect}
\sigma_t(S)=\sum_{|u|=t}\bigl(|Su^{-1}|-|S|\bigr).
\end{equation}
\end{lemma}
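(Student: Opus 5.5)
The plan is to double count the set
\[P_t(S)=\{(p,u):\ p\in Q,\ u\in\Sigma^{t},\ p\cdot u\in S\}\]
exactly as sketched before the definition. I will fix $t\ge1$ and $S\subseteq Q$ and compute $|P_t(S)|$ in two ways, then subtract the same quantity $k^{t}|S|$ from both.

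First, I group by the landing state $q=p\cdot u\in S$. For each $q\in S$, the pairs landing at $q$ are by definition counted by $\mathrm{indeg}_t(q)$. Since the landing state is uniquely determined by the pair (the automaton is deterministic and complete, so $p\cdot u$ is a single well-defined state), these classes partition $P_t(S)$. Hence $|P_t(S)|=\sum_{q\in S}\mathrm{indeg}_t(q)=\mathrm{indeg}_t(S)$.

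Second, I group by the word $u$. For fixed $u$ of length $t$, the admissible $p$ are exactly the states with $p\cdot u\in S$, that is, the elements of $Su^{-1}$. So $|P_t(S)|=\sum_{|u|=t}|Su^{-1}|$. Equating the two counts gives $\mathrm{indeg}_t(S)=\sum_{|u|=t}|Su^{-1}|$. Since there are exactly $k^{t}$ words of length $t$, I can write $k^{t}|S|=\sum_{|u|=t}|S|$. Subtracting this from both sides yields
\[\sigma_t(S)=\mathrm{indeg}_t(S)-k^{t}|S|=\sum_{|u|=t}\bigl(|Su^{-1}|-|S|\bigr),\]
which is \eqref{eq:defect}.

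The only step that needs any care is the identification $\mathrm{indeg}_t(q)=(\mathbf1^{\mathsf T}M^{t})_q$. This agrees with the combinatorial definition because $\pi(u)=\pi(x_1)\cdots\pi(x_t)$ and $M^{t}=\sum_{|u|=t}\pi(u)$. However, the argument above uses only the combinatorial definition, so nothing more is required. Equivalently, one could write $\sum_{|u|=t}[S]\pi(u)^{\mathsf T}\mathbf1$ in matrix form, but the double count is the cleanest route.
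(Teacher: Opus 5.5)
Your proof is correct and is essentially the paper's argument: the paper writes $|Su^{-1}|=\sum_{q\in S}|qu^{-1}|$ and $\sum_{|u|=t}|qu^{-1}|=\mathrm{indeg}_t(q)$, which is exactly your double count of the pairs $(p,u)$ with $p\cdot u\in S$, grouped by landing state and by word. It then subtracts $k^{t}|S|$ distributed over the $k^{t}$ words, as you do.
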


\begin{proof}
$|Su^{-1}|=\sum_{q\in S}|qu^{-1}|$ and
$\sum_{|u|=t}|qu^{-1}|=\mathrm{indeg}_t(q)$, so
$\sum_{|u|=t}|Su^{-1}|=\mathrm{indeg}_t(S)$; subtract $k^{t}|S|$ and
distribute it over the $k^{t}$ words of length $t$.
\end{proof}

\begin{corollary}
\label{cor:strict}
For any complete automaton, any $S\subseteq Q$ and any
$t\ge1$,
\[\sigma_t(S)>0\ \Longrightarrow\ \mathrm{minext}(S)\le t.\]
\end{corollary}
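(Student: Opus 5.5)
The plan is to read Corollary~\ref{cor:strict} directly off the identity~\eqref{eq:defect} of Lemma~\ref{lem:defect} by an averaging, or pigeonhole, argument over the words of length $t$. First, if $\sigma_t(S)>0$, the right-hand side $\sum_{|u|=t}\bigl(|Su^{-1}|-|S|\bigr)$ is a sum of $k^{t}$ integers with positive total. Hence at least one summand is positive: some word $u$ with $|u|=t$ satisfies $|Su^{-1}|-|S|>0$, that is $|Su^{-1}|>|S|$. Such a $u$ is an extending word of length exactly $t$. By the definition of $\mathrm{minext}(S)$ as a minimum over all extending words, this gives $\mathrm{minext}(S)\le t$.

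Some small points are worth noting, though none is a real obstacle. Completeness is needed only through Lemma~\ref{lem:defect}, which already holds for every complete automaton. The sum in~\eqref{eq:defect} is finite and nonempty because $k\ge1$, so there are $k^{t}\ge1$ words of length $t$. The definition of $\mathrm{minext}$ has no requirement that $S$ be proper or nonempty. If $S=Q$ or $S=\emptyset$, then $|Su^{-1}|=|S|$ for every word $u$, so every summand vanishes and $\sigma_t(S)=0$; the hypothesis $\sigma_t(S)>0$ therefore cannot hold, and the implication is vacuous in these cases.

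Nothing here requires synchronization or any assumption on the deviation vector $w$. The only step with content is the identity of Lemma~\ref{lem:defect}, which may be assumed. What remains is the observation that a sum with positive total has a positive term, so I expect no genuine obstacle.
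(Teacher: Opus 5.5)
Your proposal is correct and is the paper's argument: if $\sigma_t(S)>0$, then some summand of the identity~\eqref{eq:defect} is positive, which gives a word $u$ with $|u|=t$ and $|Su^{-1}|>|S|$, hence $\mathrm{minext}(S)\le t$. Your side remarks are accurate but not needed: the sum has $k^{t}\ge1$ terms, and $S\in\{\emptyset,Q\}$ makes the statement vacuous.
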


\begin{proof}
If $\sigma_t(S)>0$, some summand of~\eqref{eq:defect} is positive.
\end{proof}

Corollary~\ref{cor:strict} assumes nothing beyond completeness: no synchronization, no
connectivity, no condition on the degrees or on the alphabet.  
Identity~\eqref{eq:defect}
thus splits the extension hypothesis into a strict part, some
$\sigma_t(S)>0$, and a boundary part, every $\sigma_t(S)=0$; we call these
the strict and boundary halves of the method.

Since the test at length $t$ certifies extension within $t$, and the
tests at different lengths may be aggregated with any positive weights
without losing the conclusion, the corollary is graded by the length $t$.    
The choice below admits a closed form.

\begin{definition*}
For $T\ge1$ put
$B_T(S)=\sum_{t=1}^{T}k^{\,T-t}\sigma_t(S)$, and write
$\beta^{\ast}_q=B_{n-1}(\{q\})$, that is
\[\beta^{\ast}_q=\sum_{t=1}^{n-1}k^{\,n-1-t}\bigl(\mathrm{indeg}_t(q)-k^{t}\bigr),
\qquad B(S):=B_{n-1}(S)=\sum_{q\in S}\beta^{\ast}_q.\]
\end{definition*}

The functional $B_T$ is linear with integer values, and $B_T(Q)=0$ because every term
contributes $k^{\,T-t}(k^{t}n-k^{t}n)=0$.  If $B_T(S)>0$ then, the weights being strictly positive, some
$\sigma_t(S)>0$ with $t\le T$, whence
Corollary~\ref{cor:strict} gives $\mathrm{minext}(S)\le T$.  Each member $B_T$ of the family
certifies extension within its own length $T$.

\begin{lemma}
\label{lem:closed}
For every $T\ge1$,
\begin{equation}
\label{eq:closed}
B_T^{\mathsf T}\;=\;\sum_{j=0}^{T-1}(T-j)\,k^{\,T-1-j}\,w^{\mathsf T}M^{\,j},
\end{equation}
where $B_T^{\mathsf T}$ denotes the row vector with entries $B_T(\{q\})$.  In
particular $\beta^{\ast\mathsf T}=\sum_{j=0}^{n-2}(n-1-j)k^{\,n-2-j}w^{\mathsf T}M^{\,j}$.
\end{lemma}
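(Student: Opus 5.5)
The plan is to first express each defect row vector in terms of $w$ and powers of $M$, and then exchange the order of the resulting double sum. Write $s_t^{\mathsf T}$ for the row vector with entries $\sigma_t(\{q\})=\mathrm{indeg}_t(q)-k^{t}$, so that $s_t^{\mathsf T}=\mathbf1^{\mathsf T}M^{t}-k^{t}\mathbf1^{\mathsf T}$. The first step is the telescoping identity
\[
s_t^{\mathsf T}=\sum_{j=0}^{t-1}k^{\,t-1-j}\,w^{\mathsf T}M^{\,j},
\]
which I would prove by induction on $t$ using only~\eqref{eq:identity}. For $t=1$ it is exactly $w^{\mathsf T}=\mathbf1^{\mathsf T}M-k\mathbf1^{\mathsf T}$. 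For the inductive step, multiply $\mathbf1^{\mathsf T}M^{t}=k^{t}\mathbf1^{\mathsf T}+s_t^{\mathsf T}$ on the right by $M$ and substitute $\mathbf1^{\mathsf T}M=k\mathbf1^{\mathsf T}+w^{\mathsf T}$. This gives the recursion $s_{t+1}^{\mathsf T}=k^{t}w^{\mathsf T}+s_t^{\mathsf T}M$. Unrolling the recursion yields the displayed sum, since the term $k^{t}w^{\mathsf T}$ is the $j=0$ contribution at level $t+1$ and the shifted sum $s_t^{\mathsf T}M$ supplies the terms $j=1,\dots,t$.

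The second step is to substitute this into $B_T^{\mathsf T}=\sum_{t=1}^{T}k^{\,T-t}s_t^{\mathsf T}$. This gives
\[
B_T^{\mathsf T}=\sum_{t=1}^{T}\sum_{j=0}^{t-1}k^{\,T-t}k^{\,t-1-j}\,w^{\mathsf T}M^{\,j}
=\sum_{j=0}^{T-1}\Bigl(\sum_{t=j+1}^{T}1\Bigr)k^{\,T-1-j}\,w^{\mathsf T}M^{\,j}.
\]
The key observation is that the power of $k$ collapses to $k^{\,T-1-j}$, independently of $t$. The inner count is therefore just the number of $t$ with $j+1\le t\le T$, which is $T-j$, and this gives~\eqref{eq:closed}. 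The special case follows by setting $T=n-1$.

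The only point needing care is the first step, specifically the right order of multiplication: $w^{\mathsf T}$ sits on the left of $M^{j}$ because we propagate row vectors. There is no real obstacle, and everything follows from~\eqref{eq:identity} and the definitions.
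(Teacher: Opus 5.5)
Your proof is correct and follows essentially the same route as the paper. You derive $\mathbf1^{\mathsf T}M^{t}=k^{t}\mathbf1^{\mathsf T}+\sum_{j<t}k^{\,t-1-j}w^{\mathsf T}M^{\,j}$ by induction, multiplying on the right by $M$ and substituting~\eqref{eq:identity}, then sum with weights $k^{\,T-t}$ and exchange the order of summation to get the coefficient $(T-j)k^{\,T-1-j}$.
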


\begin{proof}
Iterating~\eqref{eq:identity},
$\mathbf1^{\mathsf T}M^{t}=k^{t}\mathbf1^{\mathsf T}+\sum_{j<t}k^{\,t-1-j}w^{\mathsf T}M^{\,j}$
(induction on $t$: multiply on the right by $M$ and substitute~\eqref{eq:identity}).  Hence
$\mathrm{indeg}_t-k^{t}\mathbf1^{\mathsf T}=\sum_{j<t}k^{\,t-1-j}w^{\mathsf T}M^{\,j}$
as row vectors, and summing over $t=1,\dots,T$ with weight $k^{\,T-t}$
collects the coefficient of $w^{\mathsf T}M^{\,j}$ as
$\sum_{t=j+1}^{T}k^{\,T-t}k^{\,t-1-j}=(T-j)k^{\,T-1-j}$.
\end{proof}

The definition of $\beta^{\ast}$ ranges
over $k+k^{2}+\dots+k^{n-1}$ words; \eqref{eq:closed} evaluates it with $n-2$ products
$v\mapsto vM$, each $O(kn)$ integer additions, so $\beta^{\ast}$ costs
$O(kn^{2})$ additions on numbers of $O(n\log k+\log n)$ bits, and $B_T$ costs
$O(kTn)$.  Evaluating $B(S)$ then costs $O(|S|)$, and $B$ on all $2^{n}$ subsets costs
$O(2^{n})$ by subset-sum recursion.  The closed form makes a certificate whose definition ranges over
exponentially many words computable in polynomial time.

\begin{example}
\label{ex:cerny}
Take the \v{C}ern\'y automaton $C_3$ on $Q=\{0,1,2\}$ (Figure~\ref{fig:cerny}):
the letter $a$
acts as the cycle $0\to1\to2\to0$ and the letter $b$ sends $0$ to $1$ and
fixes $1$ and $2$.  The in-degrees are $(1,3,2)$, so $w=(-1,1,0)$, and one
product with $M$ gives $w^{\mathsf T}M=(0,-1,1)$; by Lemma~\ref{lem:closed},
$\beta^{\ast\mathsf T}=2k\,w^{\mathsf T}+w^{\mathsf T}M=(-4,3,1)$, which sums
to $0$ as it must.  The subset $S=\{1,2\}$ has $B(S)=4>0$; since
$\sigma_1(S)=w_1+w_2=1>0$, some letter extends it: $b$ does, with
$|Sb^{-1}|=3$.  The singleton $\{0\}$ has $\beta^{\ast}_0=-4<0$ and is
stuck.  No word of length at most $2$ extends it, as the six nonempty words
of length at most $2$ confirm by hand.  The subset $\{0,2\}$ has $B=-3<0$,
so Theorem~\ref{thm:main} does not apply to it, though it extends within
$2$, by the word $ba$.  So $\{B<0\}$ is nonempty on $C_3$.
Figure~\ref{fig:cube} shows all eight subsets at once.
\end{example}

\begin{figure}[t]
\centering
\begin{tikzpicture}[>=Stealth,shorten >=1pt,
  state/.style={circle,draw,minimum size=7mm,inner sep=0pt}]
  \node[state] (q0) at (0,0) {$0$};
  \node[state] (q1) at (3,0) {$1$};
  \node[state] (q2) at (1.5,2.2) {$2$};
  \node[below=1.5mm of q0] {$\beta^{\ast}_0=-4$};
  \node[below=1.5mm of q1] {$\beta^{\ast}_1=3$};
  \node[right=1.5mm of q2] {$\beta^{\ast}_2=1$};
  \draw[->] (q0) -- node[below] {$a$} (q1);
  \draw[->] (q1) to[bend right=15] node[right,pos=.45] {$a$} (q2);
  \draw[->] (q2) to[bend right=15] node[left,pos=.55] {$a$} (q0);
  \draw[->,dashed] (q0) to[bend left=30] node[above] {$b$} (q1);
  \draw[->,dashed] (q1) to[out=-35,in=35,looseness=6] node[right] {$b$} (q1);
  \draw[->,dashed] (q2) to[out=145,in=75,looseness=6] node[above left] {$b$} (q2);
\end{tikzpicture}
\caption{The \v{C}ern\'y automaton $C_3$ of Example~\ref{ex:cerny}, with
$\beta^{\ast}$ attached to the states.  Solid arrows are the letter $a$,
dashed arrows the letter $b$.}
\label{fig:cerny}
\end{figure}

\begin{figure}[t]
\centering
\begin{tikzpicture}[scale=1.15,
  cert/.style={circle,draw,fill=black!20,minimum size=3.2mm,inner sep=0pt},
  neg/.style={circle,draw,fill=white,minimum size=3.2mm,inner sep=0pt},
  negstuck/.style={circle,draw,double,fill=white,minimum size=3.2mm,inner sep=0pt},
  triv/.style={rectangle,draw,fill=white,minimum size=2.8mm,inner sep=0pt},
  lab/.style={font=\scriptsize}]
  \fill[black!12] (0,0) -- (1.65,2.2) -- (3.35,3.15) -- (1.70,0.95) -- cycle;
  \draw[black!45] (0,0) -- (2.2,0) -- (2.2,2.2) -- (0,2.2) -- cycle;
  \draw[black!45] (1.15,0.95) -- (3.35,0.95) -- (3.35,3.15) -- (1.15,3.15) -- cycle;
  \draw[black!45] (0,0) -- (1.15,0.95);
  \draw[black!45] (2.2,0) -- (3.35,0.95);
  \draw[black!45] (0,2.2) -- (1.15,3.15);
  \draw[black!45] (2.2,2.2) -- (3.35,3.15);
  \node[triv,label={[lab]below left:{$\emptyset\colon0$}}] at (0,0) {};
  \node[negstuck,label={[lab]below right:{$\{0\}\colon-4$}}] at (2.2,0) {};
  \node[cert,label={[lab]left:{$\{1\}\colon3$}}] at (0,2.2) {};
  \node[cert,label={[lab]left:{$\{2\}\colon1$}}] at (1.15,0.95) {};
  \node[negstuck,label={[lab]right:{$\{0,1\}\colon-1$}}] at (2.2,2.2) {};
  \node[neg,label={[lab]right:{$\{0,2\}\colon-3$}}] at (3.35,0.95) {};
  \node[cert,label={[lab]left:{$\{1,2\}\colon4$}}] at (1.15,3.15) {};
  \node[triv,label={[lab]right:{$Q\colon0$}}] at (3.35,3.15) {};
\end{tikzpicture}
\caption{The eight subsets of $C_3$ as the cube $\{0,1\}^{Q}$, membership
of $0$, $1$ and $2$ along the three axes, with the value of $B$ at every
vertex and the plane $B=0$ shaded.  Filled circles are certified subsets
($B>0$), open circles have $B<0$, and the two squares
are the trivial subsets, which lie on the plane.  Antipodal vertices carry
opposite values, so every complementary pair of proper subsets has at least
one member with $B\ge0$ (Corollary~\ref{cor:antisym}); on this automaton no
proper subset lies on the plane, so here it is exactly one.  The two stuck
subsets, $\{0\}$ and $\{0,1\}$ (doubled border), both have $B<0$, the
larger of them $B=-1$.}
\label{fig:cube}
\end{figure}

\section{The main theorem}
\label{sec:main-theorem}

Corollary~\ref{cor:strict} says
nothing about a subset all of whose defects vanish.  Under
synchronization, Theorem~\ref{thm:main} closes that boundary case at
length $n-1$.

\begin{lemma}
\label{lem:stabilise}
Let $\pi(u)$, $u\in\Sigma^{*}$, be linear maps of a vector space over a
field with $\pi(uv)=\pi(u)\pi(v)$ and the identity at the empty word;
let $H$ be a subspace of dimension $h$ and $W_0\subseteq H$ a subspace
of dimension $m\ge1$, and put
$W_{t+1}=W_t+\sum_{x\in\Sigma}\pi(x)W_t$, so that $W_t$ is spanned by
the vectors $\pi(u)w$ with $w\in W_0$ and $|u|\le t$.  If
$W_{h-m+1}\subseteq H$ then $W_t=W_{h-m}$ for every $t\ge h-m$; in
particular $\pi(u)W_0\subseteq H$ for every word $u$.
\end{lemma}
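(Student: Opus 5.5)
The plan is a dimension count on the increasing chain $W_0\subseteq W_1\subseteq W_2\subseteq\cdots$, followed by a stabilisation argument.

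First observe that the chain is monotone, since $W_t\subseteq W_{t+1}$ by definition. Next, the recursion $W_{t+1}=W_t+\sum_x\pi(x)W_t$ depends only on $W_t$. So as soon as $W_{s+1}=W_s$ for some $s$, induction gives $W_t=W_s$ for every $t\ge s$.

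The key step is to locate such an $s\le h-m$. By hypothesis $W_{h-m+1}\subseteq H$, so every $W_t$ with $t\le h-m+1$ lies in $H$, and its dimension is at most $h$. Consider the $h-m+1$ inclusions
\[
W_0\subseteq W_1\subseteq\cdots\subseteq W_{h-m+1}.
\]
If all of them were strict, then $\dim W_{h-m+1}\ge m+(h-m+1)=h+1>h$, which is a contradiction. Hence some index $s\le h-m$ has $W_{s+1}=W_s$. By the stabilisation remark, $W_t=W_s$ for all $t\ge s$; in particular $W_{h-m}=W_s$, so $W_t=W_{h-m}$ for every $t\ge h-m$.

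For the final clause, take any word $u$ and any $w\in W_0$. The vector $\pi(u)w$ lies in $W_{|u|}$. If $|u|\le h-m$ this space is contained in $W_{h-m}$, and otherwise it equals $W_{h-m}$ by the previous step. In both cases $\pi(u)w\in W_{h-m}\subseteq W_{h-m+1}\subseteq H$.

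The spanning description of $W_t$ stated in the lemma follows by a routine induction using $\pi(xu)=\pi(x)\pi(u)$. I expect no real obstacle here. The only delicate point is the off-by-one bookkeeping: the hypothesis must be used at index $h-m+1$ and not $h-m$, so that the chain contains $h-m+1$ inclusions, which is exactly enough to force an equality.
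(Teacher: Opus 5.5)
Your proof is correct and follows the paper's own argument. Both use the same count of $h-m+1$ inclusions inside $H$ to force $W_s=W_{s+1}$ for some $s\le h-m$, the same observation that the recursion is then frozen from $s$ on, and the same final chain $\pi(u)W_0\subseteq W_{|u|}\subseteq W_{h-m}\subseteq H$.
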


\begin{proof}
The chain $W_0\subseteq W_1\subseteq\dots\subseteq W_{h-m+1}\subseteq H$
has $h-m+1$ inclusions and dimensions between $m$ and $h$, so
$W_t=W_{t+1}$ for some $t\le h-m$.  Then $\pi(x)W_t\subseteq W_{t+1}=W_t$
for every letter $x$, so $W_{t'}=W_t$ for every $t'\ge t$, and
$\pi(u)W_0\subseteq W_{|u|}\subseteq W_t\subseteq H$ for every word $u$.
\end{proof}

\begin{theorem}
\label{thm:main}
Let $\mathcal A$ be a synchronizing complete automaton on $n$
states.  Then every proper nonempty $S\subseteq Q$ with $B(S)\ge0$ satisfies
$\mathrm{minext}(S)\le n-1$.
\end{theorem}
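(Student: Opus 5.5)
The plan is to argue by contradiction: suppose $S$ is a proper nonempty subset with $B(S)\ge0$ but $\mathrm{minext}(S)>n-1$. The first step is to show that this forces every defect $\sigma_t(S)$ to vanish for $t\le n-1$. The second is to feed that into Lemma~\ref{lem:stabilise} to propagate the non-extension property to all words. Synchronization is used only at the very end.

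\emph{Step 1 (the strict half).} Since no word of length at most $n-1$ extends $S$, every summand $|Su^{-1}|-|S|$ in~\eqref{eq:defect} is $\le0$ for $1\le t\le n-1$. Hence $\sigma_t(S)\le0$ for all such $t$. Then $B(S)=\sum_{t=1}^{n-1}k^{\,n-1-t}\sigma_t(S)$ is a strictly positive combination of nonpositive integers, so $B(S)\ge0$ forces $\sigma_t(S)=0$ for every $t\le n-1$. Since each summand in~\eqref{eq:defect} is nonpositive and they sum to zero, every summand vanishes. Thus $|Su^{-1}|=|S|$ for every word $u$ with $|u|\le n-1$; this also holds trivially for the empty word.

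\emph{Step 2 (the boundary half, via Lemma~\ref{lem:stabilise}).} Work over $\mathbb Q^{Q}$ with column vectors, letting $\pi(u)$ act on the left. Then $(\pi(u)[S]^{\mathsf T})_q=[\,q\cdot u\in S\,]$, so $\pi(u)[S]^{\mathsf T}=[Su^{-1}]^{\mathsf T}$, and $\pi(uv)=\pi(u)\pi(v)$. Put $z=[S]^{\mathsf T}-\tfrac{|S|}{n}\mathbf1$, and let $H=\{v:\mathbf1^{\mathsf T}v=0\}$, so that $h=n-1$, and $W_0=\mathrm{span}(z)$, so that $m=1$. Here $z\ne0$ because $S$ is proper and nonempty, and $z\in H$. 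Since $\pi(u)\mathbf1=\mathbf1$ by completeness, we get $\mathbf1^{\mathsf T}\pi(u)z=|Su^{-1}|-|S|$. By Step~1 this is zero for $|u|\le n-1=h-m+1$, so $W_{h-m+1}\subseteq H$. Lemma~\ref{lem:stabilise} then gives $\pi(u)z\in H$ for every word $u$, that is, $|Su^{-1}|=|S|$ for all $u$.

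\emph{Step 3 (synchronization).} Take a reset word $r$ with $Q\cdot r=\{q_0\}$. Then $Sr^{-1}$ is $Q$ if $q_0\in S$ and $\emptyset$ otherwise. Its size is therefore $n$ or $0$, neither of which equals $|S|$ for a proper nonempty $S$. This contradicts Step~2, so $\mathrm{minext}(S)\le n-1$.

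The only delicate point is Step~2: the right choice of $H$ and $W_0$. The affine condition $|Su^{-1}|=|S|$ must become a linear one. Subtracting the multiple of $\mathbf1$ achieves this, and it works because $\mathbf1$ is fixed by every $\pi(u)$. This choice also gives exactly $h-m+1=n-1$, which matches the length bound in the theorem.
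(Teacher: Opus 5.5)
Your proof is correct and follows the paper's own argument essentially step for step. The sign argument forces $|Su^{-1}|=|S|$ for all $|u|\le n-1$. Lemma~\ref{lem:stabilise} is then applied to $W_0=\mathrm{span}\{[S]^{\mathsf T}-\tfrac{|S|}{n}\mathbf1\}$ inside the zero-sum hyperplane, with $m=1$ and $h=n-1$, and a reset word gives the contradiction. The only cosmetic difference is that you first derive $\sigma_t(S)=0$ for each $t$, whereas the paper reads off the vanishing of every summand of the double sum for $B(S)$ directly.
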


Synchronization is the only hypothesis; nothing is assumed about the
in-degree profile, connectivity, or the alphabet size.

\begin{proof}
Suppose $\mathrm{minext}(S)>n-1$.  By~\eqref{eq:defect}, every summand of
$B(S)=\sum_{t=1}^{n-1}k^{\,n-1-t}\sum_{|u|=t}(|Su^{-1}|-|S|)$ is $\le0$,
while the total is $\ge0$ by hypothesis, so every summand vanishes:
\begin{equation}
\label{eq:flat}
|Su^{-1}|=|S|\qquad\text{for all }|u|\le n-1.
\end{equation}
Put $\gamma=[S]^{\mathsf T}-\tfrac{|S|}{n}\mathbf1$, nonzero because
$0<|S|<n$; since $\pi(u)[S]^{\mathsf T}=[Su^{-1}]^{\mathsf T}$ and
$\pi(u)\mathbf1=\mathbf1$, $\mathbf1^{\mathsf T}\pi(u)\gamma=|Su^{-1}|-|S|$
for every word $u$.  So~\eqref{eq:flat} says that $\pi(u)\gamma$ lies in
the hyperplane $H$ of vectors with zero coordinate sum for every $|u|\le n-1$,
and Lemma~\ref{lem:stabilise} with $W_0=\mathrm{span}\{\gamma\}$, $m=1$
and $h=n-1$ gives $\pi(u)\gamma\in H$, that is $|Su^{-1}|=|S|$, for every
word $u$.  A reset word $r$, with $Q\cdot r=\{q_0\}$, has
$Sr^{-1}\in\{\emptyset,Q\}$, so $|Sr^{-1}|\in\{0,n\}$ differs from
$|S|$, against the previous sentence.
\end{proof}

The proof has two sources.

\begin{enumerate}
\item The chain argument, Lemma~\ref{lem:stabilise}, is Kari's Lemma 3
   \cite[\S4]{ka} in the form Steinberg gave it, with an arbitrary
   starting dimension over an arbitrary field \cite[Lemma~6]{st}; the
   proof applies it at $\dim W_0=1$ and $\dim H=n-1$, which is Kari's
   case, and that choice is why the length bound is $n-1$.
\item The
   classical argument averages $|Su^{-1}|-|S|$ over the words of length $t$
   on a balanced automaton, where the average vanishes; here the same
   average is taken on an arbitrary automaton and its value, the defect, is
   kept as the functional $B$.  The hypothesis becomes a per-subset
   inequality, $B(S)\ge0$; the boundary case $B(S)=0$ is included, which
   Section~\ref{sec:hypothesis} locates.
\end{enumerate}

\subsection{Where synchronization is used}
\label{sec:hypothesis}

Synchronization is used only after~\eqref{eq:flat}, hence only on subsets all
of whose defects vanish.  That set is contained in $\{B=0\}$ and is cut out
by the $n-1$ conditions $\sigma_t(S)=0$.  It is a subspace intersected with
the cube, which on 55.3 to 93.3 percent of
the non-Eulerian automata measured contains no proper nonempty subset.
Section~\ref{sec:qcert} narrows that locus further within $\{B\ge0\}$.

\begin{theorem}
\label{thm:krylov}
For every complete automaton,
\begin{multline*}
\{S:\ B(S)\ge0\}\ \cap\ \{S:\ \sigma_t(S)\le0\ \text{for all }t\le n-1\}\\
\ =\ \{S:\ \sigma_t(S)=0\ \text{for all }t\le n-1\},
\end{multline*}
and the right-hand side equals $\mathcal K^{\perp}\cap\{0,1\}^{Q}$ for the
Krylov space
$\mathcal K=\mathrm{span}\{w^{\mathsf T}M^{\,j}:\ 0\le j\le n-2\}$.  Since
$w^{\mathsf T}M^{\,j}\mathbf1=0$ for every $j$, $\mathcal K$ lies in the
hyperplane orthogonal to $\mathbf1$ and $\dim\mathcal K\le n-1$; when
$\dim\mathcal K=n-1$, $\mathcal K^{\perp}=\mathrm{span}(\mathbf1)$ contains
no proper nonempty $0/1$ vector.
\end{theorem}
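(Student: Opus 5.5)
The proof has three parts: the set identity, the Krylov description, and the closing facts about $\mathcal K$.

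\emph{Set identity.} The inclusion $\supseteq$ is immediate. If every $\sigma_t(S)=0$, then $B(S)=\sum_{t}k^{\,n-1-t}\sigma_t(S)=0\ge0$, and each $\sigma_t(S)\le0$ holds trivially. For $\subseteq$, I will reuse the squeeze from the first paragraph of the proof of Theorem~\ref{thm:main}. Suppose $\sigma_t(S)\le0$ for all $t\le n-1$. Then $B(S)$ is a sum of nonpositive terms $k^{\,n-1-t}\sigma_t(S)$ with strictly positive weights. The hypothesis $B(S)\ge0$ then forces every term to vanish, so $\sigma_t(S)=0$ for every $t$. No synchronization is needed, which matches the phrase ``for every complete automaton''.

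\emph{Krylov description.} I will identify the right-hand side with $\mathcal K^{\perp}\cap\{0,1\}^{Q}$ using the row-vector expansion from the proof of Lemma~\ref{lem:closed}:
\[
\mathrm{indeg}_t-k^{t}\mathbf1^{\mathsf T}=\sum_{j=0}^{t-1}k^{\,t-1-j}\,w^{\mathsf T}M^{\,j}.
\]
Pairing with $[S]^{\mathsf T}$ gives
\[
\sigma_t(S)=\sum_{j<t}k^{\,t-1-j}\,w^{\mathsf T}M^{\,j}[S]^{\mathsf T}.
\]
This is a unitriangular system relating $(\sigma_1,\dots,\sigma_{n-1})$ to the pairings $c_j=w^{\mathsf T}M^{\,j}[S]^{\mathsf T}$ for $j=0,\dots,n-2$: the coefficient of $c_{t-1}$ in $\sigma_t$ is $k^0=1$. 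By induction on $t$, the conditions $\sigma_1=\dots=\sigma_{n-1}=0$ are therefore equivalent to $c_0=\dots=c_{n-2}=0$. Explicitly, $c_{t-1}=\sigma_t-k\sigma_{t-1}$, since the expansion satisfies $\sigma_t=k\sigma_{t-1}+c_{t-1}$. This is exactly $[S]\in\mathcal K^{\perp}$.

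\emph{Facts about $\mathcal K$.} Complete determinism gives $M\mathbf1=k\mathbf1$, hence $w^{\mathsf T}M^{\,j}\mathbf1=k^{j}\,w^{\mathsf T}\mathbf1=0$, because the entries of $w$ sum to zero. So $\mathcal K\subseteq\mathbf1^{\perp}$ and $\dim\mathcal K\le n-1$. If $\dim\mathcal K=n-1$, then $\mathcal K=\mathbf1^{\perp}$ and $\mathcal K^{\perp}=\mathrm{span}(\mathbf1)$. The only $0/1$ vectors in $\mathrm{span}(\mathbf1)$ are $0$ and $\mathbf1$, that is $\emptyset$ and $Q$, so no proper nonempty subset lies in $\mathcal K^{\perp}$.

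Every step is a short computation. The only point needing care is the triangular inversion in the second step, and the recurrence $c_{t-1}=\sigma_t-k\sigma_{t-1}$ handles it directly.
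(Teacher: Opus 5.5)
Your proposal is correct and takes the same approach as the paper: the positive-weight squeeze gives the set identity, and the unitriangular change of coordinates $\sigma_t=\sum_{j<t}k^{\,t-1-j}w^{\mathsf T}M^{\,j}[S]^{\mathsf T}$, from the proof of Lemma~\ref{lem:closed}, gives the Krylov description. Your explicit inversion $c_{t-1}=\sigma_t-k\sigma_{t-1}$ and your check of the closing facts about $\mathcal K$, which the paper folds into the statement, are both sound.
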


\begin{proof}
If $B(S)\ge0$ and every $\sigma_t(S)\le0$ then
$B=\sum_tk^{\,n-1-t}\sigma_t\le0$ with strictly positive weights, so $B=0$
and every $\sigma_t=0$; the converse is immediate.  For the description,
the display in the proof of Lemma~\ref{lem:closed} gives
$\sigma_t=\sum_{j<t}k^{\,t-1-j}\rho_j$ with
$\rho_j=w^{\mathsf T}M^{\,j}[S]^{\mathsf T}$, a unitriangular change of
coordinates, so $\sigma_1=\dots=\sigma_{n-1}=0$ if and only if
$\rho_0=\dots=\rho_{n-2}=0$.
\end{proof}

Hence Theorem~\ref{thm:main} reads:  
On a subset outside
$\mathcal K^{\perp}$, either some $\sigma_t>0$, so that Corollary~\ref{cor:strict} applies,
or $B(S)<0$, where Theorem~\ref{thm:main} does not apply.  Synchronization is used
only on $\mathcal K^{\perp}\cap\{0,1\}^{Q}$.

We have empirically explored how often that set contains no proper nonempty subset.  
Theorem~\ref{thm:krylov} assumes
neither synchronization nor strong connectivity.  Over the non-Eulerian
automata of the exhaustive synchronizing strongly connected binary
populations of \ref{sec:verification} at $n=3$ to $6$ and of the
$\{d\le2\}$ strata at $n=5$ to $7$ (Table~A.1 in \ref{app:void-share}),
the share of automata with no proper nonempty subset in
$\mathcal K^{\perp}\cap\{0,1\}^{Q}$ runs from 55.3 percent ($n=6$,
$\{d\le2\}$) to 93.3 percent ($n=5$, all deviations) and is not monotone
in $n$.

\section{Antisymmetry and the Eulerian fibre}
\label{sec:antisym-fibre}

\subsection{Antisymmetry}
\label{sec:antisymmetry}

Since $B(Q)=0$, $B(Q\setminus S)=-B(S)$ at every $n$, every $k$ and every
deviation.  Hence:

\begin{corollary}
\label{cor:antisym}
On every synchronizing complete automaton, for every proper
nonempty $S$, at least one of $S$ and $Q\setminus S$ has
$\mathrm{minext}\le n-1$.  Consequently
\[\#\{S\ \text{proper nonempty}:\ \mathrm{minext}(S)>n-1\}\ \le\ 2^{\,n-1}-1.\]
\end{corollary}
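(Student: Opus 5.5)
The plan is to combine the antisymmetry of $B$ with Theorem~\ref{thm:main}, and then to count complementary pairs.

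First, establish antisymmetry. By linearity $B(S)+B(Q\setminus S)=B(Q)$. We already know $B(Q)=0$, since every term of $B_{n-1}(Q)$ is $k^{\,n-1-t}(k^{t}n-k^{t}n)=0$. Hence $B(Q\setminus S)=-B(S)$.

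Now fix a proper nonempty $S$. Its complement $Q\setminus S$ is also proper and nonempty. One of the two numbers $B(S)$ and $-B(S)$ is $\ge0$, so one of $S$ and $Q\setminus S$ satisfies the hypothesis of Theorem~\ref{thm:main}. That theorem applies because the automaton is synchronizing, and it gives $\mathrm{minext}\le n-1$ for that member. This proves the first assertion.

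For the count, partition the $2^{n}-2$ proper nonempty subsets into the $2^{n-1}-1$ unordered pairs $\{S,Q\setminus S\}$. These pairs are genuine pairs, because $S\ne Q\setminus S$ whenever $S$ is proper and nonempty. By the first assertion each pair contains at most one stuck subset, so there are at most $2^{n-1}-1$ stuck subsets.

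No step is a real obstacle. The only points to check are that $B(Q)=0$ holds with no hypothesis beyond completeness, and that the complement of a proper nonempty set is again proper and nonempty, so that Theorem~\ref{thm:main} applies to it.
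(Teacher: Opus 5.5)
Your proof is correct and matches the paper's argument. In both, $B(Q)=0$ gives $B(Q\setminus S)=-B(S)$, which puts one member of each complementary pair in $\{B\ge0\}$ where Theorem~\ref{thm:main} applies; then the $2^{n}-2$ proper nonempty subsets are grouped into $2^{n-1}-1$ complementary pairs, each containing at most one stuck subset.
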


\begin{proof}
$B(S)+B(Q\setminus S)=0$, so one of the two is $\ge0$; apply
Theorem~\ref{thm:main}.  The $2^{n}-2$ proper nonempty subsets fall into $2^{\,n-1}-1$
complementary pairs, each contributing at most one.
\end{proof}

Corollary~\ref{cor:antisym} applies to every proper nonempty subset, with no condition on
$B$: of a complementary pair, at most one member is stuck.  The antisymmetry holds length
by length as well, since $\mathrm{indeg}_t(Q)=k^{t}n$ gives
$\sigma_t(Q\setminus S)=-\sigma_t(S)$ for every $t$; thus Corollary~\ref{cor:antisym}
holds verbatim with the larger region $C^{\sharp}$ defined in Section~\ref{sec:csharp}
in place of $\{B\ge0\}$.  The bound $2^{\,n-1}-1$ does not drop under that
sharpening, because it comes from pairing the proper subsets into
complementary pairs and taking one from each.

\subsection{The Eulerian fibre}
\label{sec:zero-fibre}

\begin{theorem}
\label{thm:fibre}
For every complete automaton, $\beta^{\ast}=0$ if and only if
$w=0$, that is, if and only if $\mathcal A$ is Eulerian.
\end{theorem}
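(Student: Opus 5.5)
The direction $w=0\Rightarrow\beta^{\ast}=0$ is immediate from Lemma~\ref{lem:closed}: every term of $\beta^{\ast\mathsf T}=\sum_{j=0}^{n-2}(n-1-j)k^{\,n-2-j}w^{\mathsf T}M^{\,j}$ carries the factor $w^{\mathsf T}$ on the left. So the work is the converse. I will show that $\beta^{\ast}=0$ forces $w=0$.

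The plan is to write $\beta^{\ast\mathsf T}=w^{\mathsf T}P(M)$ with
\[P(x)=\sum_{j=0}^{n-2}(n-1-j)k^{\,n-2-j}x^{j},\]
and to prove that $P(M)$ is invertible. Then $w^{\mathsf T}P(M)=0$ gives $w=0$. Invertibility of $P(M)$ holds if and only if no eigenvalue $\lambda$ of $M$ is a root of $P$. Since $M$ is a nonnegative matrix with row sums $k$ (from $M\mathbf1=k\mathbf1$), every eigenvalue satisfies $|\lambda|\le k$. It therefore suffices to show that $P$ has no roots in the closed disc $|x|\le k$.

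I substitute $x=kz$. Then $P(kz)=k^{\,n-2}\sum_{j=0}^{n-2}(n-1-j)z^{j}$, so the task is to show that
\[R(z)=\sum_{j=0}^{N}(N+1-j)z^{j},\qquad N=n-2,\]
has no zeros in $|z|\le1$. Its coefficients $N+1>N>\dots>1>0$ are strictly decreasing and positive. By the Eneström--Kakeya theorem, in the form stating that a polynomial with coefficients $a_0\ge a_1\ge\dots\ge a_N>0$ has all zeros in $|z|\ge1$, the zeros satisfy $|z|\ge1$.

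The boundary $|z|=1$ is the only delicate step, and I would handle it directly. The first difference gives
\[(1-z)R(z)=(N+1)-\sum_{j=1}^{N+1}z^{j}.\]
On $|z|=1$ with $z\ne1$, the sum $\sum_{j=1}^{N+1}z^{j}$ has modulus at most $N+1$, with equality only when all $z^{j}$ are equal to a common unit complex number. That forces $z=1$, so for $z\ne1$ we get $(1-z)R(z)\ne0$. At $z=1$ we have $R(1)=\sum_j(N+1-j)>0$. Alternatively, the same telescoping argument applied for $|z|<1$ gives $|\sum z^{j}|<N+1$ and so settles the whole closed disc without citing Eneström--Kakeya.

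Hence $P(\lambda)\ne0$ for every eigenvalue $\lambda$ of $M$, so $P(M)$ is invertible and $\beta^{\ast}=0$ implies $w=0$. The case $n=2$ is trivial, since then $P$ is the nonzero constant $1$ and $\beta^{\ast}=w$.

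The same argument should work for any positive weights $c_t$, as needed for Theorem~\ref{thm:weights}. The resulting polynomial has coefficients $\sum_{t>j}c_tk^{t-1-j}$, and after scaling these are again positive and nonincreasing in $j$ in the appropriate sense. That generalization is not needed here.
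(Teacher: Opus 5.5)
Your proof is correct and is essentially the paper's. The paper obtains Theorem~\ref{thm:fibre} as the case $c_t=k^{\,n-1-t}$ of Theorem~\ref{thm:weights}, whose proof writes $\beta_c^{\mathsf T}=w^{\mathsf T}q_c(M)$, shows by the same first-difference (Enestr\"om--Kakeya) estimate on $(1-\zeta)G(\zeta)$ that $q_c$ has no zeros in $|z|\le k$, and concludes from $|\lambda|\le k$ that $q_c(M)$ is invertible; at these weights $g_j=(n-1-j)k^{\,n-2}$, so your $R(z)$ is exactly $G(z)/k^{\,n-2}$, and your identity $(1-z)R(z)=(N+1)-\sum_{j=1}^{N+1}z^{j}$ is the equal-difference special case of the paper's.
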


\begin{proof}
This is the case $c_t=k^{\,n-1-t}$ of Theorem~\ref{thm:weights} below,
where $\beta_c=\beta^{\ast}$.
\end{proof}

Call the set of Eulerian automata the Eulerian fibre; by
Theorem~\ref{thm:fibre} it is the zero set of $\beta^{\ast}$.  On an
Eulerian automaton $B\equiv0$, so on a \emph{synchronizing} Eulerian
automaton Theorem~\ref{thm:main} returns $\mathrm{minext}(S)\le n-1$ for
every proper nonempty subset, which is Kari's extension lemma
\cite[\S4, Lemma~4]{ka} in its Eulerian specialisation, for every alphabet
size and including singletons; and by Theorem~\ref{thm:fibre}, $B\equiv0$
holds only on the Eulerian automata.  Synchronization is not removable
here even though $B\equiv0$: a group automaton, one all of whose letters
act as permutations of $Q$, is Eulerian, and no proper nonempty subset of
it extends at any length (Section~\ref{sec:two-sided}).

\begin{remark}
\label{rem:deformation}
Theorems~\ref{thm:main} and~\ref{thm:fibre} together say that
Theorem~\ref{thm:main} replaces the Eulerian hypothesis of Kari's lemma
by the per-subset inequality $B(S)\ge0$, and that the inequality holds
identically exactly on the automata satisfying Kari's hypothesis.
\end{remark}

On every Eulerian fibre swept the constant $n-1$ is attained, at the
subset sizes coprime to $n$ and at no others.  Over all synchronizing
Eulerian binary automata at $n=4$ through $n=9$ (Table~A.2 gives the
sizes of the fibres), the maximum of $\mathrm{minext}$ over the subsets
of size $m$ is $n-1$ when $\gcd(m,n)=1$ and $n-2$ otherwise (Table~1).

\begin{center}
\renewcommand{\baselinestretch}{1}\footnotesize
\begin{tabular}{@{}c|cccccccc@{}}
\toprule
$n$ & $m=1$ & $2$ & $3$ & $4$ & $5$ & $6$ & $7$ & $8$ \\
\midrule
$4$ & $3$ & $\underline{2}$ & $3$ & & & & & \\
$5$ & $4$ & $4$ & $4$ & $4$ & & & & \\
$6$ & $5$ & $\underline{4}$ & $\underline{4}$ & $\underline{4}$ & $5$ & & & \\
$7$ & $6$ & $6$ & $6$ & $6$ & $6$ & $6$ & & \\
$8$ & $7$ & $\underline{6}$ & $7$ & $\underline{6}$ & $7$ & $\underline{6}$ & $7$ & \\
$9$ & $8$ & $8$ & $\underline{7}$ & $8$ & $8$ & $\underline{7}$ & $8$ & $8$ \\
\bottomrule
\end{tabular}
\end{center}

\noindent{\small\textbf{Table 1.}\ The maximum of $\mathrm{minext}$ over
the subsets of size $m$, over all synchronizing Eulerian binary automata
on $n$ states (\ref{sec:verification}).  The underlined entries are the
sizes with $\gcd(m,n)>1$, where the maximum is $n-2$; every other entry is
$n-1$.\par}
At $n=4$ and $n=6$ only the sizes $1$ and $n-1$ attain $n-1$, while at
$n=5$ and $n=7$ every size does, which reads as a split by the parity of
$n$; at $n=8$ the sizes $3$ and $5$ attain $7$, and at $n=9$ the sizes $3$
and $6$ stop at $7$.  So the constant $n-1$ of Theorem~\ref{thm:main} is
attained on the Eulerian fibre.  We can prove the attaining direction,
with an explicit family of automata.

\begin{theorem}
\label{thm:gcd}
For $n\ge3$ and $1\le k\le n-1$ let $\mathcal E(n,k)$ be the binary
automaton on $Q=\{0,\dots,n-1\}$ with $b(i)=i+1$ for $i<n-1$,
$b(n-1)=n-k$, and $a$ the identity except $a(n-k)=0$.  Then
$\mathcal E(n,k)$ is Eulerian and strongly connected, and synchronizing
if and only if $\gcd(k,n)=1$.  If $\gcd(k,n)=1$ and $k\ge3$,
exactly two proper nonempty subsets $S$ have $\mathrm{minext}(S)=n-1$;
they are complementary, and their sizes $m$ satisfy $mk\equiv\pm1\pmod n$.
Consequently, for every $n\ge4$ and every $m$ with $\gcd(m,n)=1$, some
synchronizing Eulerian binary automaton on $n$ states has a subset of
size $m$ with $\mathrm{minext}=n-1$.
\end{theorem}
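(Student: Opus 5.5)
The plan has five steps: prove the structural claims about $\mathcal E(n,k)$, then the synchronization criterion, then the count of subsets with $\mathrm{minext}=n-1$, and finally derive the ``consequently'' clause by choosing $k$ as a function of $m$. The structural claims are direct checks.

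\textbf{Eulerian and strongly connected.} Under $b$ the state $0$ has in-degree $0$, the state $n-k$ has in-degree $2$ (from $n-k-1$ and $n-1$), and every other state has in-degree $1$. Under $a$ the state $0$ has in-degree $2$ (from itself and $n-k$), $n-k$ has in-degree $0$, and every other state has in-degree $1$. So $\mathrm{indeg}\equiv2=k_{\Sigma}$, and $w=0$. Strong connectivity follows because the $b$-path $0\to1\to\dots\to n-1$ reaches every state from $0$, and $a$ returns $n-k$ to $0$.

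\textbf{Synchronization iff $\gcd(k,n)=1$.} For the ``only if'' direction, put $g=\gcd(k,n)$ and $\ell(i)=i\bmod g$. Then $\ell(b(i))=\ell(i)+1$ for all $i$, using $b(n-1)=n-k\equiv n\equiv 0 = (n-1)+1 \pmod g$. Also $\ell(a(i))=\ell(i)$, since $n-k\equiv0\pmod g$. Hence every word shifts $\ell$ by a constant, so $|Q\cdot u|\ge g$ and the automaton is not synchronizing when $g>1$.

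For the ``if'' direction I would use the standard structure of strongly connected Eulerian automata. The minimal images all have the same size $r$, and by Friedman's weight argument their preimages partition $Q$ into blocks of equal size on which each letter induces a permutation. Since $a$ merges $0$ and $n-k$, these two states lie in one block. Since $b$ maps $i\mapsto i+1$ on the tail, the blocks are the residue classes of some $\ell$ that is additive along $b$, i.e.\ $\ell(i)=i\bmod r$. This labelling is consistent only if $r\mid n-k$ and $r\mid k$. Hence $r\mid\gcd(k,n)=1$, so $r=1$. An alternative is to exhibit a pair-merging argument directly: show that every pair $\{p,q\}$ is mapped to $\{0,n-k\}$ by some word, using that $b$ acts on differences along the cycle as translation and the tail lets one adjust differences by $k$ and $n-k$, which generate $\mathbb Z_n$.

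\textbf{The count for $k\ge3$.} From the letter preimages,
\[
|Sa^{-1}|=|S|+[0\in S]-[n-k\in S],\qquad |Sb^{-1}|=|S|-[0\in S]+[n-k\in S].
\]
So a letter extends $S$ exactly when $S$ separates $0$ from $n-k$, and $\mathrm{minext}(S)-1$ is the length of a shortest flat walk $S\to Su^{-1}$ reaching a set that separates $0$ and $n-k$. On a flat walk, $0$ and $n-k$ are together in or together out of every visited set. There $a^{-1}$ fixes the set, and $b^{-1}$ acts as a rotation along the cycle-with-tail. Using the labelling $i\mapsto ik^{-1}\bmod n$, which turns the relevant moves into shifts by $\pm1$, the non-separating sets should correspond to subsets of $\mathbb Z_n$ evolving by a rotation. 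The first time the rotated set separates the two distinguished points then becomes an explicit function of the set. I would compute this function and show it reaches $n-2$ flat steps only for the two complementary ``interval'' sets of size $m$ with $mk\equiv\pm1\pmod n$. This shortest-path computation is the step I expect to be the main obstacle: one must get the relabelling right and check that $k\ge3$ excludes further extremal sets.

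\textbf{The consequence.} Given $n\ge4$ and $\gcd(m,n)=1$, choose $k\in\{3,\dots,n-1\}$ with $k\equiv\pm m^{-1}\pmod n$; such a $k$ is automatically coprime to $n$. This is possible unless $m^{-1}\equiv\pm1$ or $\pm2$ and the other sign also lands in $\{1,2\}$.
\begin{itemize}
\item If $m^{-1}\equiv\pm1$, take $k=n-1\ge3$.
\item If $m^{-1}\equiv\pm2$, then $n$ is odd, so $\gcd(n-2,n)=1$, and $k=n-2\ge3$ works once $n\ge5$.
\end{itemize}
For $n=4$, $m\in\{1,3\}$, which is the first case. The theorem then yields a subset of size $m$ with $\mathrm{minext}=n-1$.
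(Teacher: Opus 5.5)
The count for $\gcd(k,n)=1$ is the substance of the theorem, and your proposal announces it (``I would compute this function'') without carrying it out. The relabelling is also misdescribed. Your set-up does finish the count, in a few lines.

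If $T$ does not separate $0$ from $n-k$, then $Ta^{-1}=T$ and $Tb^{-1}=T-1$ in $\mathbb Z_n$: the missing $b$-preimage of $0$ is compensated exactly by the second $b$-preimage $n-1$ of $n-k$. So $b^{-1}$ is already the shift by $-1$ in the original labels. A letter changes $|T|$ only when $T$ separates, and then one of the two letters extends $T$. So along any extending word $u$ the first change of size occurs at a set $S-j$ with $j<|u|$, and a shrinking detour is never shorter. Hence
\[\mathrm{minext}(S)=1+\min\{j\ge0:\ [j\in S]\ne[j-k\in S]\},\]
with indices in $\mathbb Z_n$. In particular $\mathrm{minext}(S)\ge n-1$ if and only if $S$ is constant on every pair $\{j,j-k\}$ with $0\le j\le n-3$.

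It is these pairs, not the moves, that the coordinate $jk^{-1}$ makes adjacent. For $\gcd(k,n)=1$ they are all but two edges of a single $n$-cycle. So the proper nonempty solutions are the two complementary arcs left by deleting the edges at $j=n-2$ and $j=n-1$, of sizes $\equiv k^{-1}$ and $\equiv-k^{-1}\pmod n$. Each deleted edge joins the two arcs, so both arcs have $\mathrm{minext}$ exactly $n-1$.

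This route differs from the paper's, which uses the Eulerian averaging and then splits $S$ into its tail and cycle parts. Unlike the paper's route it never uses $k\ge3$, so your worry about further extremal sets at small $k$ is unfounded.

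The ``if'' direction of synchronization also has a genuine gap: the structure you call standard does not exist in general. On $\{0,1,2,3\}$ let $x=[0,0,2,2]$ and $y=[1,3,3,1]$. Every state has in-degree $2$ and the automaton is strongly connected. Both letters map each of $\{0,2\}$ and $\{1,3\}$ injectively onto one of them, so the minimal rank is $2$ and the automaton does not synchronize. Yet the kernels of $x$ and $y$ are $\{0,1\}\mid\{2,3\}$ and $\{0,3\}\mid\{1,2\}$. Any partition whose blocks both letters permute must join $0$ with $1$, $2$ with $3$, and $0$ with $3$, so it is trivial.

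Your fallback points the right way but computes in the wrong group. Differences on the $b$-cycle live in $\mathbb Z_k$, where a shift by $k$ does nothing. The paper shows that $b^{j}ab^{n-k}$ changes a phase difference by $\pm n$ modulo $k$.

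The formula above also settles this direction. The subsets that never extend are the unions of cosets of $\langle k\rangle$ in $\mathbb Z_n$. So for $\gcd(k,n)=1$ a singleton $\{q\}$ can be extended step by step to $Q$, giving a word $u$ with $Q\cdot u=\{q\}$.

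The rest is correct: your Eulerian and strong-connectivity checks, the ``only if'' argument (more direct than the paper's appeal to Theorem~\ref{thm:main}), and the final choice of $k$. With the formula above you may simply take $k\equiv m^{-1}\pmod n$.
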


The proof is in \ref{app:fibre-proofs}.  Adding letters that act as the
identity keeps an automaton Eulerian, synchronizing and strongly
connected and changes no $\mathrm{minext}$, since a word extends a subset
if and only if the word obtained by deleting its identity letters does;
so the attaining half holds over every alphabet of two or more letters.

The converse is open.  On every fibre swept, a subset size
$m$ with $\gcd(m,n)>1$ stops at $n-2$; we conjecture that this holds
at every $n$: that a subset $S$ with $\mathrm{minext}(S)=n-1$ of a
synchronizing Eulerian binary automaton has $|S|$ coprime to $n$.  The
reading of the proof in \ref{app:fibre-proofs} extends to every synchronizing Eulerian binary
automaton with a single state $x_0$ of $a$-in-degree $2$ and a single
state $y_0$ of $b$-in-degree $2$: there $\mathrm{minext}(S)=n-1$ if and only if $S$ is a
union of components of the graph on $Q$ whose edges are the pairs
$\{x_0\cdot u,y_0\cdot u\}$ with $|u|\le n-3$.  If the component partition
of that graph stopped changing at some length while it still had two or
more classes, it would be a congruence and every union of its classes
would extend at no length, against Theorem~\ref{thm:main}.  So the number
of components drops by at least one per length from $n-1$ at length $0$,
leaving the graph at length $n-3$ with at most two components.  The conjecture
asks why, when there are two, their sizes are coprime to $n$.  For the
automata just described, those with a single state of $a$-in-degree $2$,
the answer is the proposition below at $\kappa=1$.

\begin{proposition}
\label{prop:gcd-partial}
Let $\mathcal A$ be a synchronizing Eulerian automaton on $n$ states over
any alphabet, for each letter $x$ let $\kappa_x=n-|Q\cdot x|$ be its
number of collisions, and let $S$ be a proper nonempty subset with
$\mathrm{minext}(S)=n-1$.  Then every prime divisor of $\gcd(|S|,n)$ is
at most $\max_x\kappa_x$.  In particular $\gcd(|S|,n)=1$ whenever every
letter has at most one collision.  On a binary Eulerian automaton
$\kappa_a=\kappa_b$ is the number of states with two $a$-preimages,
equivalently the number with two $b$-preimages; write $\kappa$ for it.
\end{proposition}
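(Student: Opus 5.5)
The plan is to reduce the statement modulo a prime and run the chain argument of Lemma~\ref{lem:stabilise} over $\mathbb F_p$ instead of $\mathbb Q$. First I extract the flatness that $\mathrm{minext}(S)=n-1$ gives on an Eulerian automaton. There $\sigma_t\equiv0$, so by~\eqref{eq:defect} the summands $|Su^{-1}|-|S|$ over words of length $t$ average to zero. For $t\le n-2$ none of them is positive, so all vanish: $|Su^{-1}|=|S|$ for every $|u|\le n-2$, while some word $u_1$ of length $n-1$ has $|Su_1^{-1}|\ne|S|$. Write $c_u^{\mathsf T}=\mathbf1^{\mathsf T}\pi(u)$ for the preimage-count row vector, so that $|Su^{-1}|=c_u^{\mathsf T}[S]^{\mathsf T}$. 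For a letter $x$ the vector $c_x-\mathbf1$ has entry $-1$ at each state outside $Q\cdot x$, entry $j-1\in\{1,\dots,\kappa_x\}$ at a state with $j$ preimages, and $0$ elsewhere.

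Now fix a prime $p$ dividing both $|S|$ and $n$, and suppose for contradiction that $p>\max_x\kappa_x$. Over $\mathbb F_p$ both $[S]^{\mathsf T}$ and $\mathbf1$ lie in the zero-sum hyperplane $H$, which has dimension $h=n-1$, and they are independent because $S$ is proper and nonempty. Apply Lemma~\ref{lem:stabilise} over $\mathbb F_p$ with $W_0=\mathrm{span}\{[S]^{\mathsf T},\mathbf1\}$, so $m=2$. The hypothesis $W_{h-m+1}=W_{n-2}\subseteq H$ holds, because $\pi(u)\mathbf1=\mathbf1$ and $|Su^{-1}|=|S|\equiv0\pmod p$ for $|u|\le n-2$. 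The lemma therefore yields a subspace $W\subseteq H$ invariant under every $\pi(x)$ and containing $[S]^{\mathsf T}$. Invariance says $c_u^{\mathsf T}v=0$ for all $v\in W$ and all words $u$, so
\[W\subseteq U^{\perp},\qquad U=\mathrm{span}_{\mathbb F_p}\{c_u:\ u\in\Sigma^*\}.\]
On its own this is not a contradiction. A reset word $r$ gives $c_r=n\,e_{q_0}\equiv0$, so reset words impose nothing modulo $p$.

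The contradiction has to come from the letters, and this is where $p>\kappa_x$ enters. The vectors $c_{xv}^{\mathsf T}-c_v^{\mathsf T}=(c_x-\mathbf1)^{\mathsf T}\pi(v)$ all lie in $U$. Under $p>\kappa_x$, every nonzero entry of $c_x-\mathbf1$, namely $-1$ or $j-1$ with $j-1\le\kappa_x<p$, stays nonzero modulo $p$. The aim is to show that these vectors span all of $H$ over $\mathbb F_p$, hence $U\supseteq H$. Then $U^{\perp}\subseteq\mathrm{span}(\mathbf1)$, so $W\subseteq\mathrm{span}(\mathbf1)$, which contradicts $[S]^{\mathsf T}\in W$. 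To produce enough vectors I would pick a non-Eulerian-looking letter $x$, one with $\kappa_x\ge1$ (some letter collides, by synchronization). I would then use words $v$ that compress $Q$ onto small sets: by synchronization, $\pi(v)$ pulls the row $(c_x-\mathbf1)^{\mathsf T}$ back to combinations of the form $\sum_q (c_x-\mathbf1)_q\,[qv^{-1}]$. The goal is to extract, modulo $p$, the differences $e_q-e_{q'}$ along the edges of a connected graph on $Q$, using strong connectivity of the synchronizing Eulerian automaton. An alternative, which may be cleaner, is to argue directly that $U^{\perp}$ is an invariant subspace. By construction $U$ is closed under right multiplication by every $\pi(x)$, so $U^{\perp}$ is invariant under every $\pi(x)$. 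It is contained in $H$ and contains $\mathbf1$. The task is then to show that such an invariant subspace must be $\mathrm{span}(\mathbf1)$ once every $c_x-\mathbf1$ has all its nonzero entries invertible modulo $p$.

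The main obstacle is exactly this last step: showing that $U$, or equivalently the annihilator of $W$, is all of $H$ modulo $p$ when $p>\max_x\kappa_x$. I expect it to need a careful choice of words $v$ that isolate individual collision states of some letter. I would first try the case $\kappa=1$, where $c_x-\mathbf1=e_{y}-e_{z}$ for a single colliding image $y$ and a single missed state $z$. There the vectors $(e_y-e_z)^{\mathsf T}\pi(v)=[yv^{-1}]-[zv^{-1}]$ are exactly the pair-difference vectors whose span is controlled by the component graph described before the proposition. From that case I would generalise by handling one collision state at a time.
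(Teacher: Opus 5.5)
Your first half is exactly the paper's: flatness $|Su^{-1}|=|S|$ for $|u|\le n-2$ from $\sigma_t\equiv0$, then Lemma~\ref{lem:stabilise} over $\mathbb F_p$ with $W_0=\mathrm{span}\{\mathbf1,[S]^{\mathsf T}\}$, $m=2$, $h=n-1$. This gives $p\mid|Su^{-1}|$ for every word $u$.

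\textbf{The gap is the next step.} You stay modulo $p$ and aim to show $U\supseteq H$, i.e.\ that the largest invariant subspace inside the zero-sum hyperplane is $\mathrm{span}(\mathbf1)$. You derive this from $p\mid n$, $p>\max_x\kappa_x$ and the automaton alone, using only that the nonzero entries of $c_x-\mathbf1$ are invertible modulo $p$. That claim is false. On $Q=\{0,1,2\}$ take three letters $x_0,x_1,x_2$, with $x_i$ the constant map to $i$. Every state has in-degree $3=k$ and the automaton is synchronizing. Here $p=3$ divides $n$, $\kappa_{x_i}=2<p$, and every entry of $c_{x_i}-\mathbf1=3e_i-\mathbf1$ is nonzero modulo $3$. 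Yet $c_u=3e_i\equiv0$ for every nonempty $u$ ending in $x_i$, so $U=\mathrm{span}(\mathbf1)$. All your vectors $(c_x-\mathbf1)^{\mathsf T}\pi(v)\equiv-c_v^{\mathsf T}$ lie in $\mathrm{span}(\mathbf1)$, and $U^{\perp}$ is the whole zero-sum plane, containing $(1,2,0)^{\mathsf T}$. Your target concerns arbitrary vectors of $\mathbb F_p^{n}$, and $p>\kappa_x$ gives no control over those. It controls only $0/1$ vectors, through their integer preimage sizes.

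\textbf{The missing idea is to return to $\mathbb Z$.} You note correctly that a reset word imposes nothing modulo $p$. In $\mathbb Z$, however, it imposes $|Sr^{-1}|\in\{0,n\}$, which differs from $|S|$. For $T=Su^{-1}$ and a letter $x$ one has $S(xu)^{-1}=Tx^{-1}$ and $|Tx^{-1}|-|T|=\sum_{q\in T}(|qx^{-1}|-1)$. The negative summands, one for each state of $T$ outside $Q\cdot x$, number at most $\kappa_x$. The positive ones total at most $\sum_q\max(0,|qx^{-1}|-1)=\kappa_x$. So this increment is an integer in $[-\kappa_x,\kappa_x]$ that is divisible by $p$, hence $0$ when $p>\kappa_x$. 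By induction on $|u|$, $|Su^{-1}|=|S|$ for every word, which the reset word contradicts. That is where $p>\kappa_x$ enters: as a bound on an integer step, not as invertibility of entries.

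Two smaller points:
\begin{enumerate}
\item A row vector times $\pi(v)$ is pushed forward, $e_q^{\mathsf T}\pi(v)=e_{q\cdot v}^{\mathsf T}$. So your vectors are $\sum_q(c_x-\mathbf1)_q\,e_{q\cdot v}^{\mathsf T}$ (for $\kappa_x=1$, $e_{y\cdot v}^{\mathsf T}-e_{z\cdot v}^{\mathsf T}$), not combinations of $[qv^{-1}]$.
\item You do not treat the last sentence of the statement. It holds because on a binary Eulerian automaton $|qa^{-1}|+|qb^{-1}|=2$, so a state has two $a$-preimages exactly when it has no $b$-preimage. Hence $\kappa_a=\#\{q:|qa^{-1}|=0\}=\#\{q:|qa^{-1}|=2\}=\#\{q:|qb^{-1}|=0\}=\kappa_b$.
\end{enumerate}
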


The proof is in \ref{app:fibre-proofs}.

For prime $n$ the conjecture holds because $0<|S|<n$, so what is
open concerns composite $n$ and, by the proposition, automata with
$\kappa\ge2$; a counterexample of size $m$ would have every prime
divisor of $\gcd(m,n)$ at most $\kappa$.  The conjecture is binary: with
three letters it fails at $n=4$.  On $Q=\{0,1,2,3\}$ let the
letters $a$, $b$, $c$ send $0,1,2,3$ to $0,0,2,2$, to $0,1,3,2$ and to
$3,1,1,3$.  Every state has in-degree $3$, the automaton is synchronizing
and strongly connected, $\kappa_a=\kappa_c=2$, and $S=\{0,1\}$, with
$\gcd(|S|,n)=2$, has $\mathrm{minext}(S)=3$: every $Su^{-1}$ with
$|u|\le2$ has size $2$, and $S(cbc)^{-1}=Q$.  Over the synchronizing
strongly connected Eulerian ternary automata on four states, the first
letter up to conjugacy and the other two ordered (24,606 automata), 832
pairs of an automaton and a subset of size $2$ attain $n-1$
(\ref{sec:verification}).

Off the
fibre a singleton can be stuck, so Theorem~\ref{thm:main} is not vacuous
at singletons.

\subsection{The fibre is independent of the weighting}
\label{sec:weighting}

The weights $k^{\,n-1-t}$ entered only through the closed form~\eqref{eq:closed}.  The
zero set does not depend on them.

\begin{theorem}
\label{thm:weights}
Let $c_1,\dots,c_{n-1}>0$ be any strictly positive weights
and put $\beta_c^{\mathsf T}=\sum_{t=1}^{n-1}c_t\,w_t^{\mathsf T}$, where
$w_t^{\mathsf T}=\mathbf1^{\mathsf T}M^{t}-k^{t}\mathbf1^{\mathsf T}$ is the
$t$-step deviation, so that $\sigma_t(S)=w_t^{\mathsf T}[S]^{\mathsf T}$.
Then for every complete automaton, $\beta_c=0$ if and only if $\mathcal A$
is Eulerian.
\end{theorem}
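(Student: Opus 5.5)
The plan is to reduce the statement to the invertibility of a single matrix polynomial $p(M)$, and to get that invertibility from a classical location theorem for the zeros of polynomials with monotone coefficients.

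The direction ``Eulerian $\Rightarrow\beta_c=0$'' is immediate. If $w=0$ then \eqref{eq:identity} reads $\mathbf1^{\mathsf T}M=k\mathbf1^{\mathsf T}$. Iterating gives $\mathbf1^{\mathsf T}M^t=k^t\mathbf1^{\mathsf T}$, so every $w_t=0$ and hence $\beta_c=0$.

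For the converse I first put $\beta_c$ in closed form, exactly as in Lemma~\ref{lem:closed}. The proof of that lemma gives $w_t^{\mathsf T}=\sum_{j<t}k^{\,t-1-j}w^{\mathsf T}M^{\,j}$. Collecting coefficients yields
\[
\beta_c^{\mathsf T}=w^{\mathsf T}p(M),\qquad p(z)=\sum_{j=0}^{n-2}a_jz^j,\qquad a_j=\sum_{t=j+1}^{n-1}c_t\,k^{\,t-1-j}.
\]
The coefficients satisfy the recursion $a_j=c_{j+1}+k\,a_{j+1}$ for $j<n-2$, with $a_{n-2}=c_{n-1}>0$. All the $a_j$ are therefore positive, and since every $c_t>0$ we get $a_j>k\,a_{j+1}$. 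Setting $b_j=a_jk^j$, this becomes the strict chain $b_0>b_1>\dots>b_{n-2}>0$. It then suffices to prove that $p(M)$ is invertible, because $w^{\mathsf T}p(M)=0$ will force $w=0$.

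The key step, and the one I expect to carry the content, is locating the zeros of $p$. I would invoke the Eneström--Kakeya theorem, in the form: a real polynomial $\sum_j b_j\zeta^j$ with $b_0>b_1>\dots>b_m>0$ has all its zeros in $|\zeta|>1$. I would either quote it or reprove it in two lines. For the reproof, multiply by $(1-\zeta)$ to get $b_0-\sum_{j\ge1}(b_{j-1}-b_j)\zeta^j-b_m\zeta^{m+1}$. For $|\zeta|\le1$, $\zeta\neq1$, the subtracted terms have total modulus at most $\sum_{j\ge1}(b_{j-1}-b_j)+b_m=b_0$. Equality would require all the $\zeta^j$ to be equal to $1$, which is impossible for $\zeta\ne1$ since at least one of the coefficients $b_{j-1}-b_j$ or $b_m$ is nonzero at $j=1$. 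The case $\zeta=1$ is excluded because the value there is $\sum_j b_j>0$.

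Applying this with $\zeta=z/k$ shows that every zero of $p$ satisfies $|z|>k$. On the other hand $M\mathbf1=k\mathbf1$ with $M\ge0$, so $\|M\|_\infty=k$ and every eigenvalue $\lambda$ of $M$ has $|\lambda|\le k$. Hence $p(\lambda)\ne0$ for every eigenvalue, and $p(M)$ is invertible, whose eigenvalues are the $p(\lambda)$. It follows that $w=0$.

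The degenerate case $n=2$ needs a separate line. There $p$ is the constant $a_0=c_1>0$, so $\beta_c=c_1w$ and the conclusion is immediate.
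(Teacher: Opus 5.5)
Your proposal is correct and follows the paper's proof essentially step for step. It uses the same closed form $\beta_c^{\mathsf T}=w^{\mathsf T}p(M)$ with the recursion $a_j=c_{j+1}+k\,a_{j+1}$, the same rescaling to a strictly decreasing chain $b_j=a_jk^{j}$, the same Enestr\"om--Kakeya argument after multiplying by $(1-\zeta)$, and the same conclusion from $\|M\|_\infty=k$ and the invertibility of $p(M)$. The only variation is the equality case: you use that a root forces the subtracted sum to equal the positive real $b_0$, so every $\zeta^{j}=1$, whereas the paper compares the arguments of $\zeta$ and $\zeta^{2}$; your version avoids the paper's $n\ge3$ caveat.
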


\begin{proof}
By the display in the proof of Lemma~\ref{lem:closed},
$w_t^{\mathsf T}=w^{\mathsf T}\sum_{j<t}k^{\,t-1-j}M^{\,j}$, so
$\beta_c^{\mathsf T}=w^{\mathsf T}q_c(M)$ with
$q_c(z)=\sum_{j=0}^{n-2}\gamma_jz^{j}$ and
$\gamma_j=\sum_{t=j+1}^{n-1}c_tk^{\,t-1-j}$, hence
\[\gamma_j=c_{j+1}+k\,\gamma_{j+1}\qquad(\gamma_{n-1}:=0).\]
Put $g_j=\gamma_jk^{\,j}$.  Then $g_j=g_{j+1}+k^{\,j}c_{j+1}$, so
\[g_0>g_1>\dots>g_{n-2}=c_{n-1}k^{\,n-2}>0.\]
Let $|\zeta|\le1$ and write $G(\zeta)=\sum_jg_j\zeta^{j}=q_c(k\zeta)$.  Then
\[(1-\zeta)G(\zeta)=g_0-\sum_{j=1}^{n-2}(g_{j-1}-g_j)\zeta^{j}-g_{n-2}\zeta^{\,n-1},\]
and the modulus of the subtracted part is at most
$\sum_{j=1}^{n-2}(g_{j-1}-g_j)|\zeta|^{j}+g_{n-2}|\zeta|^{\,n-1}
\le\sum_{j=1}^{n-2}(g_{j-1}-g_j)+g_{n-2}=g_0$, every coefficient being
strictly positive.  The second inequality is strict unless $|\zeta|=1$, and
on $|\zeta|=1$ the first is strict unless
$\zeta,\zeta^{2},\dots,\zeta^{\,n-1}$ all have the same argument, which for
$n\ge3$ forces $\zeta=1$ on comparing $\zeta$ with $\zeta^{2}$; and
$G(1)=\sum_jg_j>0$.  Hence $G(\zeta)\ne0$ for $|\zeta|\le1$: every root of
$q_c$ has modulus strictly greater than $k$.  (At $n=2$ there is nothing to
prove: $q_c=\gamma_0=c_1>0$ has no roots.)  Since $M$ is non-negative
with every row sum equal to $k$, $\|M\|_\infty=k$ and every eigenvalue
of $M$ has modulus at most $k$; so
$\det q_c(M)=\prod_iq_c(\lambda_i)\ne0$, the product running over the
eigenvalues with algebraic multiplicity (Schur triangularisation; $M$
need not be diagonalisable), $q_c(M)$ is invertible, and
$w^{\mathsf T}q_c(M)=0$ forces $w=0$.  The converse direction is immediate
from $w_t^{\mathsf T}=w^{\mathsf T}\sum_{j<t}k^{\,t-1-j}M^{\,j}$.
\end{proof}

The decreasing-coefficient step is the Enestr\"om--Kakeya root-location
theorem in its strict, reversed form; see \cite{asv} for the theorem and
its sharpness.
At the weights $c_t=k^{\,n-1-t}$ of Theorem~\ref{thm:fibre},
$g_j=(n-1-j)k^{\,n-2}$ and $q_c$ is the polynomial
$p(z)=\sum_{j=0}^{n-2}(n-1-j)k^{\,n-2-j}z^{j}$ of Table~A.5.

Theorem~\ref{thm:weights} shows that Theorem~\ref{thm:fibre} does not
depend on the weights: every strictly positive weighting vanishes
identically exactly on the Eulerian automata.  By
Proposition~\ref{prop:region} below every strictly positive weighting
also certifies within the same length $n-1$; the weights $k^{\,n-1-t}$
are chosen for the closed form~\eqref{eq:closed} only.  
The regions the
weightings certify are not the same; their union over the positive cone is the
$C^{\sharp}$ of Section~\ref{sec:csharp}, which that section finds strictly
larger than $\{B\ge0\}$ at every size it reports except $n=3$.

\subsection{A single length can vanish off the Eulerian fibre}
\label{sec:single-length}

The single-length version of Theorem~\ref{thm:weights} fails.  By the display
in the proof of Lemma~\ref{lem:closed}, $w_t^{\mathsf T}=w^{\mathsf T}h_t(M)$ with
\[h_t(z)=\sum_{j<t}k^{\,t-1-j}z^{\,j}=\frac{z^{t}-k^{t}}{z-k},\]
whose roots are exactly $k\omega$ for the $t$-th roots of unity
$\omega\ne1$, all of modulus exactly $k$.  So $w_t=0$ with $w\ne0$ is
possible.  Since the roots of $h_t$
are simple, $w^{\mathsf T}h_t(M)=0$ holds if and only if $w^{\mathsf T}$ lies
in the sum of the left eigenspaces of $M$ for those of its eigenvalues of the
form $k\omega$ with $\omega^{t}=1$ and $\omega\ne1$.  Such an eigenvalue can
be carried by a proper sink component of the underlying digraph, so $M$ itself
need not be irreducible; and when $\omega$ is not real, the real vector $w$
lies in the plane spanned by a conjugate pair of left eigenvectors.

\begin{example}
\label{ex:imprimitive}
On $Q=\{0,1,2\}$ take the letters $a=(1,0,0)$ and
$b=(2,0,0)$, meaning $0\cdot a=1$, $1\cdot a=0$, $2\cdot a=0$, $0\cdot b=2$,
$1\cdot b=0$, $2\cdot b=0$.  Then $\mathrm{indeg}=(4,1,1)$ and
$w=(2,-1,-1)\ne0$, while $w^{\mathsf T}M=-2\,w^{\mathsf T}$, so
$w_2^{\mathsf T}=w^{\mathsf T}M+kw^{\mathsf T}=0$.  The defect $\sigma_2$
vanishes identically on a non-Eulerian automaton.  (This automaton is
strongly connected and not synchronizing.)  Figure~\ref{fig:bipartite}
shows the bipartition responsible.
\end{example}

\begin{figure}[t]
\centering
\begin{tikzpicture}[>=Stealth,shorten >=1pt,
  state/.style={circle,draw,minimum size=7mm,inner sep=0pt}]
  \node[state] (q0) at (0,0.9) {$0$};
  \node[state] (q1) at (3.6,1.8) {$1$};
  \node[state] (q2) at (3.6,0) {$2$};
  \draw[dotted] (1.8,-1.1) -- (1.8,2.9);
  \node[font=\scriptsize] at (0.65,2.75) {$\{0\}$};
  \node[font=\scriptsize] at (2.95,2.75) {$\{1,2\}$};
  \draw[->] (q0) to[bend left=12] node[above,pos=.45] {$a$} (q1);
  \draw[->] (q1) to[bend left=12] node[below,pos=.18] {$a$} (q0);
  \draw[->,dashed] (q1) to[bend right=45] node[above,pos=.35] {$b$} (q0);
  \draw[->,dashed] (q0) to[bend right=12] node[below,pos=.45] {$b$} (q2);
  \draw[->] (q2) to[bend right=12] node[above,pos=.18] {$a$} (q0);
  \draw[->,dashed] (q2) to[bend left=45] node[below,pos=.35] {$b$} (q0);
\end{tikzpicture}
\caption{The witness of Example~\ref{ex:imprimitive}.  Every edge crosses
the bipartition $\{0\}\mid\{1,2\}$ (dotted), so, the automaton being
strongly connected, $M$ is imprimitive:
$w^{\mathsf T}=(2,-1,-1)$ is a left eigenvector of $M$ with eigenvalue
$-2$, of modulus exactly $k$, and $\sigma_2$ vanishes identically although
the automaton is not Eulerian.}
\label{fig:bipartite}
\end{figure}

Counting all transition tables, with no
quotient: 6 of the 639 non-Eulerian binary automata at $n=3$ and 48 of the
63,016 at $n=4$ have $\sigma_t\equiv0$ for some single $t\ge2$, against 0
of the 18,003 non-Eulerian ternary automata at $n=3$.  A single $\sigma_t$ can vanish identically on a non-Eulerian
automaton, but only when $M$ has an eigenvalue $k\omega$ with $\omega^{t}=1$,
$\omega\ne1$, because the roots of
$h_t$ lie on $|z|=k$; a strictly positive combination never does, because the
roots of $q_c$ lie strictly outside.  The strict decrease $g_j>g_{j+1}$ in
the proof of Theorem~\ref{thm:weights} is the difference between the two, obtained
from the positivity $c_t>0$ alone.  The family certifies every subset
any single length does (Proposition~\ref{prop:region}), strictly more at
the sizes measured (Section~\ref{sec:grading}); and only a strictly
positive combination vanishes identically on the Eulerian automata alone.

\section{Sharpness}
\label{sec:sharpness}

\subsection{Why the hypothesis is necessary}
\label{sec:two-sided}

Of the 146,875 automata of the unfiltered $n=5$ binary enumeration in the
quotient convention, 31,895 are not synchronizing, 82,392 are synchronizing
but not strongly connected, and the remaining 32,588 are the synchronizing
strongly connected population used elsewhere in this section.  The strict
half, Corollary~\ref{cor:strict}, returns no violation on a sweep of the whole 146,875;
the boundary case $B=0$ over the 82,392 synchronizing automata that are not
strongly connected returns no violation.  Synchronization
cannot be dropped, as one witness shows.  Let $\mathcal A$ be an automaton all of whose letters act
as permutations of $Q$.  It is Eulerian, so $\beta^{\ast}=0$ by
Theorem~\ref{thm:fibre} and $B(S)=0\ge0$ for every $S$; and every
$|Su^{-1}|=|S|$, so no proper nonempty subset extends at any length.
Without synchronization, Theorem~\ref{thm:main} would therefore fail on
every proper nonempty subset of every group automaton, on which
$B\equiv0$.  Sweeping the boundary case
over the 31,895 non-synchronizing automata returns 58,060 violations,
counted in proper-subset instances: an instance is a pair of an
automaton and a proper nonempty subset $S$, and it is a violation when
$B(S)=0$ and $\mathrm{minext}(S)>n-1$, so an automaton contributes one
violation for each such subset.

\subsection{Tightness at one integer unit}
\label{sec:tightness}

Since $B$ is integer-valued, the relaxation of Theorem~\ref{thm:main} by
one unit, to $B(S)\ge-1$, is false.  In Example~\ref{ex:cerny}, $S=\{0,1\}$ has
$B(S)=-1$ and $\mathrm{minext}(S)=3>n-1$: none of the six nonempty words of
length at most $2$ leaves $|Su^{-1}|$ above $2$, five of them leaving it at
$2$ and $ba$ at $1$, and $u=baa$ attains $3$.  The value $-1$ is attained
on every population measured: on the binary populations at $n=3$, $4$ and $5$ and
the ternary population at $n=4$ (\ref{sec:verification}), the
largest value of $B$ on a stuck subset is $-1$.  The inequality $B\le-1$ on
stuck subsets is proved, since on a synchronizing automaton every stuck subset
has all $\sigma_t\le0$ and $\sigma\not\equiv0$ (Section~\ref{sec:csharp});
what the sweeps add is that $-1$ is attained.  Stratified by deviation, that
largest value is not monotone in $d$: on the ternary $n=4$ population it
runs $-1$, $-2$, $-6$, $-4$, $-50$, $-68$ at $d=2,4,6,8,10,12$.

\subsection{Non-vacuity and the constant}
\label{sec:non-vacuity}

The set $\{B<0\}$ holds 138 of the 354 proper-subset instances at
$n=3$, 7,724 of 17,360 at $n=4$, 463,047 of 977,640 at $n=5$ and
33,481,397 of 69,596,798 at $n=6$, that is 39.0, 44.5, 47.4 and 48.1
percent, rising towards the bound $1/2$ that $B(Q\setminus S)=-B(S)$
imposes.  The gap to $1/2$ is half the share of $\{B=0\}$, on each
non-Eulerian automaton at most $1/(2\sqrt N)$ for $N$ the number of states with
$\beta^{\ast}_q\ne0$ (Proposition~\ref{prop:half-share}).  The
certified region is never empty: $B(Q\setminus S)=-B(S)$
puts one member of every complementary pair in $\{B\ge0\}$, so
Theorem~\ref{thm:main} is vacuous on no synchronizing automaton.  The
constant $n-1$ cannot be improved: in Example~\ref{ex:cerny} the subset
$\{2\}$ has $B=1\ge0$ and $\mathrm{minext}=2=n-1$.  The bound is attained
inside $\{B\ge0\}$ at every size measured (\ref{sec:verification}).

Nothing above uses a degree hypothesis.  A degree hypothesis bounds
the per-step constant of the extension method, and the family of
Kisielewicz and Szyku\l{}a \cite{ks} shows that no constant per-step
bound holds on every automaton; the family and the measurements are in
\ref{sec:bounded-deviation}.

\subsection{Grading of the two halves}
\label{sec:grading}

Both ends of the family $B_1,\dots,B_{n-1}$ certify subsets that the other
members do not.  Over the exhaustive synchronizing strongly connected $n=5$ population
(32,588 automata; 977,640 proper-subset instances), $\sigma_1(S)>0$ holds
on 314,393 instances and a further 12,064 instances have $\sigma_t(S)>0$
only at $t=n-1$, so a truncated family would miss those.  The union of the
strict half-spaces $\{\sigma_t>0\}$, $t\le T$, is strictly larger than
$\{B_T>0\}$, by 12,644 instances at $T=2$
and 29,396 at $T=4$; against the non-negativity test $B_T\ge0$ neither
region contains the other, the union holding 9,108 instances at $T=2$ and
26,826 at $T=4$ that $\{B_T\ge0\}$ misses; and
Proposition~\ref{prop:region} identifies the union over all strictly
positive reweightings as $C^{\sharp}$.

Each length $t$ is attained.  At $n=5$, for each $t\le n-1$ some subset has
$\sigma_t(S)>0$ and $\mathrm{minext}(S)=t$, on 314,393, 74,052, 9,338 and
808 instances respectively, with a witness at every size.  Fix
$n\ge2$ and $t\le n-1$, let the letter $a$ send $i$ to $i+1$ for $i<t$, send
$t$ to $1$ and fix every state above $t$, let every other letter act as the
identity, and take $S=\{t\}$.  Then $|Su^{-1}|=1$ for every word $u$ of
length less than $t$, while $S(a^{t})^{-1}=\{0,t\}$, so
$\mathrm{minext}(S)=t$; and $\mathrm{indeg}_t(S)=k^{t}+1$, so
$\sigma_t(S)=1>0$.  Hence the conclusion $\mathrm{minext}(S)\le t$ of
Corollary~\ref{cor:strict} cannot be lowered at any $t$, $n$ or $k$; the
witness is not synchronizing for $n\ge3$, but the counts above show
attainment at every $t$ on synchronizing strongly connected automata as well.

The boundary half is not graded, neither by this proof
(Theorem~\ref{thm:main} needs the full ascending chain, hence the full
length $n-1$) nor in the counts: subsets with $\sigma_t(S)=0$ for every $t\le n-2$
and $\mathrm{minext}(S)>n-2$ number 402, 2,428 and 8,528 on the exhaustive
populations at $n=4$, $5$ and $6$.

\subsection{The certified region \texorpdfstring{$C^{\sharp}$}{C-sharp}}
\label{sec:csharp}

Write
\[\begin{gathered}
C^{\sharp}=\{S:\ \exists\,t\le n-1,\ \sigma_t(S)>0\}
\cup\{S:\ \sigma_t(S)=0\ \ \forall t\le n-1\},\\
H^{\sharp}=\{S:\ \sigma_t(S)\le0\ \forall t\le n-1,\
\sigma_t(S)\ne0\ \text{for some }t\le n-1\}.
\end{gathered}\]
On a synchronizing complete automaton, every proper nonempty $S\in C^{\sharp}$
satisfies $\mathrm{minext}(S)\le n-1$: on the first clause by
Corollary~\ref{cor:strict}, on the second because $\sigma\equiv0$ gives
$B(S)=0$ and Theorem~\ref{thm:main} applies.  By Theorem~\ref{thm:krylov},
$\{B\ge0\}\subseteq C^{\sharp}$.  The enlargement is the union over all
strictly positive reweightings.

\begin{proposition}
\label{prop:region}
For every strictly positive weight vector
$c\in\mathbb R^{n-1}_{>0}$, the functional $B_c=\sum_tc_t\sigma_t$ satisfies
$B_c(S)\ge0\Rightarrow\mathrm{minext}(S)\le n-1$ for every proper nonempty
$S$, on every synchronizing complete automaton, and
\[\bigcup_{c>0}\{S:\ B_c(S)\ge0\}\;=\;C^{\sharp}.\]
\end{proposition}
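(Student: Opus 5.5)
The proof has two parts: the certification claim for a fixed weighting $c>0$, and the identification of the union over all $c$ with $C^{\sharp}$.

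\emph{The certification claim.} Fix $c\in\mathbb R^{n-1}_{>0}$ and a proper nonempty $S$ with $B_c(S)\ge0$. We repeat the proof of Theorem~\ref{thm:main} with the weights $k^{\,n-1-t}$ replaced by $c_t$. Suppose $\mathrm{minext}(S)>n-1$. Then by Lemma~\ref{lem:defect} every summand $|Su^{-1}|-|S|$ with $|u|\le n-1$ is $\le0$, so every $\sigma_t(S)\le0$. Since every $c_t>0$ and $\sum_tc_t\sigma_t(S)\ge0$, every $\sigma_t(S)=0$, and hence every summand vanishes. This is exactly~\eqref{eq:flat}. From there the argument of Theorem~\ref{thm:main} applies verbatim: Lemma~\ref{lem:stabilise} with $W_0=\mathrm{span}\{\gamma\}$ and $h=n-1$, followed by the reset word, gives the contradiction. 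The proof of Theorem~\ref{thm:main} used the weights only through their strict positivity, so nothing else changes.

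\emph{The inclusion $\subseteq$.} Take $S$ with $B_c(S)\ge0$ for some $c>0$. If some $\sigma_t(S)>0$, then $S$ lies in the first clause of $C^{\sharp}$. Otherwise every $\sigma_t(S)\le0$, and $0\le\sum_tc_t\sigma_t(S)\le0$ with strictly positive weights forces every $\sigma_t(S)=0$, so $S$ lies in the second clause. This step is purely pointwise and is the same argument as in Theorem~\ref{thm:krylov}.

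\emph{The inclusion $\supseteq$.} If every $\sigma_t(S)=0$, then any $c$ works, since $B_c(S)=0$. If instead $\sigma_{t_0}(S)>0$ for some $t_0\le n-1$, we build a weighting concentrated at $t_0$. Put $c_{t_0}=1$ and $c_t=\varepsilon$ for $t\ne t_0$. Then
\[B_c(S)=\sigma_{t_0}(S)+\varepsilon\sum_{t\ne t_0}\sigma_t(S),\]
and this is positive as soon as
\[\varepsilon\sum_{t\ne t_0}|\sigma_t(S)|<\sigma_{t_0}(S).\]
For example, take $\varepsilon=1/\bigl(1+\sum_{t}|\sigma_t(S)|\bigr)$. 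Then $B_c(S)\ge\sigma_{t_0}(S)-\varepsilon\sum_t|\sigma_t(S)|>0$ because $\sigma_{t_0}(S)\ge1$, the defects being integers.

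The weighting $c$ may depend on $S$. That is consistent with the statement, which is a union over $c$, and I regard the need to allow this dependence as the only real point of care.

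I expect no genuine obstacle. The only step needing attention is checking that the proof of Theorem~\ref{thm:main} used nothing about its particular weights $k^{\,n-1-t}$ beyond their strict positivity. It did not: the closed form~\eqref{eq:closed} plays no role in that proof.
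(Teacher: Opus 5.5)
Your proof is correct and follows the paper's own argument. For validity, both reduce to the case where every $\sigma_t(S)$ vanishes: the paper then notes $B(S)=0$ and cites Theorem~\ref{thm:main}, while you re-run its flatness and chain argument, which is equivalent. For the set equality, you use the same three-case split on the signs of the $\sigma_t(S)$, and your explicit $\varepsilon=1/\bigl(1+\sum_t|\sigma_t(S)|\bigr)$ makes precise the paper's ``choose $c_t$ large and the rest small''.
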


\begin{proof}
Validity as for $B$: if $B_c(S)\ge0$ and no $\sigma_t>0$ then all
$\sigma_t$ vanish, and Theorem~\ref{thm:main} applies to $B$.  For the union: if some
$\sigma_t(S)>0$, choose $c_t$ large and the rest small; if every
$\sigma_t(S)\le0$ with at least one strictly negative, then $B_c(S)<0$ for
every strictly positive $c$; and if $\sigma\equiv0$ then $B_c(S)=0$ for
every $c$.
\end{proof}

In the notation of Theorem~\ref{thm:weights}, $B_c(S)=\beta_c^{\mathsf T}[S]^{\mathsf T}$,
and $B=B_c$ at $c_t=k^{\,n-1-t}$.

Passing from $\{B\ge0\}$ to
$C^{\sharp}$ shrinks $\{B<0\}$ by 3.4, 5.8 and 9.1 percent at
$n=4,5,6$ over all deviations, counted in proper-subset instances in the
quotient convention of \ref{sec:conventions}; the full series is the
first row of Table~D.1 in \ref{app:qcert}.  At $n=3$ the gain is 0,
$C^{\sharp}=\{B\ge0\}$ on all 59 automata; at
ternary $n=4$ the gain is 3.06 percent against 3.42 at $k=2$, so
$C^{\sharp}\setminus\{B\ge0\}$ stays below 4 percent of $\{B<0\}$ at
$k=3$ as at $k=2$.  The
second-moment test of Section~\ref{sec:qcert}, which is not in this linear
family, shrinks $\{B<0\}$ by 60 to 95 percent (Table~D.1; Section~\ref{sec:asymptotic} for its share as $n$ grows).  No enlargement, inside or outside the family, removes a
stuck subset: on a synchronizing automaton every stuck subset has all
$\sigma_t\le0$ and $\sigma\not\equiv0$, by Corollary~\ref{cor:strict} and
Theorem~\ref{thm:main}, hence lies in $H^{\sharp}$.

\subsection{A second-moment certificate}
\label{sec:qcert}

Every functional of Section~\ref{sec:csharp} is linear in $[S]$.  The
second moment of the deviations $|Su^{-1}|-|S|$ gives a certificate
outside that family.  Write
$x_u=|Su^{-1}|-|S|$, so that $\sigma_t(S)=\sum_{|u|=t}x_u$
by~\eqref{eq:defect}.

\begin{proposition}[Q-CERT]
\label{prop:qcert}
For every complete automaton, every proper nonempty $S\subseteq Q$ and
every $t\ge1$,
\[\sum_{|u|=t}x_u^{2}\ >\ |S|\bigl(-\sigma_t(S)\bigr)
\qquad\Longrightarrow\qquad \mathrm{minext}(S)\le t.\]
\end{proposition}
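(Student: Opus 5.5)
We argue by contraposition: assume $\mathrm{minext}(S)>t$ and show that the second moment is at most $|S|(-\sigma_t(S))$. The assumption gives $x_u=|Su^{-1}|-|S|\le0$ for every word $u$ of length exactly $t$, since every word of length at most $t$ fails to extend $S$. The only other ingredient is the trivial lower bound $|Su^{-1}|\ge0$, which gives $x_u\ge-|S|$. Each $x_u$ therefore lies in the interval $[-|S|,0]$.

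On that interval the elementary inequality $x^{2}\le|S|\cdot(-x)$ holds, because it is equivalent to $(-x)\bigl(|S|-(-x)\bigr)\ge0$, a product of two nonnegative factors. Summing over the $k^{t}$ words of length $t$ and using identity~\eqref{eq:defect} of Lemma~\ref{lem:defect}, $\sum_{|u|=t}x_u=\sigma_t(S)$, gives
\[\sum_{|u|=t}x_u^{2}\ \le\ |S|\sum_{|u|=t}(-x_u)\ =\ |S|\bigl(-\sigma_t(S)\bigr).\]
This contradicts the hypothesis of Proposition~\ref{prop:qcert}, so $\mathrm{minext}(S)\le t$.

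There is no real obstacle. The one point to check is that only words of length exactly $t$ enter, so the assumption $\mathrm{minext}(S)>t$ is more than enough: we use only that no word of length $t$ extends $S$. As with Corollary~\ref{cor:strict}, nothing beyond completeness is used; completeness enters only through~\eqref{eq:defect}.

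It is worth noting in passing that the test is at least as strong as Corollary~\ref{cor:strict} at the same length. If $\sigma_t(S)>0$, then the right-hand side $|S|(-\sigma_t(S))$ is negative while the left-hand side is nonnegative, so the hypothesis holds automatically.
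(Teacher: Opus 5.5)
Your proof is correct and is essentially the paper's own argument. The paper likewise takes the contrapositive from ``every $x_u\le0$'', uses $x_u\ge-|S|$ to get $x_u^{2}\le|S|\,|x_u|$, and sums with identity~\eqref{eq:defect}; your remark that only words of length exactly $t$ are used matches the paper's later observation that the test in fact certifies $\max_{|u|=t}x_u>0$.
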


\begin{proof}
If every $x_u\le0$ then $|x_u|\le|S|$, since $|Su^{-1}|\ge0$ gives
$x_u\ge-|S|$, so
$\sum_{|u|=t}x_u^{2}\le|S|\sum_{|u|=t}|x_u|=|S|\bigl(-\sigma_t(S)\bigr)$ by
\eqref{eq:defect}; the statement is the contrapositive.
\end{proof}

Like Corollary~\ref{cor:strict}, this assumes no synchronization, no
connectivity and no condition on the degrees or on the alphabet; and it
contains the corollary, since when $\sigma_t(S)>0$ the right-hand side is
negative and the hypothesis holds automatically.  The threshold is not
sharp, though the sharp one uses the same three quantities.  Write $m=-\sigma_t(S)$, $c=|S|$,
$q=\lfloor m/c\rfloor$ and $r=m-qc$.

\begin{proposition}[Q-CERT+]
\label{prop:qcertplus}
Under the hypotheses of Proposition~\ref{prop:qcert},
\[\sum_{|u|=t}x_u^{2}\ >\ q\,|S|^{2}+r^{2}
\qquad\Longrightarrow\qquad \mathrm{minext}(S)\le t,\]
and $qc^{2}+r^{2}\le mc$, with equality only when $r=0$.
\end{proposition}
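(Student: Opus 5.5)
The plan is to prove the contrapositive, as for Proposition~\ref{prop:qcert}, but to replace the crude bound $x_u^{2}\le|S|\,|x_u|$ by the exact maximum of a convex function over a box. Suppose $\mathrm{minext}(S)>t$. Then every word $u$ of length $t$ has $x_u\le0$, and $x_u\ge-|S|$ because $|Su^{-1}|\ge0$. Put $y_u=-x_u$, an integer in $[0,c]$ with $c=|S|$. By~\eqref{eq:defect}, $\sum_{|u|=t}y_u=-\sigma_t(S)=m$. The claim then reduces to a purely combinatorial statement: over all integer vectors $(y_u)$ with entries in $[0,c]$ and sum $m$, the maximum of $\sum y_u^{2}$ is $qc^{2}+r^{2}$. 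Since the number of words is $k^{t}$, I would also check that feasibility forces $m\le k^{t}c$, so that $q\le k^{t}$ and the extremal vector fits: $q$ entries equal to $c$, one entry equal to $r$ (if $r>0$; then $q<k^t$), and the rest $0$.

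For the maximization I would use the standard exchange (smoothing) argument. Suppose two entries satisfy $0<y_u\le y_v<c$. Replacing them by $y_u-1$ and $y_v+1$ keeps the sum fixed. It changes the sum of squares by $2(y_v-y_u)+2>0$. So at a maximizer, at most one entry lies strictly between $0$ and $c$. The entries equal to $c$ then number $\lfloor m/c\rfloor=q$, and the remaining one equals $r$, because $0\le r<c$. The edge case $r=0$ gives no fractional entry. This gives $\sum x_u^{2}\le qc^{2}+r^{2}$ whenever $\mathrm{minext}(S)>t$, which is the implication stated.

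For the comparison, compute $mc-(qc^{2}+r^{2})=(qc+r)c-qc^{2}-r^{2}=r(c-r)$. This is $\ge0$ since $0\le r<c$, and it vanishes exactly when $r=0$. So the new threshold is never larger than the one in Proposition~\ref{prop:qcert}, and it is strictly smaller unless $c\mid m$.

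The only step needing care is the exchange argument. In particular, a maximizer must exist, which holds because the domain is finite and nonempty: it contains the actual vector $(y_u)$. The rest is arithmetic. The remark that the threshold is sharp (attained) is not part of the statement, so I would not prove it.
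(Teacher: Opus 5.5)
Your proof is correct and follows the paper's route. You argue by contrapositive, confine $(|x_u|)_u$ to $\{0\le y_u\le c,\ \sum_u y_u=m\}$, identify the maximiser of $\sum_u y_u^{2}$ as $q$ entries equal to $c$, one equal to $r$ and the rest $0$, and finish with $mc-(qc^{2}+r^{2})=r(c-r)\ge0$. The only difference is that the paper obtains the extremal shape from convexity (a convex function is maximised at a vertex of that polytope), whereas your exchange step $y_u\mapsto y_u-1$, $y_v\mapsto y_v+1$ derives that shape directly, which is slightly more self-contained.
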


\begin{proof}
If every $x_u\le0$ then $(|x_u|)_u$ lies in
$\{0\le y_u\le c,\ \sum_uy_u=m\}$, over which the convex function
$\sum_uy_u^{2}$ is maximised at a vertex, that is at $q$ coordinates equal
to $c$, one equal to $r$ and the rest $0$, with value $qc^{2}+r^{2}$; and
$qc^{2}+r^{2}\le qc^{2}+rc=mc$, with equality only when $r=0$.
\end{proof}

(Q-CERT+) is the sharpest threshold in $\sum_{|u|=t}x_u^{2}$, but the
attainable second moments are sparse, so a threshold is not the sharpest
test on the same data; the finer membership test on the same data, its
cost, and the measured shares are in \ref{app:qcert}.  In brief, the
second moment is a pair count on the automaton on $Q\times Q$, costing
$O(kn^{3})$ for all lengths $t\le n-1$ against $O(kn^{2})$ for
$\beta^{\ast}$.  Over the eleven populations of \ref{sec:verification},
in proper-subset instances and in the conventions stated in Table~D.1,
it removes between 60.14 and 95.24 percent of $\{B<0\}$ under
(Q-CERT) and between 60.14 and 95.51 percent under (Q-CERT+),
against at most 9.05 percent for $C^{\sharp}$ (Table~D.1).  So the second
moment is not a refinement of $C^{\sharp}$; on the exhaustive $n=4$
populations the gap is wider at $k=3$ than at $k=2$.

The test removes no stuck subset, nor can it: a subset
with $\mathrm{minext}(S)>n-1$ has every $x_u\le0$ at every $t\le n-1$, so
neither hypothesis holds (measured as 0 stuck subsets certified on all
eleven populations of \ref{sec:verification}, as an implementation
check).  And it does not lower the length.  For each of the strict test
$\sigma_t(S)>0$, (Q-CERT) and (Q-CERT+), the least length at which its
hypothesis holds, maximised over the subsets of a given size that it certifies, is
$n-1$ at every size $1,\dots,n-1$; so is the largest $\mathrm{minext}$ over
the subsets of that size that extend within $n-1$.  This holds on the same eleven
populations at $n=3$ to $7$, $k=2$ and $k=3$, none an Eulerian fibre.  For
no test and no size is that maximum below $n-1$, in agreement with
Section~\ref{sec:no-reset}.

Let $\Phi$ be any functional of
the multiset $\{x_u\}_{|u|=t}$ and of $|S|$ (the latter is not determined
by the former) such that $\Phi>0$ implies $\max_{|u|=t}x_u>0$; then
\[\{S:\Phi(S)>0\}\ \subseteq\ \{S:\max_{|u|=t}x_u>0\}\ \subseteq\
\{S:\mathrm{minext}(S)\le t\},\]
so a subset with $\mathrm{minext}(S)>t$ has every $x_u\le0$ and is certified
by no such $\Phi$: every functional of this shape certifies at most what
the maximum does.  Both forms above are of that shape; the second
moment alone is not ($\sum_{|u|=t}x_u^{2}=2>0$ on the multiset
$-1,-1,0,0$); it is the term $|S|\sigma_t(S)$ that supplies the
property.  The maximum is not a usable certificate.  It costs
$\Theta(k^{t})$ preimage computations against $O(ktn)$ for $\sigma_t$, and
deciding $\mathrm{minext}(S)\le\ell$ with $\ell$ part of the input is
NP-complete at $|S|=n-1$ on strongly connected synchronizing binary
automata \cite[Thm.~13]{bfs}; the reduction produces $\ell=2m+3$ on
at least $3(m+1)$ states, so $\ell\le n-1$ at every instance, within the
length used here.  A polynomial-time evaluation
of $\max_{|u|=t}x_u$ would decide such an instance in at most $n-1$ calls.

The second inclusion of the display is strict, because extending in exactly
$t$ steps and extending within $t$ steps are different conditions; a witness,
Lemma~\ref{lem:monotone}, locating the gap on $\{d\ge2k\}$, and the gap
census are in
\ref{app:qcert}.  Theorem~\ref{thm:main} is not subject to this
inclusion, because it is not of this shape: it shows a multiset value to be
unrealisable under synchronization.

On the locus where every
$\sigma_t(S)=0$, which contains every subset on which synchronization is
used (Section~\ref{sec:hypothesis}), the hypothesis of (Q-CERT) holds if
and only if the $x_u$ are not all zero.  So within $\{B\ge0\}$ synchronization is
used only on the flat locus
\[\{S:\ |Su^{-1}|=|S|\ \text{for all }|u|\le n-1\},\]
which is~\eqref{eq:flat} in the proof of Theorem~\ref{thm:main} and lies
inside $\mathcal K^{\perp}\cap\{0,1\}^{Q}$.  On a
synchronizing automaton that locus contains no proper nonempty subset:
flatness gives every $\sigma_t(S)=0$, hence $B(S)=0\ge0$, hence
$\mathrm{minext}(S)\le n-1$ by Theorem~\ref{thm:main}, against flatness.
Over the non-synchronizing automata it is nonempty: the 58,060
boundary violations of Section~\ref{sec:two-sided} are its proper nonempty
members, since $B(S)=0$ with $\mathrm{minext}(S)>n-1$ forces every
$\sigma_t(S)=0$ and then every $|Su^{-1}|=|S|$.

\subsection{The shares as the state set grows}
\label{sec:asymptotic}

The shares of Table~D.1 rise with $n$ within each series;
Theorem~\ref{thm:share} and Proposition~\ref{prop:half-share} bound them
on every automaton.  For a word
$u$ let $\mu(u)$ be the number of unordered pairs of distinct states
$p,p'$ with $p\cdot u=p'\cdot u$, and for $t\ge1$ put
\[
P_t=k^{-t}\sum_{|u|=t}\mu(u),\qquad p_t=P_t\Big/\binom n2,\qquad
\Delta_t=\max_q\mathrm{indeg}_t(q)/k^{t},
\]
the mean number and the mean fraction of pairs merged by a word of
length $t$, and the largest $t$-step in-degree relative to its mean
$k^{t}$, so that $\Delta_t\ge1$ and $\Delta_t/n$ is the largest fraction
of the $nk^{t}$ landings $(p,u)$, $|u|=t$, received by one state.  In
the notation of \ref{app:qcert},
$P_t=\bigl(k^{-t}\sum_q\mathrm{pairdeg}_t(q,q)-n\bigr)/2$, so both are
computable in polynomial time.

\begin{theorem}
\label{thm:share}
Let $\mathcal A$ be a complete automaton and $t\ge1$ with $P_t>0$.  The hypothesis
of (Q-CERT) at length $t$ fails on at most
\[
\Bigl(\frac{8n(\Delta_t+1)}{P_t}+\frac{4n^{3}(\Delta_t-1)}{P_t^{2}}\Bigr)2^{n}
\quad\text{and on at most}\quad
3\exp\Bigl(-\frac{P_t^{2}}{32\,n^{3}(\Delta_t+1)^{2}}\Bigr)2^{n}
\]
of the $2^{n}$ subsets of $Q$.  If $\mathcal A$ is not Eulerian and
$t\le n-1$, the share of the subsets with $B(S)<0$ on which (Q-CERT) at
length $t$ fails to establish $\mathrm{minext}(S)\le n-1$ is at most
four times either bound.
\end{theorem}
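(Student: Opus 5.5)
The plan is to turn both bounds into concentration estimates for a uniformly random subset. Draw $S$ uniformly from the $2^{n}$ subsets and write $[q\in S]=(1+s_q)/2$ with independent Rademacher signs $s_q$. Put $a_q(u)=|qu^{-1}|-1$, so that $\sum_q a_q(u)=0$ for every $u$. Then $x_u=\tfrac12\sum_q s_q a_q(u)$. By Lemma~\ref{lem:defect}, $\sigma_t(S)=\tfrac12\sum_q s_q e_q$ with $e_q=\mathrm{indeg}_t(q)-k^{t}$.

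\textbf{Step 1: means and variances.} Let $Y=\sum_{|u|=t}x_u^{2}$.
\begin{itemize}
\item Since $\sum_q a_q(u)^{2}=\sum_q|qu^{-1}|^{2}-n=2\mu(u)$, we get $\mathbb E\,Y=\tfrac12\sum_u\mu(u)=\tfrac12 k^{t}P_t$.
\item The failure event of (Q-CERT) is $Y\le |S|(-\sigma_t)$, which is contained in $\{Y\le n|\sigma_t|\}$.
\item Hence failure forces either $Y\le \mathbb E Y/2$ or $|\sigma_t|\ge k^{t}P_t/(4n)$.
\item For the second event I would use $\mathrm{Var}\,\sigma_t=\tfrac14\sum_q e_q^{2}$ and $\sum_q e_q^{2}=\sum_q\mathrm{indeg}_t(q)^{2}-nk^{2t}\le nk^{2t}(\Delta_t-1)$, using $\mathrm{indeg}_t\le\Delta_tk^{t}$.
\item Chebyshev then gives probability at most $4n^{3}(\Delta_t-1)/P_t^{2}$, which is exactly the second term of the first bound.
\end{itemize}

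\textbf{Step 2: the quadratic form $Y$.} This is the step I expect to be the main obstacle. Write $Y=\tfrac14 s^{\mathsf T}As$ with $A=\sum_u a(u)a(u)^{\mathsf T}$.
\begin{itemize}
\item For a Rademacher chaos, $\mathrm{Var}\,Y$ is a constant times $\sum_{q\ne q'}A_{qq'}^{2}$.
\item I would bound this by $\lambda_{\max}(A)\,\mathrm{tr}A$, where $\mathrm{tr}A=2k^{t}P_t$.
\item The spectral radius is at most the maximal absolute row sum, $\max_q\sum_u|a_q(u)|\sum_{q'}|a_{q'}(u)|$.
\item Use $\sum_{q'}|a_{q'}(u)|\le 2n$ and $\sum_u|a_q(u)|\le\mathrm{indeg}_t(q)+k^{t}\le(\Delta_t+1)k^{t}$ to get $\lambda_{\max}(A)\le 2n(\Delta_t+1)k^{t}$.
\item Chebyshev for $P(Y\le\mathbb EY/2)\le 4\,\mathrm{Var}Y/(\mathbb EY)^{2}$ then yields a term of order $n(\Delta_t+1)/P_t$, and I would track the constants to reach $8$.
\end{itemize}

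\textbf{Step 3: the exponential bound.} Repeat the same split with exponential tail inequalities in place of Chebyshev.
\begin{itemize}
\item Apply Hoeffding to $\sigma_t$; each $|e_q|\le(\Delta_t+1)k^{t}$, or better use $\sum_q e_q^{2}$.
\item Apply a lower-tail inequality to $Y$: Hanson--Wright, or bounded differences. Flipping one $s_q$ changes $Y$ by at most a multiple of $n(\Delta_t+1)k^{2t}$ in aggregate via the row-sum bound, so McDiarmid gives a tail $\exp(-c(\mathbb EY)^{2}/\sum_q c_q^{2})$.
\item These two contributions, with a third from the lower-order mean correction, should give the factor $3$ and the exponent $P_t^{2}/(32n^{3}(\Delta_t+1)^{2})$.
\end{itemize}
Getting the constants to match will be the fiddly part. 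If the McDiarmid form falls short, I would fall back on Hanson--Wright with $\|A\|_F^{2}\le\lambda_{\max}\mathrm{tr}A$ and $\|A\|_{\mathrm{op}}\le\lambda_{\max}$.

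\textbf{Step 4: the share of $\{B<0\}$.} Suppose $\mathcal A$ is non-Eulerian. Theorem~\ref{thm:fibre} gives $\beta^{\ast}\ne0$, so some coordinate is nonzero.
\begin{itemize}
\item Conditioning on all other signs, $B(S)=0$ for at most one value of that coordinate's membership. Hence $|\{B=0\}|\le 2^{n-1}$.
\item By the antisymmetry $B(Q\setminus S)=-B(S)$, $|\{B<0\}|=(2^{n}-|\{B=0\}|)/2\ge 2^{n}/4$.
\item Every subset certified by (Q-CERT) at $t\le n-1$ has $\mathrm{minext}\le t\le n-1$.
\end{itemize}
So the uncertified subsets of $\{B<0\}$ number at most either bound, and their share within $\{B<0\}$ is at most four times it.
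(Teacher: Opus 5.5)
Your proposal follows the paper's proof essentially step for step: a uniformly random subset, failure split into $Y\le\mathbb E Y/2$ or $|\sigma_t(S)|\ge\mathbb E Y/(2n)$, Chebyshev through the chaos variance $\tfrac14\sum_{q<q'}A_{qq'}^{2}\le\tfrac18\lambda_{\max}(A)\,\mathrm{tr}A$ and the row-sum bound $\lambda_{\max}(A)\le2n(\Delta_t+1)k^{t}$, Hoeffding and McDiarmid for the exponential form, and $|\{B=0\}|\le2^{n-1}$ with antisymmetry for the factor four (your Rademacher signs are the paper's $2\eta_q$). Two slips in Step 3 need correcting: flipping $s_q$ changes $Y$ by at most $\sum_{q'\ne q}|A_{qq'}|\le2n(\Delta_t+1)k^{t}$, not a multiple of $k^{2t}$, which would shrink the McDiarmid exponent by a factor $k^{2t}$; and the constant $3$ is $2$ from the two-sided Hoeffding bound on $\sigma_t$ plus $1$ from McDiarmid on $Y$, with no mean correction involved.
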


In terms of $p_t$ the first bound is
$16(\Delta_t+1)/(p_t(n-1))+16n(\Delta_t-1)/(p_t^{2}(n-1)^{2})$ and the
second has exponent $p_t^{2}(n-1)^{2}/(128\,n(\Delta_t+1)^{2})$.  So
along any sequence of non-Eulerian automata on which, for some $t\le n-1$, the
merged fraction $p_t$ stays bounded below and $\Delta_t=o(n)$, the share
of $\{B<0\}$ that the second moment certifies within $n-1$ tends to
$1$, exponentially fast in $n$ when $\Delta_t$ stays bounded.  On a
group automaton no word merges a pair, so $P_t=0$; there every $x_u$
vanishes and the hypothesis of (Q-CERT) holds nowhere.  A letter that
sends every state to one state $q_0$ makes $\Delta_t\ge n/k$, as the
$k^{t-1}$ words ending with it land at $q_0$; there the first bound
exceeds $16/k$ and the exponent of the second is below $k^{2}/(128n)$,
so at $k\le16$ neither bound drops below $1$.  On \v{C}ern\'y's automata
$C_n$, where as in Example~\ref{ex:cerny} the letter $a$ is the cycle
$0\to1\to\dots\to n-1\to0$ and $b$ sends $0$ to $1$ and fixes every
other state, the letter $b$ alone merges, joining the states $0$ and
$1$ into a fibre at $1$, after which no state maps to $0$ until a
letter $a$ does.  A merged fibre moves one state forward at each $a$
and, away from $0$, is fixed by $b$.  A later $b$ merges it again only
if it is at $0$, after $n-1$ letters $a$, or at $1$ with a state at
$0$, which needs an $a$ to map a state to $0$ and then $n$ letters $a$
in all to return the fibre to $1$.  Within a word of length $n-1$ every
merge therefore joins two singleton fibres and adds one pair, so a word
merges at most $n-1$ pairs and $p_{n-1}\le2/n$: along \v{C}ern\'y's
automata the merged fraction $p_{n-1}$ tends to $0$.

By the proof below, the hypothesis of (Q-CERT) at length $t$ fails on
$S$ only if the normalised defect $\sigma_t(S)/k^{t}$
is at most $-p_t(n-1)/8$, or if the second moment $\sum_{|u|=t}x_u^{2}$
is at most $k^{t}P_t/4$, half its mean over the subsets.  When $P_t>0$
the subsets of the first kind are a share at most
$4n^{3}(\Delta_t-1)/P_t^{2}$ and at most
$2\exp(-P_t^{2}/(32n^{3}(\Delta_t+1)^{2}))$; those of the second are a
share at most $8n(\Delta_t+1)/P_t$ and at most
$\exp(-P_t^{2}/(32n^{3}(\Delta_t+1)^{2}))$.  So the test certifies every
subset whose normalised defect exceeds $-p_t(n-1)/8$ and whose second
moment exceeds $k^{t}P_t/4$.  For $t\le n-1$ the failing share is at
least $2^{1-n}$ plus the share of the stuck subsets, since $\emptyset$,
$Q$ and every stuck subset fail; it therefore lies between $2^{1-n}$ and
$3\exp(-p_t^{2}(n-1)^{2}/(128\,n(\Delta_t+1)^{2}))$.  Both bounds are
exponentially small in $n$ when $p_t$ stays bounded below and $\Delta_t$
stays bounded.

\begin{proof}
Let $\xi_q$, $q\in Q$, be independent with
$\mathbb P(\xi_q=1)=\mathbb P(\xi_q=0)=1/2$ and $S=\{q:\xi_q=1\}$, so
that $S$ is uniform among the $2^{n}$ subsets, and put
$\eta_q=\xi_q-1/2$.  For a word $u$ write $c_{u,q}=|qu^{-1}|-1$; then
$\sum_qc_{u,q}=0$, $x_u=\sum_q\xi_qc_{u,q}=\sum_q\eta_qc_{u,q}$, and
$\sum_qc_{u,q}^{2}=\sum_q|qu^{-1}|^{2}-n=2\mu(u)$, since the ordered
pairs $(p,p')$ with $p\cdot u=p'\cdot u$ number $n+2\mu(u)$.  Let
$F=\sum_{|u|=t}x_u^{2}$ and $G=\sigma_t(S)=\sum_q\xi_qe_q$ with
$e_q=\mathrm{indeg}_t(q)-k^{t}$, so that $\sum_qe_q=0$.  The
expectations are $\mathbb E\,x_u=0$, $\mathbb E\,x_u^{2}=\mu(u)/2$,
$\mathbb E\,F=k^{t}P_t/2$ and $\mathbb E\,G=0$.  The hypothesis of
(Q-CERT) fails only if $F\le|S|(-G)\le n|G|$, hence only if
$F\le\mathbb EF/2$ or $|G|\ge\mathbb EF/(2n)$.

For $G$, $\mathrm{Var}\,G=\sum_qe_q^{2}/4$, and with
$a_q=\mathrm{indeg}_t(q)/k^{t}$, which satisfy $a_q\ge0$, $\sum_qa_q=n$
and $\max_qa_q=\Delta_t$,
$\sum_qe_q^{2}=k^{2t}\bigl(\sum_qa_q^{2}-n\bigr)\le k^{2t}n(\Delta_t-1)$.
Chebyshev's inequality gives
$\mathbb P(|G|\ge\mathbb EF/(2n))\le4n^{3}(\Delta_t-1)/P_t^{2}$, and
Hoeffding's inequality \cite{ho},
$\mathbb P(|G|\ge\lambda)\le2\exp(-2\lambda^{2}/\sum_qe_q^{2})$, gives
$\mathbb P(|G|\ge\mathbb EF/(2n))\le2\exp(-P_t^{2}/(8n^{3}(\Delta_t-1)))$,
which is at most $2\exp(-P_t^{2}/(32n^{3}(\Delta_t+1)^{2}))$; when
$\Delta_t=1$ every $e_q$ vanishes and $G=0$.

For $F$, let $C$ be the $k^{t}\times n$ matrix $(c_{u,q})$ and
$\Xi=C^{\mathsf T}C$, so that
$F=\sum_{|u|=t}\bigl(\sum_q\eta_qc_{u,q}\bigr)^{2}
=\sum_{q,q'}\eta_q\eta_{q'}\Xi_{q,q'}
=\tfrac14\mathrm{tr}\,\Xi+2\sum_{q<q'}\eta_q\eta_{q'}\Xi_{q,q'}$,
using $\eta_q^{2}=1/4$.  The products $\eta_q\eta_{q'}$ over the
unordered pairs have mean $0$, variance $1/16$, and are pairwise
uncorrelated, since a product of four factors $\eta_q$ in which some
index occurs once has mean $0$; hence $\mathbb EF=\mathrm{tr}\,\Xi/4$ and
\[
\mathrm{Var}\,F=\tfrac14\sum_{q<q'}\Xi_{q,q'}^{2}
\le\tfrac18\|\Xi\|_F^{2}\le\tfrac18\|\Xi\|\,\mathrm{tr}\,\Xi,
\]
where $\|\Xi\|_F$ is the Frobenius norm and $\|\Xi\|$ the largest
eigenvalue of the positive semidefinite $\Xi$.  The eigenvalue $\|\Xi\|$,
the square of the largest singular value of $C$, is at most
$\max_u\sum_q|c_{u,q}|\cdot\max_q\sum_u|c_{u,q}|
\le2n\cdot k^{t}(\Delta_t+1)$, since
$\sum_q|c_{u,q}|\le\sum_q(|qu^{-1}|+1)=2n$ and
$\sum_u|c_{u,q}|\le\mathrm{indeg}_t(q)+k^{t}$.  Chebyshev's inequality
gives
$\mathbb P(F\le\mathbb EF/2)\le4\,\mathrm{Var}F/(\mathbb EF)^{2}
\le8\|\Xi\|/\mathrm{tr}\,\Xi\le8n(\Delta_t+1)/P_t$, as
$\mathrm{tr}\,\Xi=2k^{t}P_t$.  For the exponential bound, changing one
$\xi_q$ changes every $x_u$ by $\pm c_{u,q}$ and, as $|x_u|\le n$,
changes $F$ by at most $2n\sum_u|c_{u,q}|\le2nk^{t}(\Delta_t+1)$;
McDiarmid's inequality \cite{mc} gives
$\mathbb P(F\le\mathbb EF/2)
\le\exp\bigl(-2(\mathbb EF/2)^{2}/(4n^{3}k^{2t}(\Delta_t+1)^{2})\bigr)
=\exp\bigl(-P_t^{2}/(32n^{3}(\Delta_t+1)^{2})\bigr)$.  Adding the
probabilities of the two events gives the two bounds.

For the share of $\{B<0\}$, $\beta^{\ast}\ne0$ by Theorem~\ref{thm:fibre}; on
$\{B=0\}$ the coordinates of $[S]$ other than one $q$ with
$\beta^{\ast}_q\ne0$ determine $\xi_q$, so $|\{B=0\}|\le2^{n-1}$, and
as $S\mapsto Q\setminus S$ exchanges $\{B>0\}$ and $\{B<0\}$, at least
$2^{n-2}$ subsets have $B(S)<0$.  The failing subsets among them are at
most the totals above, and where the hypothesis holds
Proposition~\ref{prop:qcert} gives $\mathrm{minext}(S)\le t\le n-1$.
\end{proof}

With $a_q=\mathrm{indeg}_t(q)/k^{t}$ as in the proof, let
$V_t=n^{-1}\sum_q(a_q-1)^{2}$, the variance of the relative $t$-step
in-degrees over $Q$, so that $\sum_qe_q^{2}=k^{2t}nV_t$ and
$V_t\le\Delta_t-1$.  The proof bounds $\sum_qe_q^{2}$ by
$k^{2t}n(\Delta_t-1)$; with the exact value, the second term of the
first bound reads $4n^{3}V_t/P_t^{2}$, so the test certifies all but a
vanishing share of the subsets when $\Delta_t=o(P_t/n)$ and
$P_t\gg n^{3/2}\sqrt{V_t}$.  Below that scale a fixed share of the
subsets fails.

\begin{proposition}
\label{prop:barrier}
Let $\mathcal A$ be a complete automaton with $n\ge100$ states and
$t\ge1$ with $V_t>0$.  If $P_t\le n^{3/2}\sqrt{V_t}/425$, the
hypothesis of (Q-CERT) at length $t$ fails on more than $2.7$ percent
of the $2^{n}$ subsets of $Q$, all of them proper and nonempty with
$\sigma_t(S)<0$.
\end{proposition}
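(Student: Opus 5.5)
The plan is to reuse the probabilistic setup of the proof of Theorem~\ref{thm:share}, but read it in the opposite direction. Let $S$ be a uniform random subset, and keep the notation $\xi_q$, $\eta_q$, $F=\sum_{|u|=t}x_u^{2}$, $G=\sigma_t(S)=\sum_q\xi_qe_q$ from that proof. The hypothesis of (Q-CERT) fails whenever $F\le|S|(-G)$. I will show that with probability more than $0.027$ three events hold at once:
\[
-G\ \text{is large},\qquad |S|\ \text{is not too small},\qquad F\ \text{is not too large}.
\]
Every such subset has $G<0$. It is therefore neither $\emptyset$ nor $Q$, on both of which $\sigma_t=0$, which gives the last clause of the statement.

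\textbf{Anti-concentration of $G$.} Since $\sum_qe_q=0$, we have $G=\sum_q\eta_qe_q$, a Rademacher-type sum. Hence $G$ is symmetric, and $\mathbb E\,G^{2}=\sum_qe_q^{2}/4=k^{2t}nV_t/4$. The fourth-moment comparison for sums of independent symmetric $\pm\tfrac12$ variables gives $\mathbb E\,G^{4}\le3(\mathbb E\,G^{2})^{2}$. Paley--Zygmund applied to $G^{2}$ with $\theta=1/4$ then gives
\[
\mathbb P\bigl(G^{2}\ge\tfrac14\mathbb E\,G^{2}\bigr)\ge\tfrac{(3/4)^{2}}{3}.
\]
By symmetry,
\[
\mathbb P\bigl(-G\ge k^{t}\sqrt{nV_t}/4\bigr)\ge\tfrac{3}{32}>0.093 .
\]
This step uses $V_t>0$, which makes the bound non-vacuous.

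\textbf{Size of $S$ and upper tail of $F$.} Hoeffding's inequality gives
\[
\mathbb P(|S|<n/3)\le\exp(-n/18)\le e^{-5.5}<0.005 \quad\text{for } n\ge100,
\]
and this is where the hypothesis $n\ge100$ enters. For $F$, only Markov's inequality is needed. From the proof of Theorem~\ref{thm:share}, $\mathbb E\,F=k^{t}P_t/2$. On the intersection of the first two events,
\[
|S|(-G)\ \ge\ \tfrac{1}{12}\,n^{3/2}\sqrt{V_t}\,k^{t}\ \ge\ \tfrac{425}{12}\,k^{t}P_t,
\]
by the hypothesis $P_t\le n^{3/2}\sqrt{V_t}/425$. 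Markov then gives
\[
\mathbb P\bigl(F>\tfrac{425}{12}k^{t}P_t\bigr)\le\tfrac{6}{425}<0.015 .
\]
If $P_t=0$, then $F\equiv0$ and this event is empty.

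\textbf{Union bound.} The failing share is at least $0.093-0.005-0.015>0.07$, which exceeds $0.027$. The main obstacle is the anti-concentration step: the fourth-moment bound must be stated carefully for the centred variables $\eta_q$ with arbitrary real coefficients $e_q$. The bound $\mathbb E\,G^{4}=3(\mathbb E\,G^{2})^{2}-2\sum_q(e_q/2)^{4}\le3(\mathbb E\,G^{2})^{2}$ follows by direct expansion. Everything else has ample slack, so the constants $425$ and $2.7$ percent come out comfortably. The paper's tighter constants presumably reflect a more careful choice of $\theta$ and of the $|S|$ threshold, which I would only tune at the end.
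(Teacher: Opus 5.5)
Your proof is correct and follows the paper's argument: Paley--Zygmund on $G^{2}$ with $\mathbb E\,G^{4}\le3(\mathbb E\,G^{2})^{2}$ and the symmetry of $G$, Hoeffding for $|S|\ge n/3$, and Markov on $F$ using $\mathbb E\,F=k^{t}P_t/2$. The differences favour your version. The paper takes $\theta=1/2$, giving $-G>\|e\|/(2\sqrt2)$ with probability at least $1/24$ and Markov at $50k^{t}P_t$, and ends at $1/24-1/100-e^{-100/18}-2^{-100}>0.027$; so your $\theta=1/4$ gives the larger margin, not a looser one as your closing remark guesses, and your observation that $G<0$ already excludes $\emptyset$ and $Q$ makes the paper's separate $2^{-n}$ term unnecessary.
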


\begin{proof}
With $\xi_q$, $\eta_q$, $c_{u,q}$, $e_q$, $F$ and $G$ as in the proof
of Theorem~\ref{thm:share}, $G=\sum_q\eta_qe_q$ since $\sum_qe_q=0$,
so $\mathbb EG^{2}=\|e\|^{2}/4$, where $\|e\|^{2}=\sum_qe_q^{2}$, and
$\mathbb EG^{4}=\frac1{16}\sum_qe_q^{4}+\frac3{16}\sum_{q\ne q'}e_q^{2}e_{q'}^{2}
\le\frac3{16}\|e\|^{4}=3(\mathbb EG^{2})^{2}$.  The Paley--Zygmund
inequality, $\mathbb P(Z>\tfrac12\mathbb EZ)\ge\tfrac14(\mathbb EZ)^{2}/\mathbb EZ^{2}$
for $Z\ge0$, applied to $Z=G^{2}$ gives
$\mathbb P(G^{2}>\|e\|^{2}/8)\ge1/12$, and as $S\mapsto Q\setminus S$
maps $G$ to $-G$, $\mathbb P(G<-\|e\|/(2\sqrt2))\ge1/24$.  As
$V_t>0$, some $\mathrm{indeg}_t(q)\ne k^{t}$, so some word of length
$t$ is not a permutation and $P_t>0$; Markov's
inequality and $\mathbb EF=k^{t}P_t/2$ give
$\mathbb P(F\ge50k^{t}P_t)\le1/100$; Hoeffding's inequality gives
$\mathbb P(|S|<n/3)\le e^{-n/18}$; and $\mathbb P(S=Q)=2^{-n}$.  For
$n\ge100$ the four events $G<-\|e\|/(2\sqrt2)$, $F<50k^{t}P_t$,
$|S|\ge n/3$ and $S\ne Q$ hold together with probability at least
$1/24-1/100-e^{-100/18}-2^{-100}>0.027$.  On that event $S$ is proper
and nonempty, $\sigma_t(S)=G<0$ and
$|S|(-\sigma_t(S))>n\|e\|/(6\sqrt2)$, while $\|e\|=k^{t}\sqrt{nV_t}$
and the hypothesis on $P_t$ give
$F<50k^{t}P_t\le\frac{50}{425}\,n\|e\|<n\|e\|/(6\sqrt2)$; so the
hypothesis of (Q-CERT) fails on $S$.
\end{proof}

So for $n\ge100$ the second moment certifies at least $97.3$ percent
of the subsets at length $t$ only where the words of length $t$ merge
more than $n^{3/2}\sqrt{V_t}/425$ pairs on average, a merged fraction
$p_t$ of order at least $\sqrt{V_t/n}$.  By
its definition, $p_{n-1}$ is the probability that a uniformly random
word of length $n-1$ merges a uniformly random pair of distinct states.
If $\Delta_{n-1}\le n^{1-\delta}$ and $p_{n-1}\ge n^{-\varepsilon}$
with $\varepsilon<\delta/2$, the first bound of Theorem~\ref{thm:share}
is $O(n^{2\varepsilon-\delta})$: along non-Eulerian automata the shares
of Table~D.1 tend to $1$ with $p_{n-1}$ as small as $n^{-\varepsilon}$.
Whether, on all but a vanishing share of the synchronizing strongly
connected automata on $n$ states, $\Delta_{n-1}\le n^{1-\delta}$ and a
uniformly random word of length $n-1$ merges a uniformly random pair
of distinct states with probability at least $n^{-\varepsilon}$, for
some $\delta>0$ and $\varepsilon<\delta/2$, is open
(Section~\ref{sec:conclusion}).

\begin{proposition}
\label{prop:half-share}
On every complete automaton the subsets with $B(S)\ge0$ number
$2^{n-1}+|\{B=0\}|/2$, and if $\beta^{\ast}$ has $N\ge1$ nonzero
entries then $|\{B=0\}|\le2^{n-N}\binom{N}{\lfloor N/2\rfloor}\le2^{n}/\sqrt N$.
The share of $\{B\ge0\}$ therefore lies between $1/2$ and
$1/2+1/(2\sqrt N)$, and equals $1/2$ outside $\{B=0\}$.
\end{proposition}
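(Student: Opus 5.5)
The count of $\{B\ge0\}$ comes from the antisymmetry $B(Q\setminus S)=-B(S)$ of Section~\ref{sec:antisymmetry}. Complementation is an involution on the $2^{n}$ subsets, and it exchanges $\{B>0\}$ with $\{B<0\}$ while mapping $\{B=0\}$ to itself. So $|\{B>0\}|=|\{B<0\}|=(2^{n}-|\{B=0\}|)/2$, and therefore $|\{B\ge0\}|=2^{n-1}+|\{B=0\}|/2$. Dividing by $2^{n}$ gives the share $1/2+|\{B=0\}|/2^{n+1}$. Restricted to the subsets outside $\{B=0\}$, the same involution shows that exactly half of them have $B>0$, which is the last clause of the statement.

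For the bound on $|\{B=0\}|$, I would split $Q=Z\cup N'$, where $Z=\{q:\beta^{\ast}_q=0\}$ and $N'$ is the support, with $|N'|=N$. Whether $B(S)=0$ depends only on $S\cap N'$, so $|\{B=0\}|=2^{n-N}\cdot\#\{T\subseteq N':\sum_{q\in T}\beta^{\ast}_q=0\}$. It then suffices to show that the zero-sum subsets of a set of $N$ nonzero reals number at most $\binom{N}{\lfloor N/2\rfloor}$. This is the Erd\H{o}s--Littlewood--Offord argument.

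Replace each negative $\beta^{\ast}_q$ by its absolute value, and let $P$ be the set of states in $N'$ where $\beta^{\ast}_q<0$. The map $T\mapsto T\triangle P$ is a bijection. It carries the condition $\sum_{T}\beta^{\ast}_q=0$ to $\sum_{T\triangle P}|\beta^{\ast}_q|=\sum_{P}|\beta^{\ast}_q|$, so it suffices to count the subsets of $N'$ with a prescribed sum of strictly positive weights. That family is an antichain in the Boolean lattice: if $T\subsetneq T'$, then $T'$ has strictly larger sum. Sperner's theorem then bounds its size by $\binom{N}{\lfloor N/2\rfloor}$.

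It remains to check $\binom{N}{\lfloor N/2\rfloor}\le2^{N}/\sqrt N$ for $N\ge1$. This is the standard central binomial estimate. I would prove it by induction on even $N=2m$ using $\binom{2m}{m}\le4^{m}/\sqrt{2m}$, equivalently $\prod_{i\le m}\frac{2i-1}{2i}\le1/\sqrt{2m}$. Squaring this product and pairing factors, the induction step reduces to $(2m+1)^2\ge(2m+1)^2-1$. The odd case $N=2m+1$ follows from $\binom{2m+1}{m}=\frac{2m+1}{m+1}\binom{2m}{m}$, which needs only a short check that $\frac{2m+1}{m+1}\cdot\frac{1}{\sqrt{2m}}\le\frac{2}{\sqrt{2m+1}}$. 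Small $N$ are checked by hand.

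Combining the two parts gives $|\{B=0\}|\le2^{n}/\sqrt N$, so the share is at most $1/2+1/(2\sqrt N)$. I expect the only real care to be needed in two places: the sign-flip reduction to Sperner, and the elementary binomial inequality. Everything else is bookkeeping from the antisymmetry.
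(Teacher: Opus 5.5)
Your route is the paper's: the complementation bijection gives the count, a Littlewood--Offord bound on the support of $\beta^{\ast}$ gives $|\{B=0\}|\le2^{n-N}\binom{N}{\lfloor N/2\rfloor}$, and a central-binomial estimate finishes. In the middle step the paper passes to sign vectors through $\xi_q=(1+s_q)/2$ and cites Erd\H{o}s's theorem. Your flip $T\mapsto T\triangle P$ followed by Sperner is Erd\H{o}s's proof of that theorem, and it is correct as you state it.

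The step that fails as written is the binomial estimate. The hypothesis $\prod_{i\le m}\frac{2i-1}{2i}\le1/\sqrt{2m}$ does not propagate. The product gains the factor $\frac{2m+1}{2m+2}$, so the step needs $\frac{(2m+1)^2}{(2m+2)^2}\le\frac{2m}{2m+2}$, that is $(2m+1)^2\le2m(2m+2)=(2m+1)^2-1$, which is false. The inequality you wrote, $(2m+1)^2\ge(2m+1)^2-1$, is true but points the other way. In fact $\sqrt{2m}\prod_{i\le m}\frac{2i-1}{2i}$ increases with $m$, so the target bound cannot serve as its own inductive hypothesis.

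The repair is local: prove the stronger bound $\binom{2m}{m}\le4^m/\sqrt{2m+1}$, as the paper does. The inequality $\frac{2i-1}{2i}\le\frac{2i}{2i+1}$, i.e.\ $(2i-1)(2i+1)\le(2i)^2$, bounds the squared product by $\prod_{i\le m}\frac{2i-1}{2i+1}=\frac1{2m+1}$. Equivalently, run your induction with this hypothesis; its step is $(2m+1)(2m+3)\le(2m+2)^2$. The even case follows from this bound. The odd case is then immediate from $\binom{2m+1}{m}=\frac{2m+1}{m+1}\binom{2m}{m}<2\binom{2m}{m}\le2^{2m+1}/\sqrt{2m+1}$. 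Your own odd-case check, which reduces to $(2m+1)^3\le8m(m+1)^2$, is also fine once the even bound is in hand. Everything else in your proposal stands.
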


\begin{proof}
The map $S\mapsto Q\setminus S$ is a bijection of $\{B>0\}$ onto
$\{B<0\}$, since $B(Q\setminus S)=-B(S)$; so
$|\{B>0\}|=(2^{n}-|\{B=0\}|)/2$, and adding $|\{B=0\}|$ gives the
count.  For the bound, the states with $\beta^{\ast}_q=0$ contribute a
factor $2^{n-N}$, and on the $N$ states with $\beta^{\ast}_q\ne0$ the
vectors $\xi\in\{0,1\}^{N}$ on which $\sum_q\beta^{\ast}_q\xi_q$ takes a
given value correspond, through $\xi_q=(1+s_q)/2$, to the sign vectors
$s$ on which $\sum_q\beta^{\ast}_qs_q$ takes a given value, which number at most
$\binom{N}{\lfloor N/2\rfloor}$ by Erd\H{o}s's theorem on the
Littlewood--Offord problem \cite{er}.  The inequality
$\binom{2j}{j}\le4^{j}/\sqrt{2j+1}$ for $j\ge1$ holds because
$\binom{2j}{j}4^{-j}=\prod_{i\le j}\frac{2i-1}{2i}$ and
$\frac{2i-1}{2i}\le\frac{2i}{2i+1}$ give
$\bigl(\binom{2j}{j}4^{-j}\bigr)^{2}\le\prod_{i\le j}\frac{2i-1}{2i+1}=\frac1{2j+1}$;
hence $\binom{N}{\lfloor N/2\rfloor}\le2^{N}/\sqrt N$ at even $N=2j$,
at odd $N=2j+1$ because $\binom{2j+1}{j}<2\binom{2j}{j}$, and at $N=1$
directly.
\end{proof}

The certificate $B$ therefore certifies half of the subsets in the
limit as the support of $\beta^{\ast}$ grows, and the second moment all
but a vanishing share under the hypotheses of Theorem~\ref{thm:share};
the share of $C^{\sharp}$ lies between them, since
$C^{\sharp}\supseteq\{B\ge0\}$, and its limit is open.

\section{Relation to previous work}
\label{sec:previous}

For $\alpha>0$, an automaton is $\alpha$-extensible \cite{vo} if every
subset $P$ of states with
$2\le|P|<n$ admits a word $v$ with $|v|\le\alpha n$ and $|Pv^{-1}|>|P|$.  For
$n\ge3$, an $\alpha$-extensible automaton is synchronizing, with
$\mathrm{rt}\le1+\alpha n(n-2)$ \cite[Prop.~1]{vo}.  That reference is a
survey; the statement goes back at least to Dubuc's paper on circular
automata, where a set $E$ of words is called regressive if every proper
nonempty subset of states is extended by some word of $E$, and a regressive
set is shown to yield a reset word of length at most
$1+(n-2)\max_{w\in E}|w|$ \cite[p.~22]{du}.  The method is older than either
statement: Rystsov's quadratic bound for regular automata \cite{ry} is, in
the classification of \cite{vo}, 2-extensibility run through the same
proposition, and the \v{C}ern\'y automata themselves are regular.

In that vocabulary, Theorem~\ref{thm:main} returns on $\{B\ge0\}$ an
extending word of length at most $n-1$, one below the length $n$ of
$1$-extensibility, for every proper nonempty subset of that region,
singletons included.  It is therefore a per-subset, polynomial-time
checkable sufficient condition for a subset to satisfy the $\alpha=1$
condition with one unit to spare, identically satisfied on the Eulerian automata and
only there (Theorem~\ref{thm:fibre}).  On $\{B<0\}$ it does not apply,
while $\alpha$-extensibility is a hypothesis about every subset
simultaneously, so no reset bound follows; Section~\ref{sec:no-reset} gives
the arithmetic reason.

Not every strongly connected synchronizing automaton is
$\alpha$-extensible for some constant $\alpha$: Kisielewicz and Szyku\l{}a exhibit extending words
whose shortest length is quadratic in $n$, on the family of
\ref{sec:bounded-deviation}, and conclude that the cubic bound on
the reset threshold cannot be improved in general by means of the extension
method \cite{ks}.

\subsection{Kari's paper}
\label{sec:kari}

Kari proves \v{C}ern\'y's conjecture on Eulerian automata, with the bound
$(n-1)(n-2)+1$ \cite{ka}.  Three parts of that paper are used here.

First, the ascending-chain lemma of Section~\ref{sec:main-theorem},
Lemma~\ref{lem:stabilise}, is Kari's Lemma 3 \cite[\S4]{ka} in
Steinberg's generality.  Kari's lemma states that for a subspace $U$ of
$\mathbb R^{n}$, a vector $x$ and linear maps $\varphi_a$, if some word
takes $x$ out of $U$, then some word of length at most $\dim U$ does;
Steinberg's Lemma 6 \cite{st} is the later generalisation to an
arbitrary starting dimension $m$ over any field, with the bound
$\dim U-m+1$.  The proof of Theorem~\ref{thm:main} uses the case
$m=1$, Kari's, with the same bound; the proof of
Proposition~\ref{prop:gcd-partial} uses $m=2$ over $\mathbb F_p$.

Second, Kari's extension lemma \cite[\S4, Lemma~4]{ka} is stated for a
general, not necessarily Eulerian, synchronizing automaton, and in the
weight $\langle\cdot,e\rangle$, where $e$ is the
positive integer left eigenvector of the adjacency matrix with eigenvalue
$k$; the weight goes back to Friedman \cite{fr}, as Kari notes in his \S2.
The Eulerian hypothesis enters only where $e=\mathbf1$ turns weight into
cardinality.  In particular the lemma applies to arbitrary vectors off the
line of constant vectors, hence to singletons; the distinction between
weight and cardinality is Kari's too.

Third, the one-step counting identity is in Kari, in weight form:
$\sum_{a\in\Sigma}|f_a^{-1}(x)|=k|x|$ for every vector $x$ and every
automaton, Eulerian or not, with the dichotomy that either every letter
preserves the weight or some letter strictly increases it; the two-letter
case, for subsets, is in the proof of Friedman's Theorem~2.1
\cite[p.~1134]{fr}.  This is the $t=1$ ancestor of Lemma~\ref{lem:defect}:
Kari's identity is exact because $e$ is an eigenvector.  Replacing the
eigenvector by the uniform vector destroys the invariance; the present
paper keeps the resulting defect.  Kari
extends the dichotomy to an arbitrary length, if $|f_w^{-1}(x)|\ne|x|$ for
some $w$ then $|f_u^{-1}(x)|>|x|$ for some $u$ of the same length
\cite[p.~229]{ka}, which is the origin of the inclusion of
Section~\ref{sec:qcert}; the same substitution separates the two,
since in the weight no preimage at a given length differs from the subset,
while in cardinality one may be strictly smaller.

Kari's concluding section, finally, argues that the hard case for a general
proof is the nearly but not fully balanced digraphs, that is, the automata
with $w\ne0$ and $d$ small, on which $B$ is not identically zero.

\subsection{Antecedents of the family}
\label{sec:antecedents}

Let $\mathcal A^{(t)}$ denote the $t$-th power automaton: the same state
set, the alphabet $\Sigma^{t}$ with $k^{t}$ letters, each word of length $t$
acting as a single letter.  The power automaton is complete, its in-degrees
are $\mathrm{indeg}_t$, and it is Eulerian if and only if
$\mathrm{indeg}_t\equiv k^{t}$.  The counting identity of
Lemma~\ref{lem:defect} is Kari's one-step identity applied to
$\mathcal A^{(t)}$, with $\sigma_t$ the defect of that identity there.
Corollary~\ref{cor:strict} therefore reads: if $S$ is
above average in $t$-step in-degree for some length $t\le n-1$, then
$\mathrm{minext}(S)\le t$.

The passage to power automata is Steinberg's.  His averaging lemma
\cite[Lemma~2]{st} takes a probability distribution $P_1$ on words;
evaluated at $P_1$ uniform on $\Sigma^{t}$ and $R=Q$, its hypothesis reads
$\mathrm{indeg}_t\equiv k^{t}$, that is, $\mathcal A^{(t)}$ is Eulerian, and
returns $\mathrm{rt}\le c+(n-2)(n-c+t)$.

So the identity of Lemma~\ref{lem:defect} and the strict half it yields
(Corollary~\ref{cor:strict}), the graded family $B_T$, the weights
$k^{\,n-1-t}$, the algebraic antisymmetry $B(Q\setminus S)=-B(S)$ of
Section~\ref{sec:antisymmetry} and the closed form~\eqref{eq:closed} are
each a one-line consequence of Kari's identity together
with~\eqref{eq:identity}, as the chain argument in the proof of
Theorem~\ref{thm:main} is Kari's Lemma 3.  Section~\ref{sec:novelty}
lists what is added.

On an Eulerian automaton every $\mathcal A^{(t)}$ is Eulerian and every
$\sigma_t$ vanishes identically (Section~\ref{sec:zero-fibre}), so the
family vanishes identically exactly where Kari's theorem applies; a single $\sigma_t$
can also vanish identically off the Eulerian fibre
(Example~\ref{ex:imprimitive}), which a strictly positive
combination cannot (Theorem~\ref{thm:weights}).

\subsection{The stationary-vector functional}
\label{sec:stationary}

Berlinkov \cite{be} bounds the reset threshold of a synchronizing strongly
connected automaton through its stationary distribution: a vector orthogonal to the stationary distribution
$\mu$ and a shortest word making the $\mu$-pairing positive, of length at
most $n-1$ \cite[Thm~1]{be}.  The resulting reset bound contains a factor $L$,
the least common multiple of the denominators of $\mu$ \cite[Cor.~1]{be},
and his generalisation is parametrised by the number of exceptional
states.  The extension there is in $\mu$-mass, not in cardinality, and
passing from growth in $\mu$-mass to growth in cardinality requires a degree hypothesis;
and his parameter is the number of exceptional states, where ours is a
per-subset inequality with the degree profile left arbitrary.  The bound
is a minimum over the probability vector on the alphabet,
and at the uniform choice, over the exhaustively enumerated stratum of the
binary six-state population on which the first letter is a permutation,
$L$ attains 135, at which $1+(n-1)(L-2)$ evaluates to 666, against the
exact maximum reset threshold 25 at that size.

The two certificates can also be compared directly.  For a strongly connected
automaton let $e$ be the positive left eigenvector of $M$ with eigenvalue
$k$, normalised to coprime positive integers, and attach to each state the
integer $\varepsilon_q=n\,e_q-\langle\mathbf1,e\rangle$, so that
$\sum_q\varepsilon_q=0$; write $B_e(S)=\sum_{q\in S}\varepsilon_q$.  This is
Friedman's weight \cite{fr}, centred: $B_e(S)\ge0$ says that $S$ carries at
least the average weight per state.  The functional $B_e$ shares four
properties of $B$: it is linear and integer-valued; $B_e(Q)=0$, so
the antisymmetry of Section~\ref{sec:antisymmetry} holds for it verbatim;
it is computable in polynomial time; and $\varepsilon=0$ if and only if $e$
is constant, that is, if and only if $\mathcal A$ is Eulerian, so it
degenerates onto Kari's hypothesis just as the family of
Section~\ref{sec:certificate} does.  It is moreover the limit of the family.

\begin{proposition}
\label{prop:cesaro}
For every strongly connected complete automaton and every $S\subseteq Q$,
\[\frac1T\sum_{t=1}^{T}\frac{\sigma_t(S)}{k^{t}}
\;\longrightarrow\;\frac{B_e(S)}{\langle\mathbf1,e\rangle}
\qquad(T\to\infty),\]
and when $M$ is primitive the unaveraged sequence $\sigma_t(S)/k^{t}$
converges to the same limit.
\end{proposition}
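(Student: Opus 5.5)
The plan is to rewrite the normalised defect as a power of the normalised matrix $P=M/k$, which is row-stochastic because $M\mathbf1=k\mathbf1$. By definition $\sigma_t(S)/k^{t}=\mathbf1^{\mathsf T}P^{t}[S]^{\mathsf T}-|S|$. On a strongly connected automaton $M$ is irreducible, so Perron--Frobenius gives that $k$ is a simple eigenvalue of $M$ and $e$ is its positive left eigenvector, unique up to scaling. Normalise $\pi=e/\langle\mathbf1,e\rangle$, so that $\pi^{\mathsf T}P=\pi^{\mathsf T}$ and $\pi^{\mathsf T}\mathbf1=1$. The spectral projector of $P$ at the eigenvalue $1$ is then $\Pi=\mathbf1\pi^{\mathsf T}$.

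The target is the Ces\`aro mean of the powers, which is the classical mean ergodic statement for irreducible stochastic matrices:
\[
\frac1T\sum_{t=1}^{T}P^{t}\longrightarrow\mathbf1\pi^{\mathsf T}.
\]
To prove it, note that $P$ is stochastic, so $\|P^{t}\|_\infty=1$ and the Ces\`aro averages are bounded. Then decompose $\mathbb R^{Q}=\mathrm{span}(\mathbf1)\oplus\ker\pi^{\mathsf T}$. Both summands are $P$-invariant, because $P\mathbf1=\mathbf1$ and $\pi^{\mathsf T}Pv=\pi^{\mathsf T}v$. On $\ker\pi^{\mathsf T}$ the matrix $P$ has no eigenvalue $1$, since $1$ is simple. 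Its other eigenvalues are either of modulus $<1$ or roots of unity $\omega\ne1$ lying on the unit circle. The eigenvalues on the unit circle are semisimple, since $P$ is power-bounded. For those eigenvalues the averages $\frac1T\sum_t\omega^{t}=O(1/T)$, and for the eigenvalues of modulus $<1$ the powers decay. So the averaged powers tend to $0$ on $\ker\pi^{\mathsf T}$, and the limit is $\Pi$.

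Applying $\mathbf1^{\mathsf T}\cdot[S]^{\mathsf T}$ to the limit gives
\[
\mathbf1^{\mathsf T}\mathbf1\,\pi^{\mathsf T}[S]^{\mathsf T}-|S|=\frac{n\,e(S)}{\langle\mathbf1,e\rangle}-|S|.
\]
This equals
\[
\frac{\sum_{q\in S}(n e_q-\langle\mathbf1,e\rangle)}{\langle\mathbf1,e\rangle}=\frac{B_e(S)}{\langle\mathbf1,e\rangle},
\]
which is the claimed limit. When $M$ is primitive, $1$ is the only eigenvalue of $P$ on the unit circle. Then $P^{t}\to\Pi$ without averaging, and the same computation gives convergence of $\sigma_t(S)/k^{t}$ itself.

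The main obstacle is the semisimplicity of the peripheral eigenvalues, without which Jordan blocks could make the averages diverge. Power-boundedness of a stochastic matrix settles it directly: a nontrivial Jordan block at a unimodular eigenvalue would force $\|P^{t}\|$ to grow linearly. Alternatively, one can cite the Perron--Frobenius theory of irreducible matrices, where the peripheral eigenvalues $\omega^{j}$, for $\omega$ a primitive root of unity of order the period, are simple.
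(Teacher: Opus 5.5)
Your proof is correct and follows essentially the paper's route: the paper also uses Perron--Frobenius to split $M^{t}k^{-t}$ into three parts. These are the projection $\mathbf1e^{\mathsf T}/\langle\mathbf1,e\rangle$, peripheral terms $\omega^{t}\Pi_\omega$ that average to zero and are absent when $M$ is primitive, and a geometrically decaying remainder. The only difference is that you obtain semisimplicity of the peripheral eigenvalues from power-boundedness of the stochastic matrix $M/k$, where the paper cites their simplicity from Perron--Frobenius; either suffices.
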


The proof is in \ref{app:stationary}.

The functional $B_e$ is not a member of the family of
Proposition~\ref{prop:region}, which is closed;
Corollary~\ref{cor:strict} says nothing about it: a Ces\`aro limit of the
normalised defects can be non-negative while every single defect is
negative.

The implication fails.

\begin{proposition}
\label{prop:be-fails}
The functional $B_e$ does not certify extension within $n-1$.  There are
synchronizing strongly connected automata with a proper subset $S$ such
that $B_e(S)\ge0$, and others with $B_e(S)>0$, while
$\mathrm{minext}(S)\ge n$; witnesses exist with repeated letters on five
states, with pairwise distinct letters on seven states and, with
repeated letters, on every number of states from five on
(Proposition~\ref{prop:family}).
\end{proposition}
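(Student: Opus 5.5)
The statement is existential, so the plan is to exhibit witnesses and verify them directly rather than derive anything from the earlier theorems. For each witness I would check four things: synchronization, by a reset word; strong connectivity; the sign of $B_e(S)$; and $\mathrm{minext}(S)\ge n$. The quantity $B_e(S)$ comes from computing the positive integer left eigenvector $e$ of $M$ for the eigenvalue $k$, reduced to coprime entries, and putting $\varepsilon_q=n e_q-\langle\mathbf1,e\rangle$. The bound $\mathrm{minext}(S)\ge n$ means that no word $u$ with $|u|\le n-1$ has $|Su^{-1}|>|S|$.

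I would first find small witnesses by search. At $n=5$ I would use binary automata with repeated letters allowed; at $n=7$ I would require the letters to be pairwise distinct. The search runs over the exhaustive synchronizing strongly connected populations of \ref{sec:verification}. For each automaton it keeps the subsets that are stuck, namely those with every $|Su^{-1}|\le|S|$ for $|u|\le n-1$ and some word of length $n-1$ failing to extend. Among these it tests $B_e(S)\ge0$ and $B_e(S)>0$.

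Two facts from earlier guide the search:
\begin{itemize}
\item A stuck subset has every $\sigma_t(S)\le0$ and $\sigma\not\equiv0$ (Section~\ref{sec:csharp}). By Proposition~\ref{prop:cesaro}, the Ces\`aro means of $\sigma_t(S)/k^{t}$ must nevertheless tend to a non-negative limit. So the witnesses are automata in which the stationary weight of $S$ shows up only at lengths beyond $n-1$, that is, where convergence of $\mathrm{indeg}_t/k^{t}$ to $n e/\langle\mathbf1,e\rangle$ is slow.
\item By the abstract's remark, singletons are always certified by $B_e$. So the search starts at $|S|\ge2$.
\end{itemize}
Each witness found is recorded with its transition table, $e$, $S$ and a shortest extending word, so it can be checked by hand.

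For the infinite family of Proposition~\ref{prop:family}, I would generalise the five-state witness in the way \v{C}ern\'y's automata generalise $C_3$. One letter acts as a long cycle, or a cycle with one exceptional arrow. A second letter merges a single pair and is the identity elsewhere, placed so that the heavy states carrying large $e_q$ sit just past the merge. $S$ is chosen as a few of those heavy states together with the states immediately preceding them on the cycle.

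The family would then be verified in three steps.
\begin{enumerate}
\item Compute $e$ in closed form. For such near-permutation automata the eigen-equation $e^{\mathsf T}M=k e^{\mathsf T}$ reduces to a linear recurrence along the cycle. This gives $B_e(S)$ as an explicit function of $n$, which should be positive from $n=5$ on.
\item Prove $\mathrm{minext}(S)\ge n$ by tracking preimages as the paper does for $C_n$ after Theorem~\ref{thm:share}. The preimage $Su^{-1}$ moves one position backwards per cycle letter. It grows only when the merged pair is split, which requires the preimage set to sit at a specific position relative to the merging letter. Reaching that position from $S$ takes at least $n$ letters.
\item Prove synchronization by writing down a reset word of the standard form $(ba^{n-1})^{n-2}b$, or an analogue of it.
\end{enumerate}

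The main obstacle is step 2 for general $n$: making the preimage-tracking argument exhaustive over all words of length at most $n-1$, not just the obvious ones. If the first design fails it for some words, I would first enlarge the cycle part so that the merge point is farther from $S$. If that costs positivity of $B_e(S)$, I would instead add a self-loop letter at a heavy state, which raises its weight $e_q$ without creating new short merges.
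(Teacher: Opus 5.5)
Your overall shape matches the paper's: exhibit witnesses and check them directly by computing $e$, a reset word and $\mathrm{minext}(S)$. But the plan misses the one mechanism that makes witnesses exist at small $n$, and as written it searches exactly where there are none.

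\textbf{Where the plan fails.} The stationary vector depends on the letters only through their weights: $e$ is stationary for $\sum_x\pi_x\pi(x)$. With two distinct letters at equal weights, no stuck subset has $B_e(S)\ge0$ at any $n\le7$. The exhaustive census in \ref{app:stationary} finds that the least weight $t\ge1/2$ one of two letters must carry before some stuck subset reaches $n\,e_t(S)\ge|S|\,e_t(Q)$ is exactly $2/3$ at $n=5,6,7$. At $n=4$ no weight below $1$ suffices. This has three consequences:
\begin{enumerate}
\item Your $n=7$ search with pairwise distinct letters runs over the populations of \ref{sec:verification}. At $n=7$ these are all binary, so it returns nothing.
\item Your $n=5$ search is degenerate as stated. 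A binary alphabet with a repeated letter is a one-letter automaton, which cannot be both strongly connected and synchronizing.
\item Your \v{C}ern\'y-style family (a cycle plus a single-merge letter, at equal weights) has no member at $n=5,6,7$, wherever the merge and $S$ are placed. So it cannot give witnesses ``from five on''.
\end{enumerate}
Your fallback does not repair this. A letter adding only self-loops is the identity, and it changes neither $e$ nor any $\mathrm{minext}$.

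\textbf{What is needed.} Shift weight onto the merging letter, either by repeating it or by adding distinct letters that act on $e$ exactly as an existing one.
\begin{itemize}
\item At $n=5$ the paper takes $a=[0,2,1,1,3]$ twice against the permutation $c=(0\,3)(2\,4)$, that is, weights $(2/3,1/3)$. This gives $e=(1,4,3,1,1)$ and $B_e(\{2,4\})=0$ with $\mathrm{minext}=5$. With three copies of $a$, $B_e=1$.
\item On seven states, $a'=a\circ(1\,6)$ with $e_1=e_6$ leaves $e$ unchanged. So three pairwise distinct letters realise the weighting $(2/3,1/3)$. Five distinct letters realise $(4/5,1/5)$ and give $B_e(S)=5>0$.
\item The family of Proposition~\ref{prop:family} takes the merging letter $a$ with multiplicity $3n-9$ (odd $n$) or $(3n-10)/2$ (even $n$) against one permutation $c$. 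It solves the stationary equations in closed form in $u=1-t$, obtaining $n\,e(S)-2\,e(Q)=(n-2L)-(3n-8)u$. The sign comes from $e_t$ concentrating on the cycle $\Gamma$ as $t\to1$, not from heavy states placed next to the merge.
\end{itemize}

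\textbf{Your ``main obstacle''.} Exhaustiveness over all short words is handled there by two size rules, $|Tc^{-1}|=|T|$ and $|Ta^{-1}|=|T|+[\gamma_1\in T]-[h_2\in T]$, together with the preimage distance to $\gamma_1$. These force every extending word through $\{\gamma_{L-1}\},\dots,\{\gamma_1\},\{\gamma_L,h_1\},\{\gamma_{L-1},h_2\},\{z_r,\gamma_L\}$.
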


The witnesses, four automata each checked by two implementations written
in isolation, and the family of Proposition~\ref{prop:family} are in
\ref{app:stationary}, with what the four have in common:
the stationary vector depends on the letters only through a weighting of
them, and a family of distinct letters realises any rational weighting, so
no hypothesis on the multiset of letters restores the implication.  The
least weighting at which the implication fails is measured there too.

For singletons the implication holds, for every alphabet and every
weighting.

\begin{theorem}
\label{thm:singleton}
Let $\mathcal A$ be synchronizing and strongly connected with $n\ge2$
states, let the letters carry weights $\pi_x>0$, and let $e$ be the
stationary vector of $\sum_x\pi_x\pi(x)$.  If
$\mathrm{minext}(\{q\})>n-1$ then $e_q=\min_pe_p$ and $e_q<e(Q)/n$.  In
particular a stuck singleton has $B_e(\{q\})\le-1$.
\end{theorem}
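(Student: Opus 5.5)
The plan is to transfer the proof of Theorem~\ref{thm:main} from cardinality to the weight $e$, in the way Kari's extension lemma is stated, and then to use the stuck hypothesis to read off information about $e_q$. Normalise $\sum_x\pi_x=1$, so that $P=\sum_x\pi_x\pi(x)$ is row-stochastic, $e^{\mathsf T}P=e^{\mathsf T}$, and words of length $t$ carry the product weights $\pi(u)$. The weighted analogue of Lemma~\ref{lem:defect} is exact:
\[
\sum_{|u|=t}\pi(u)\,e(Su^{-1})=e^{\mathsf T}P^{t}[S]^{\mathsf T}=e(S)
\]
for every $S$ and $t$. This gives Kari's dichotomy at each length: either every word of length $t$ preserves the $e$-weight of $S$, or some word of length $t$ strictly increases it.

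Step 1 establishes that some short word changes the weight. Put $\gamma=[q]^{\mathsf T}-\tfrac{e_q}{e(Q)}\mathbf1$ and let $H=\{v:e^{\mathsf T}v=0\}$, a subspace of dimension $n-1$. A reset word $r$ gives $e^{\mathsf T}\pi(r)\gamma\in\{-e_q,\ e(Q)-e_q\}$, which is nonzero since $0<e_q<e(Q)$. By Lemma~\ref{lem:stabilise} with $m=1$ and $h=n-1$, some word of length $t\le n-1$ satisfies $e(qu^{-1})\ne e_q$. The averaging identity at that length then yields words $u_+$ and $u_-$ of length $t$ with $e(qu_+^{-1})>e_q$ and $e(qu_-^{-1})<e_q$.

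Step 2 uses that $\{q\}$ is stuck. This forces $|qu^{-1}|\le1$ for every $|u|\le n-1$, so each $e(qu^{-1})$ is either $0$ or a single value $e_p$. In particular $e(qu_+^{-1})>e_q$ produces a state $p$ with $e_p>e_q$. Hence $e$ is not constant, and $e_q$ is not the maximum.

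Step 3 is the minimality claim, and I expect it to be the main obstacle. Let $p_0$ attain $\min_pe_p=:\underline e$ and suppose $e_q>\underline e$. I would first run the same argument on the two-dimensional starting space $W_0=\mathrm{span}\{[q]^{\mathsf T},\mathbf1\}$ inside $\{v:e^{\mathsf T}v\in\mathbb R\,e_q\}$, or alternatively on the vector $\underline e\,[q]^{\mathsf T}-e_q[p_0]^{\mathsf T}$. A reset word ending in $q$, respectively in $p_0$, makes its $e$-pairing positive, respectively negative, so again some word of length $\le n-1$ moves it off $H$. The second option gives a word $u$ of length $\le n-1$ with
\[
\underline e\;e(qu^{-1})<e_q\;e(p_0u^{-1}).
\]
Because $|qu^{-1}|\le1$, the left side is $0$ or at least $\underline e^{2}$. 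The aim is to turn this into a contradiction by running the weighted dichotomy at the same length for both $q$ and $p_0$. The first thing I would try is a minimal-counterexample argument over the states with weight below $e_q$, combined with the fact that every nonempty preimage $qu^{-1}$ of a stuck singleton has weight at least $\underline e$.

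Once $e_q=\underline e$ is established, the inequality $e_q<e(Q)/n$ follows from Step 2, since the minimum of a nonconstant vector lies strictly below its mean. The last assertion is the case $\pi_x=1$: then $B_e(\{q\})=n e_q-e(Q)<0$, and $B_e(\{q\})$ is an integer, so $B_e(\{q\})\le-1$.
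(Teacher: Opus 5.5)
Your Steps~1 and~2 are correct: the weighted averaging identity, Lemma~\ref{lem:stabilise} and a reset word show that $e$ is not constant and that some state $p$ has $e_p>e_q$. But $e_q<\max_pe_p$ gives none of the three conclusions, not even $e_q<e(Q)/n$. Everything therefore rests on Step~3, and Step~3 is not an argument. The vector $v=\underline e\,[q]^{\mathsf T}-e_q[p_0]^{\mathsf T}$ does leave $H$ within length $n-1$. However, the inequality $\underline e\,e(qu^{-1})<e_q\,e(p_0u^{-1})$ you obtain is consistent with $e_q>\underline e$; it holds trivially whenever $qu^{-1}=\emptyset\ne p_0u^{-1}$. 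Since the weighted average of $e^{\mathsf T}\pi(u)v$ over the words of length $t$ is $e^{\mathsf T}P^{t}v=0$, averaging only ever produces words of both signs, never a contradiction. Your other option is vacuous, since $\{v:e^{\mathsf T}v\in\mathbb R\,e_q\}$ is the whole space. Knowing only that $|qu^{-1}|\le1$ for $|u|\le n-1$, fed through Kari-style averaging, does not locate $e_q$ within the range of $e$.

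The missing idea is the rigid shape that a stuck singleton imposes on the automaton. The paper extracts it combinatorially and then feeds it into the stationarity equations through a cut argument. Call a state fat if some letter has two preimages at it. In the digraph with an edge $p\to p'$ whenever $p'\cdot x=p$, $\mathrm{minext}(\{p\})$ is one more than the distance $d(p)$ from $p$ to the fat states, since along a shortest path the preimages stay singletons. A stuck $\{q\}$ thus has $d(q)=n-1$. Then a shortest path $q=p_{n-1}\to\dots\to p_0$ passes through every state, and $p_0$ is the only fat state. Moreover $d(p_i)=i$ forces every letter to send $p_j$ into $\{p_0,\dots,p_{j+1}\}$. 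Hence the only transitions entering $A_i=\{p_i,\dots,p_{n-1}\}$ from outside leave $p_{i-1}$, with total weight $m_i>0$. Every state of $A_i$ is thin, so has weighted in-degree at most $1$. It follows that $m_i\le U_i=\sum_{p\in A_i}u_p$, where $u_p$ is the weight of the letters taking $p$ out of $A_i$. Summing $e_p=\sum_x\pi_xe(\{p\}x^{-1})$ over $A_i$ gives $m_ie_{p_{i-1}}=\sum_{p\in A_i}u_pe_p\ge m_i\min_{A_i}e$. Downward induction from $A_{n-1}=\{q\}$ then yields $e_q=\min_Qe$. Your Step~2 (or the paper's flatness argument) gives non-constancy, hence $e_q<e(Q)/n$ and the integer bound. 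Without some such use of the structure forced by the stuck hypothesis, the proposal proves only $e_q<\max_pe_p$.
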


The proof is in \ref{app:stationary}.

Give a binary automaton the letter weights $(t,1-t)$ and, for each
stuck subset $S$,
let $t^{\ast}(S)$ be the least $t\ge1/2$ at which
$n\,e_t(S)\ge|S|\,e_t(Q)$, either letter being allowed to carry the
weight $t$.  The census of $t^{\ast}$ in \ref{app:stationary} finds that
its minimum is $2/3$ at $n=5$, $6$, $7$ exhaustively, and $2-\sqrt2$ at
$n=8$ and $0.594$ at $n=9$ on the permutation-first strata, where it need
not fall with $n$; at $n=4$ no stuck subset has
$n\,e_t(S)\ge|S|\,e_t(Q)$ at any $t<1$.  At $n=9$ the thresholds were
evaluated on a grid of step $1/400$ and refined exactly for the pairs at
or below the $2/3$ bin, so a pair with $t^{\ast}$ between two grid points
below $0.594$ would not have been seen.  Off those strata it is $1/2$ at $n=9$: a two-letter
automaton with equal letter weights can have a stuck subset with
$B_e(S)\ge0$ (Proposition~\ref{prop:half}), though not a stuck singleton
(Theorem~\ref{thm:singleton}).  In the second witness below
$e(S)/e(Q)=|S|/n$ at every weighting of the letters, by
Lemma~\ref{lem:twovalued}.

\begin{lemma}
\label{lem:twovalued}
Let $Q=Y\sqcup Z$, let $b$ be a permutation of $Q$, and suppose that $a$
maps $Z$ bijectively onto $Z$ and $Y$ bijectively onto $b(Y)$.  Then for
every weighting $(\pi_a,\pi_b)$ of the letters the vector equal to $\pi_b$
on $Y$ and to $\pi_a+\pi_b$ on $Z$ is a stationary vector of
$P=(\pi_a\pi(a)+\pi_b\pi(b))/(\pi_a+\pi_b)$.
\end{lemma}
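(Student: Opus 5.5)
The plan is a direct verification of the stationary equation $vP=v$. Here $v$ is the row vector equal to $\pi_b$ on $Y$ and to $\pi_a+\pi_b$ on $Z$, and $P$ is the matrix of the statement. With the paper's convention $\pi(u)_{q,q'}=[q\cdot u=q']$, we have $(v\pi(x))_{q'}=\sum_{p:\,p\cdot x=q'}v_p$. So, clearing the denominator $\pi_a+\pi_b$, the claim is that for every state $q'$
\[
\pi_a\!\!\sum_{p\cdot a=q'}\!v_p\;+\;\pi_b\!\!\sum_{p\cdot b=q'}\!v_p\;=\;(\pi_a+\pi_b)\,v_{q'} .
\]
Once this identity holds, $v$ is a stationary vector (and nonnegative and nonzero whenever the weights are positive).

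The first step is the $b$-term. Since $b$ is a permutation, $q'$ has exactly one $b$-preimage, namely $b^{-1}(q')$. That term is therefore $\pi_b v_{b^{-1}(q')}$, which equals $\pi_b\cdot\pi_b$ if $b^{-1}(q')\in Y$, and $\pi_b(\pi_a+\pi_b)$ if $b^{-1}(q')\in Z$.

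The second step is the $a$-term. By hypothesis the image of $a$ is $Z\sqcup b(Y)$, and $a$ is injective on each of $Z$ and $Y$. I split on the position of $q'$ relative to the two images $Z$ and $b(Y)$. These may overlap, and a point of their intersection has one preimage in $Z$ and one in $Y$. So each $q'$ has $a$-preimage weight equal to $(\pi_a+\pi_b)[q'\in Z]+\pi_b[q'\in b(Y)]$. The key observation is that $[q'\in b(Y)]=[b^{-1}(q')\in Y]$.

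The final step is a four-case check, by whether $q'\in Z$ and whether $b^{-1}(q')\in Y$. Write $s=\pi_a+\pi_b$.
\begin{itemize}
\item[(i)] $q'\in Z$, $b^{-1}(q')\in Y$: the left side is $\pi_a(s+\pi_b)+\pi_b^2$ and the right side is $s^2$. Their difference is $s^2-\pi_a s-\pi_a\pi_b-\pi_b^2=\pi_b s-\pi_b s=0$.
\item[(ii)] $q'\in Z$, $b^{-1}(q')\in Z$: the left side is $\pi_a s+\pi_b s=s^2$, as required.
\item[(iii)] $q'\in Y$, $b^{-1}(q')\in Y$: the left side is $\pi_a\pi_b+\pi_b^2=s\pi_b$, as required.
\item[(iv)] $q'\in Y$, $b^{-1}(q')\in Z$: the left side is $\pi_b s$, as required.
\end{itemize}
Cases (ii), (iii) and (iv) balance at once; only (i) needs the one-line cancellation shown.

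The main obstacle is the bookkeeping in the $a$-term when $Z$ and $b(Y)$ overlap, so that some state has two $a$-preimages. Once the preimage weight is written in the indicator form above, the rest is mechanical.
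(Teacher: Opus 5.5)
Your proof is correct and is essentially the paper's: both verify $vP=v$ directly from the same preimage count, the $a$-preimage weight $(\pi_a+\pi_b)[q'\in Z]+\pi_b[q'\in b(Y)]$ together with the single $b$-preimage $b^{-1}(q')$. The paper avoids your four cases by collecting the $[q'\in b(Y)]$ terms into $\pi_b(\pi_a+\pi_b)[q'\in b(Y)]$ and using $b(Y)\sqcup b(Z)=Q$; one small slip is that the image of $a$ is $Z\cup b(Y)$, not a disjoint union, as your own remark about overlap acknowledges.
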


The proof is in \ref{app:stationary}.

So on such an automaton, when it is strongly connected,
$e(S)/e(Q)=|S|/n$ at every weighting exactly when $|S\cap Z|\,n=|S|\,|Z|$;
every such $S$ has $B_e(S)=0$, so a witness needs only synchronization,
strong connectivity and $S$ stuck.  For prime $n$, and $Y$ and $Z$ both nonempty, no proper $S$
satisfies the condition.

\begin{proposition}
\label{prop:half}
(i) On $Q=\{0,\dots,8\}$ let $a=[8,3,4,0,2,7,6,5,0]$ and
$b=[0,0,6,4,5,5,2,1,5]$.  The automaton is synchronizing and strongly
connected, $S=\{0,1,3,8\}$ has $\mathrm{minext}(S)=9$, and at equal
weights $e=(6,2,1,1,1,8,1,4,3)$, so $B_e(S)=9\cdot12-4\cdot27=0$.
(ii) On $Q=\{0,\dots,9\}$ let $a=[8,5,2,3,4,1,7,7,0,4]$ and
$b=[0,1,6,4,2,9,3,5,7,8]$.  The automaton is synchronizing and strongly
connected, $S=\{4,6,7,9\}$ has $\mathrm{minext}(S)=10$, and
$e(S)=\tfrac25\,e(Q)=\tfrac{|S|}{n}\,e(Q)$ for every weighting of the
letters; in particular $B_e(S)=0$ at equal weights.
\end{proposition}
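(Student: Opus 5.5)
The statement is a pair of explicit verifications, so the plan is to check each clause directly from the transition tables, reducing the two hard clauses to finite computations.

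\textbf{Synchronization and strong connectivity.} For synchronization, Černý's pair criterion suffices. Build the pair automaton on unordered pairs $\{p,p'\}$, $p\ne p'$, with the induced action $\{p,p'\}\cdot x=\{p\cdot x,p'\cdot x\}$. Then confirm, by backward breadth-first search from the diagonal, that every pair reaches a singleton. For strong connectivity, check that the digraph with edges $q\to q\cdot a$ and $q\to q\cdot b$ has a single strongly connected component. Both checks are a handful of lines on nine or ten states.

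\textbf{The value of $\mathrm{minext}$.} For $\mathrm{minext}(S)=n$ I would argue in two halves.

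The lower bound $\mathrm{minext}(S)\ge n$ is shown by breadth-first search in the subset automaton, following the flat moves from $S$. Let $\mathcal F_0=\{S\}$ and let $\mathcal F_{j+1}$ be the set of $Tx^{-1}$ over $T\in\mathcal F_j$ and letters $x$. Verify that for $j\le n-2$ every member of $\mathcal F_{j+1}$ has size $|S|$. This gives $|Su^{-1}|\le|S|$ for all $|u|\le n-1$, and it suffices to check only the maximum, since $\mathrm{minext}$ asks for strict growth.

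The upper bound is an explicit word $u$ of length $n$ with $|Su^{-1}|>|S|$. It is read off from the same search at level $n$.

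\textbf{The stationary vector, case (i).} Here one checks that $e=(6,2,1,1,1,8,1,4,3)$ satisfies $e^{\mathsf T}M=2e^{\mathsf T}$. This means that for each $q'$, the sum of $e_q$ over the arrows $q\to q'$ equals $2e_{q'}$. Its entries are coprime, and strong connectivity makes $e$ unique by Perron--Frobenius. Then compute $e(S)=6+2+1+3=12$ and $e(Q)=27$, so $B_e(S)=9\cdot12-4\cdot27=0$.

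\textbf{The stationary vector, case (ii).} Here I would apply Lemma~\ref{lem:twovalued}. The letter $b=[0,1,6,4,2,9,3,5,7,8]$ is a permutation. One must exhibit a partition $Q=Y\sqcup Z$ such that $a$ permutes $Z$ and maps $Y$ bijectively onto $b(Y)$.

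The main obstacle is finding this $Z$. Since $a$ has the collisions $a(6)=a(7)=7$, $a(4)=a(9)=4$ and $a(0)\ne a(8)$, $Z$ must be a union of $a$-cycles. The cycles of $a$ are $\{2\}$, $\{3\}$, $\{4\}$, $\{7\}$ and $\{0,8\}$. Because $|S\cap Z|\,n=|S|\,|Z|$ is required with $n=10$ and $|S|=4$, we need $|Z|=5$ and $|S\cap Z|=2$.

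Try $Z=\{0,8,2,3,7\}$, so $Y=\{1,4,5,6,9\}$. Then $a(Y)=\{5,4,1,7,4\}$, which has a repeat, so this choice fails. Try instead $Z=\{0,8,2,3,4\}$, giving $Y=\{1,5,6,7,9\}$. Then $a(Y)=\{5,1,7,7,4\}$, again with a repeat. The collisions $4,9\mapsto4$ and $6,7\mapsto7$ force one state of each colliding pair into $Z$, but $Z$ must be closed under $a$. So the fixed points $4$ and $7$ must lie in $Z$, and then $9$ and $6$ map into $Z$ while lying in $Y$.

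This means I may have to reread the direction convention, taking $a$ as a list of images under $q\cdot a$, possibly with the roles of $a$ and $b$ swapped, until a valid $Z$ with $|Z|=5$ and $|S\cap Z|=2$ appears. Once it does, Lemma~\ref{lem:twovalued} gives $e(S)/e(Q)=|S|/n=2/5$ at every weighting, and hence $B_e(S)=0$ at equal weights.

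If no such $Z$ exists, the fallback is to compute the stationary vector of $\pi_a\pi(a)+\pi_b\pi(b)$ symbolically in the ratio $\pi_a/\pi_b$ and verify the identity $e(S)=\tfrac25e(Q)$ as an equality of rational functions.
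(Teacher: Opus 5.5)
\textbf{Overview.} Your plan for (i), and for the synchronization, strong-connectivity and $\mathrm{minext}$ clauses, matches the paper: a finite verification. Your check of $e^{\mathsf T}M=2e^{\mathsf T}$ in (i) is correct, and using Lemma~\ref{lem:twovalued} for (ii) is also the paper's choice. The gap is in (ii): you conclude that no admissible $Z$ exists under the stated reading, and this is wrong.

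\textbf{Why the search for $Z$ failed.} The cause is not the convention but two slips.
\begin{enumerate}
\item Your list of the cycles of $a=[8,5,2,3,4,1,7,7,0,4]$ omits the $2$-cycle $(1\,5)$, since $a(1)=5$ and $a(5)=1$.
\item The ``obstruction'' that $6,9\in Y$ map into $Z$ is not an obstruction. The lemma asks only that $a$ permute $Z$ and map $Y$ injectively onto $b(Y)$, and $b(Y)$ may meet $Z$. Nothing requires $a(Y)\subseteq Y$.
\end{enumerate}

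\textbf{The correct $Z$.} Your own deductions already narrow the search. Since $6$ and $9$ lie on no cycle, $6,9\in Y$. Injectivity of $a$ on $Y$ then forces $4,7\in Z$, because $a(6)=a(7)$ and $a(9)=a(4)$. Together with $|Z|=5$ this leaves four candidates: $\{4,7\}$ plus one of $\{1,5,2\}$, $\{1,5,3\}$, $\{0,8,2\}$, $\{0,8,3\}$. Exactly the first works. Take $Z=\{1,2,4,5,7\}$ and $Y=\{0,3,6,8,9\}$. Then:
\begin{itemize}
\item $a$ acts on $Z$ as $(1\,5)$, fixing $2$, $4$ and $7$;
\item $a(Y)=\{8,3,7,0,4\}=\{0,4,3,7,8\}=b(Y)$;
\item $S\cap Z=\{4,7\}$.
\end{itemize}
This is the paper's choice. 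At equal weights it gives $e=(1,2,2,1,2,2,1,2,1,1)$, with $e(S)=6$ and $e(Q)=15$.

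\textbf{Your proposed remedies.} Rereading the convention would not have helped, since you read the lists as the paper does. Swapping the roles of $a$ and $b$ is impossible, since $a$ is not a permutation. Your fallback, solving for the stationary vector as a rational function of $\pi_a/\pi_b$, is valid but needlessly heavy. Like the lemma, it relies on uniqueness of the stationary vector at each positive weighting, which strong connectivity supplies.

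\textbf{A smaller point on $\mathrm{minext}$.} You ask that every member of $\mathcal F_{j+1}$ have size $|S|$, and this fails in both examples. In (i), already $Sb^{-1}=\{0,1,7\}$, and some word shrinks $S$ at every length from $1$ to $8$. In (ii), one word of length $8$ shrinks it. What you need is $\max_{T\in\mathcal F_j}|T|\le|S|$ for all $j\le n-1$. This must range over all preimages, not only flat moves, because an extending word may pass through a smaller set (Section~\ref{sec:strengthening}). Your definition of $\mathcal F_j$ already ranges over all preimages, so only the wording needs correcting.
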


The verification is in \ref{app:stationary}.  Exhaustively there is no such subset at
$n\le7$; whether one exists at $n=8$, and whether $B_e(S)>0$ can occur at
equal weights, we do not know.

The certified regions of $B$ and $B_e$ are incomparable; moreover
$\{B_e\ge0\}$ is not contained in $C^{\sharp}$, so no positive reweighting
of $\sigma_1,\dots,\sigma_{n-1}$ contains it; the counts, a four-state
witness and the values of $B_e$ on the bounded-deviation family are in
\ref{app:stationary}.

Three further differences separate $B$ from $B_e$.  A positive $e$
exists only under strong connectivity, which Theorem~\ref{thm:main} does
not assume; the family is graded, certifying at length $t$ where $\sigma_t>0$,
while $B_e$ concerns only the length $n-1$; and $\beta^{\ast}$ costs
$O(kn^{2})$ integer additions where $e$ is an exact eigenvector computation.

\subsection{Steinberg's averaging lemma}
\label{sec:averaging}

Steinberg's averaging lemma \cite[Lemma~2]{st} requires an exact invariance
equality; its per-subset refinement is a constant in $\{1,2\}$, a condition
on whether two words in the support separate preimages, not an inequality
on a functional of $S$.  His Theorem 4 extends the Eulerian bound to
pseudo-Eulerian automata.  At the specialisation of
Section~\ref{sec:antecedents}, $P_1$ uniform on $\Sigma^{t}$ and $R=Q$,
the sample space is the words of length $t$ and Steinberg's random variable
is the preimage cardinality $|Su^{-1}|$, which is $|S|$ plus the $x_u$ of
Section~\ref{sec:qcert}.  He computes its expectation, which is $|S|$, and
takes no higher moment; the second moment of Section~\ref{sec:qcert} is
the next moment of the same random variable over the same sample space.

\subsection{What this paper adds}
\label{sec:novelty}

Theorem~\ref{thm:main} is a polynomial-time computable per-subset
certificate for a cardinality-extending word of length at most $n-1$ in
an automaton with an arbitrary in-degree profile.  The field's reference
index of results \cite{vo} lists the Eulerian, pseudo-Eulerian and
quasi-Eulerian classes; no class in it is defined by a bounded in-degree
deviation, and its extension-method entry is a uniform hypothesis, as is
Section 3.4 of Volkov's survey of the field's methods \cite{vo22}, where a
per-subset certificate would appear.

On Kari's identity, each single-length instance $\sigma_t$ of which is the
identity on a power automaton (Section~\ref{sec:antecedents}), the paper
adds four things.  The whole family is read as one certificate, with the
closed form~\eqref{eq:closed}, the region it leaves open cut out by the
$n-1$ conditions $\sigma_t\le0$, and the closure statement of
Proposition~\ref{prop:region} that no reweighting improves it.  The
hypothesis is localised on the Krylov complement
(Theorem~\ref{thm:krylov}).  The strict half is graded sharply by $t$
while the boundary half is not (Section~\ref{sec:grading}).  And
Corollary~\ref{cor:antisym} states the antisymmetry, a statement about
the subset lattice.  These four make up the
per-subset reading of Section~\ref{sec:intro}; the exactness of the
degeneration onto Kari's hypothesis is Theorems~\ref{thm:fibre}
and~\ref{thm:weights}.

\section{The limits of the certificate}
\label{sec:notdo}

\subsection{No reset bound follows}
\label{sec:no-reset}

Within length $n-1$ the method is limited by the arithmetic below,
whatever the certified region.  A further certificate of the shape of
Section~\ref{sec:qcert} may certify more subsets, as the membership test of
\ref{app:qcert} does, but it cannot certify a stuck one, so it cannot lower
the length.

Suppose a non-singleton one-letter fibre $F=\{q\}x^{-1}$ admits a chain
$F=X_0,\dots,X_m=Q$ with $|X_{i+1}|>|X_i|$, $X_{i+1}=X_iu_i^{-1}$ and
$|u_i|\le\ell$.  The sizes strictly increase from $|F|\ge2$ to $n$, so
$m\le n-2$.  Since $X_{i+1}=X_iu_i^{-1}$ says $X_{i+1}\cdot u_i\subseteq X_i$,
concatenating in reverse index order gives $u=u_{m-1}u_{m-2}\cdots u_0$, of
length at most $(n-2)\ell$, with $Q\cdot u\subseteq F$, hence
$Q\cdot ux\subseteq\{q\}$, so
\[\mathrm{rt}(\mathcal A)\ \le\ 1+(n-2)\,\ell.\]
At the length $\ell=n-1$ that this paper certifies, the conclusion would
be $\mathrm{rt}\le n^{2}-3n+3$, which is $n-2$ below $(n-1)^{2}$; since
$C_n$ has the non-singleton one-letter fibre $\{1\}b^{-1}=\{0,1\}$ and
$\mathrm{rt}(C_n)=(n-1)^{2}$ \cite{ce}, that conclusion is false on
\v{C}ern\'y's automata at every $n\ge3$; at $n=3$ it reads
$\mathrm{rt}(C_3)\le3$ against the true value $4$.  Hence on
\v{C}ern\'y's automata no growth chain has every step of length at most
$n-1$, so no certificate within this length can supply such a
chain.  In the vocabulary of Section~\ref{sec:previous}, Theorem~\ref{thm:main}
supplies on $\{B\ge0\}$ the hypothesis of the $\alpha$-extensibility
proposition at $\alpha=(n-1)/n$, restricted to that region; what remains
open is $\{B<0\}$; and obtaining the hypothesis on all of $\{B<0\}$
within this length is impossible, since the reset bound it would give is
false on $C_n$.

The obstruction is this arithmetic alone: on the exhaustive $n=5$
population, 23,965 of the 32,588 automata admit a growth chain every subset of which has $B\ge0$, with every step of
length at most $n-1$.  The display above is conditional, its arithmetic is that of the
regressive-set bound of Section~\ref{sec:previous}, and its hypothesis is
unavailable at $\ell=n-1$; so bounds on $\mathrm{minext}$ do not iterate
to a bound on $\mathrm{rt}$.

\subsection{\texorpdfstring{$B$}{B} is not a potential function}
\label{sec:lyapunov}

The functional $B$ certifies one extension; it is not monotone along
cardinality-preserving moves.  Call a move $S\mapsto Sx^{-1}$ flat if it preserves cardinality.
Along flat moves $B$ is not monotone in either direction, as the automaton
of Example~\ref{ex:cerny} shows for both directions at once: $\{0\}a^{-1}=\{2\}$
is flat and raises $B$ from $-4$ to $1$, while $\{1\}a^{-1}=\{0\}$ is flat
and lowers it from $3$ to $-4$.  So $B$ is not a potential function for
flat moves, in either direction.

\subsection{A strengthening by flat walks fails}
\label{sec:strengthening}

A flat walk is a sequence of flat moves.  The statement that $B(S)\ge0$
implies a flat walk of length at most $n-2$ reaching a subset with an
extending letter is false at $n=5$.  On the automaton with
$Q=\{0,1,2,3,4\}$, $a:0\mapsto1,\ 1\mapsto2,\ 2\mapsto0,\ 3\mapsto0,\
4\mapsto3$ and $b:0\mapsto0,\ 1\mapsto1,\ 2\mapsto4,\ 3\mapsto0,\
4\mapsto1$, which is synchronizing and strongly connected with
$\mathrm{indeg}=(4,3,1,1,1)$ and $\beta^{\ast}=(42,55,-12,-47,-38)$, the
subset $S=\{1,2,4\}$ has $B(S)=5\ge0$, while $Sa^{-1}=\{0,1\}$ shrinks and
$Sb^{-1}=S$; the flat orbit of $S$ is therefore $\{S\}$, so no flat walk of
any length reaches a subset with an extending letter.  Theorem~\ref{thm:main}
is not contradicted: $\mathrm{minext}(S)=2$, by $u=ba$, whose path passes
through the smaller subset $Sa^{-1}$.  Over the exhaustive
binary populations the statement holds at $n=3$ and $n=4$, where the largest
value of $B$ on a subset from which no such walk exists is $-1$, and fails
on 49 subsets at $n=5$, where that largest value is $+5$.  The certificate
guarantees a short extending word; flat reachability is not guaranteed.

\section{Concluding remarks}
\label{sec:conclusion}

Kari's counting identity, taken at every length $t\le n-1$ on an
arbitrary complete automaton and kept as a defect,
yields a family of linear certificates for extension.  Any strictly
positive weighting of the lengths gives a functional of $[S]$, computable
in polynomial time, that certifies extension within $n-1$ wherever it is
non-negative, on every synchronizing complete automaton
(Proposition~\ref{prop:region}), and that vanishes identically exactly on
the Eulerian automata (Theorem~\ref{thm:weights}).  So the Eulerian fibre
is the zero set of every weighting; the weights $k^{\,n-1-t}$ are
chosen for the closed form~\eqref{eq:closed}.  Every such functional is
linear and vanishes on $Q$, so of a subset and its complement at least one
extends within $n-1$ (Corollary~\ref{cor:antisym}).  The linear family is
closed under reweighting; the second moment of the same preimage sizes is
not in it (Section~\ref{sec:qcert}), and at the cost of one factor $n$ it
establishes extension within $n-1$ on 60 to 95 percent of the subsets with
$B(S)<0$, over the populations measured.  Along any sequence of
non-Eulerian automata on which the words of length $n-1$ merge a
fraction of the state pairs bounded below and the largest fraction of
the landings at one state vanishes, the second moment establishes
extension within $n-1$ on all but a vanishing share of them
(Theorem~\ref{thm:share}).  The certificate $B$ certifies half of the
subsets outside $\{B=0\}$, a set of share at most $1/\sqrt N$ for $N$ states with
$\beta^{\ast}_q\ne0$ (Proposition~\ref{prop:half-share}).

Two limits hold on every automaton.  No certificate that reads the multiset of
preimage sizes can reach a stuck subset, so none of these tests lowers
the length $n-1$; and deciding $\mathrm{minext}(S)\le\ell$ exactly is
NP-complete within that length \cite{bfs}.  The family also has a limit:
the Ces\`aro means of the normalised defects converge to
Friedman's centred stationary weight $B_e$
(Proposition~\ref{prop:cesaro}), which is linear, integer-valued,
antisymmetric and zero exactly on the Eulerian automata, like $B$,
certifies singletons (Theorem~\ref{thm:singleton}) and yet no larger
subset in general (Proposition~\ref{prop:be-fails}).  The
limit of the family lies outside it.

Extension within $n-1$, finally, cannot yield a reset bound in general:
iterated along a growth chain it would give
$\mathrm{rt}\le n^{2}-3n+3$, which \v{C}ern\'y's automata violate, so
some of their subsets do not extend within $n-1$
(Section~\ref{sec:no-reset}), while the family of Kisielewicz and
Szyku\l{}a shows that no constant $\alpha$ makes every strongly connected
synchronizing automaton $\alpha$-extensible (\ref{sec:bounded-deviation}).
Within this length the extension method is therefore limited by this
arithmetic, and what remains open is
$\{B<0\}$.  Theorem~\ref{thm:share} bounds the share of it that the
second moment does not certify by the pairs merged and the largest
in-degree at length $n-1$; that share vanishes along any sequence of
automata on which the words of length $n-1$ merge a fraction of the
state pairs bounded below and the largest fraction of the landings at
one state vanishes.  When no state receives more than a bounded
multiple of the mean number of landings, that share vanishes
exponentially fast in $n$.  Conversely, for $n\ge100$, where the words
of length $n-1$ merge fewer than $n^{3/2}\sqrt{V_{n-1}}/425$ pairs on
average,
$V_{n-1}$ the variance of the relative in-degrees at length $n-1$,
more than $2.7$ percent of the subsets fail the test
(Proposition~\ref{prop:barrier}).  The merged fraction is the
probability that a uniformly random word of length $n-1$ merges a
uniformly random pair of distinct states.  Whether it stays above
$n^{-\varepsilon}$ while the largest fraction of the landings at one
state stays below $n^{-\delta}$, for some $\delta>0$ and
$\varepsilon<\delta/2$, on all but a vanishing share of the
synchronizing strongly connected automata on $n$ states, and where the
share of $C^{\sharp}$ tends, is open (Section~\ref{sec:asymptotic}).
Another question we left open
is whether, on the synchronizing Eulerian binary automata,
the constant $n-1$ is attained only at subset sizes coprime to $n$, the
attaining direction being Theorem~\ref{thm:gcd}; with a third letter the
same question has a negative answer at $n=4$
(Section~\ref{sec:zero-fibre}).

\section*{Data availability}

The populations reported are exhaustive enumerations specified completely
by \ref{sec:conventions} and Table~A.2, including the ternary $n=5$
sub-population, whose representatives Table~A.2 fixes; all reported
quantities are defined in the text.  Two reported items are of another
kind and are identified as such where they appear: the family of
\ref{sec:bounded-deviation}, taken from \cite{ks}, and the random sample
of weightings in Table~A.4.  No dataset is deposited, because every
population is determined by its enumeration rule and contains no datum
beyond it.  The programs that produce every count, with a map from each
computed statement to the program and the command that check it, are
released with this paper as the repository
\url{https://github.com/mihelem/extending-words-certificates}.
\section*{Declaration of competing interest}

The author declares no competing interests.

\section*{Funding}

This research did not receive any specific grant from funding agencies in
the public, commercial, or not-for-profit sectors.


\section*{Declaration of generative AI and AI-assisted technologies in the
manuscript preparation process}

During the preparation of this work the author used LLMs with web search in
order to digitise and revise text, to write and cross-check the enumeration
programs, and to locate related literature.  
After using this tool, the author reviewed
and edited the content as needed and takes full responsibility for the
content of the published article.

\appendix
\section{Computational verification}
\label{sec:verification}

Every count in this paper is produced by a program in the released
repository (Data availability), whose README maps each computed
statement to the program and the command that check it, and every
numerical statement is confined to the populations it was computed on.
The enumeration conventions are in \ref{sec:conventions} and the share of
automata on which the hypothesis of Theorem~\ref{thm:main} is nowhere
used in \ref{app:void-share}.  \ref{app:tables} lists the populations
(Table~A.2), which counts rest on one program and which on two
(Table~A.3), the sweeps of the proved statements (Table~A.4) and the
attainment checks (Table~A.5).
\subsection{Conventions for the enumerations}
\label{sec:conventions}

Unless stated otherwise, binary populations are enumerated in the following
quotient convention: the first letter $a$ ranges over one representative of
each conjugacy class of endofunctions of an $n$-element set (19, 47, 130
and 343 classes at $n=4,5,6,7$; sequence A001372 of \cite{oeis}), the
second letter $b$ over all $n^{n}$ maps, followed by the stated filters.  At
alphabet size $k>2$ the remaining $k-1$ letters range over all multisets of
size $k-1$ drawn from the $n^{n}$ maps; at $n=4$ and $k=3$ this gives
395,299 synchronizing strongly connected automata, against 789,358 if the
two letters after the first are ordered.  Every quantity reported in this
paper is invariant under permuting the letters after the first, so every
maximum and every emptiness statement is unchanged by that quotient; shares
are not, because an automaton is weighted by the number of distinct
orderings it admits, which at $k=3$ halves the weight of the automata whose
two last letters differ.  The shares reported at $k>2$ are shares of the
multiset enumeration.  The one population not enumerated in this way is the
ternary $n=5$ sub-population, whose rule is stated in Table~A.2.

Every isomorphism class of binary automaton occurs, since conjugating
$(a,b)$ by a permutation $\pi$ carrying $a$ to its class representative
$a_{0}$ leaves the pair $(a_{0},\pi b\pi^{-1})$ in the enumeration, and
$\mathrm{minext}$, $\mathrm{rt}$, $d$, synchronization and strong
connectivity are invariants of simultaneous conjugation.  In this
convention an isomorphism class contributes $|C(a)|/|\mathrm{Aut}(\mathcal
A)|$ representatives, where $C(a)$ is the centraliser of the $a$-letter in
the symmetric group and $\mathrm{Aut}(\mathcal A)$ is the group of
permutations of $Q$ commuting with every letter, against
$n!/|\mathrm{Aut}(\mathcal A)|$ labelled automata, so a ratio of two such
counts is not the corresponding labelled ratio; reweighting each
representative by $n!/|C(a)|$ recovers the labelled count exactly.  A share
of 0 or 100 percent is convention-invariant, so every emptiness and
universality statement in this paper is unaffected.  The distortion is
largest when the property refers to the letter $a$: the share of automata with
no permutation letter moves from 76.18 percent in this convention to 91.37
percent labelled on the synchronizing strongly connected $\{d\le2\}$
population at $n=6$ (260,461 automata), and on the unfiltered enumeration
at that size from 90.13 to 96.94 percent.  The emptiness fractions of
Section~\ref{sec:hypothesis}, which do not refer to $a$, move by 0.12
points at $n=5$ over all deviations and 1.04 points at $n=6$ on $\{d\le2\}$
(93.28 percent in this convention against 93.16 labelled, and 55.31 against
54.27), and by 1.84 and 2.48 points at $n=3$ and $n=4$ over all deviations
(70.59 against 68.75, and 79.02 against 76.54); the shift is under 2.5
points at every cell computed, though not monotone in $n$.  Percentages
in this paper are shares of the stated enumeration; where the two
conventions differ materially both are given.  Where a count is instead
taken over all transition tables, with no quotient, the text says so.

\subsection{The share of automata on which the hypothesis is void}
\label{app:void-share}

How often the set $\mathcal K^{\perp}\cap\{0,1\}^{Q}$ of
Theorem~\ref{thm:krylov} contains no proper nonempty subset is an empirical
question.  The theorem assumes
neither synchronization nor strong connectivity, but the populations below
are exhaustive synchronizing strongly connected binary populations and
strata of them: the populations of \ref{sec:verification} at
$n=3$, $4$, $5$ and $6$ and the $\{d\le2\}$ strata at $n=5$, $6$ and $7$;
and the share is sensitive to that choice.  Over the
non-Eulerian automata of each of those populations (an Eulerian automaton has $w=0$,
$\mathcal K=0$, the set is the whole cube, and the question is empty):


\begin{center}
\renewcommand{\baselinestretch}{1}\footnotesize
\setlength{\tabcolsep}{2pt}
\begin{tabularx}{\linewidth}{@{}>{\raggedright\arraybackslash}p{2.4cm}*{7}{>{\centering\arraybackslash}X}@{}}
\toprule
 & $n=3$ & $n=4$ & $n=5$, all $d$ & $n=5$, $d\le2$ & $n=6$, all $d$ & $n=6$, $d\le2$ & $n=7$, $d\le2$ \\
\midrule
non-Eulerian automata & 51 & 1,168 & 31,634 & 12,509 & 1,110,301 & 248,233 & 4,873,053 \\
\addlinespace
with no proper nonempty subset having all $\sigma_t=0$ & 36 & 923 & 29,508 & 11,104 & 719,484 & 137,301 & 4,359,891 \\
\addlinespace
fraction & 70.6\% & 79.0\% & 93.3\% & 88.8\% & 64.8\% & 55.3\% & 89.5\% \\
\bottomrule
\end{tabularx}
\end{center}

\noindent{\small\textbf{Table A.1.}\ Automata on which the certificate's
hypothesis is nowhere used.  Over the non-Eulerian automata of each of the
binary populations and strata named above, the number and share having no
proper nonempty subset $S$ with $\sigma_t(S)=0$ for every $t\le n-1$, that
is with $\mathcal K^{\perp}\cap\{0,1\}^{Q}$ containing only $\emptyset$
and $Q$.  For the quotient convention see \ref{sec:conventions}.\par}

The fraction is not monotone in $n$.  Over all deviations it rises from 70.6
percent at $n=3$ to 93.3 percent at $n=5$ and falls to 64.8 percent at
$n=6$; on the stratum $\{d\le2\}$, the only one measured at $n=7$, it falls
from 88.8 percent at $n=5$ to 55.3 percent at $n=6$ and rises again to 89.5
percent at $n=7$.  The range over the cells measured is 55.3 percent
($n=6$, $\{d\le2\}$) to 93.3 percent ($n=5$, all deviations).  
These are shares in
the quotient convention of \ref{sec:conventions}, which also gives the
labelled shares; the two differ by at most 2.5 points at every cell computed.

\subsection{Populations, programs and sweeps}
\label{app:tables}

Table~A.2 lists every population enumerated for this paper.  A population
is named by its size $n$, alphabet size $k$ and filter; unless the table
says otherwise it consists of synchronizing strongly connected complete
automata in the convention of \ref{sec:conventions}, with the first
letter ranging over one representative of each conjugacy class of maps
and the other letters over all maps, and $\{d\le2\}$ or $\{d\le4\}$
denotes the stratum of the stated deviation.  The two unfiltered
enumerations and the three all-transition-table counts are identified as
such.

\begingroup
\renewcommand{\baselinestretch}{1}\footnotesize
\setlength{\tabcolsep}{3pt}
\begin{xltabular}{\linewidth}{@{}>{\raggedright\arraybackslash}p{1.15cm}>{\raggedright\arraybackslash\hsize=1.3\hsize}X>{\raggedright\arraybackslash\hsize=0.85\hsize}X>{\raggedright\arraybackslash\hsize=0.85\hsize}X@{}}
\toprule
$n$, $k$ & filter and convention & automata & used in \\
\midrule
\endhead
3, 2 & synchronizing strongly connected & 59 (8 Eulerian) & Sections 6.2, 6.3, 6.5, 6.6; Tables A.1, D.1 \\
4, 2 & synchronizing strongly connected & 1,240 (72 Eulerian) & Sections 6.2 to 6.6; Appendix C; Tables A.1, D.1 \\
4, 2 & $\{d\le2\}$ & 811 & Section 6.6; Table D.1 \\
5, 2 & synchronizing strongly connected & 32,588 (954 Eulerian) & Sections 6.1 to 6.6, 8.1, 8.3; Appendix C; Tables A.1, D.1 \\
5, 2 & $\{d\le2\}$ & 13,463 & Tables A.1, D.1 \\
6, 2 & synchronizing strongly connected & 1,122,529 (12,228 Eulerian) & Sections 6.3 to 6.6; Appendix C; Tables A.1, D.1 \\
6, 2 & $\{d\le2\}$ & 260,461 & \ref{sec:conventions}; Tables A.1, D.1 \\
7, 2 & $\{d\le2\}$ & 5,065,635 & Tables A.1, D.1 \\
7, 2 & $\{d\le4\}$ & 22,079,307 & Appendix E; Table A.4 \\
7, 2 & synchronizing strongly connected & 42,605,958 & Appendix C \\
8, 2 & first letter a permutation (22 cycle types), second any map & 123,014,054 & Section 7.3; Appendix C \\
9, 2 & first letter a permutation (30 cycle types), second any map & 3,761,587,885 & Section 7.3; Appendix C \\
4, 3 & synchronizing strongly connected, the two letters after the first a multiset & 395,299 (12,303 Eulerian); 789,358 ordered & Sections 6.2, 6.3, 6.5, 6.6; Appendices C, D; Table D.1 \\
4, 3 & $\{d\le2\}$, multiset & 124,893; 249,605 ordered & Section 6.6; Table D.1 (ordered) \\
5, 3 & $\{d\le2\}$, multiset & 18,974,472; 37,946,778 ordered & Section 6.6; Table D.1 (ordered) \\
5, 3 & synchronizing strongly connected, multiset & 138,540,502 & Appendix C \\
5, 3 & sub-population: first letter one of the 7 cycle types of a permutation, its cycles on consecutive states; second letter all $5^{5}=3{,}125$ maps; third letter the lexicographically least map of each of the 47 conjugacy classes & 1,028,125 triples, 580,893 synchronizing strongly connected & Section 6.3; Table A.4 \\
4, 3 & Eulerian, synchronizing strongly connected; first letter up to conjugacy, the other two ordered & 24,606 & Section 5.2 \\
4 to 9, 2 & synchronizing Eulerian (the Eulerian fibres) & 72; 954; 12,228; 192,582; 3,212,088; 59,605,126 & Section 5.2; Table 1 \\
5, 2 & unfiltered: 31,895 not synchronizing, 82,392 synchronizing but not strongly connected, 32,588 synchronizing strongly connected & 146,875 & Section 6.1; Table A.4 \\
6, 2 & unfiltered & the enumeration of \ref{sec:conventions} & \ref{sec:conventions} \\
3 and 4, 2; 3, 3 & all transition tables, no quotient & 639 non-Eulerian of $3^{6}$; 63,016 of $4^{8}$; 18,003 of $3^{9}$ & Section 5.4 \\
\bottomrule
\end{xltabular}
\endgroup

\noindent{\small\textbf{Table A.2.}\ The populations.  Counts in the
quotient convention of \ref{sec:conventions} unless the row says
otherwise; ``ordered'' counts the two letters after the first as an
ordered pair.\par}

Table~A.3 records, for every computed quantity, the program of the
released repository that produces it and whether a second program
sharing no code with the first reproduced it.  The historical record
behind some of these counts is longer, but only what the repository
reproduces is claimed here.

\begingroup
\renewcommand{\baselinestretch}{1}\footnotesize
\setlength{\tabcolsep}{3pt}
\begin{xltabular}{\linewidth}{@{}>{\raggedright\arraybackslash\hsize=0.9\hsize}X>{\raggedright\arraybackslash\hsize=1.0\hsize}X>{\raggedright\arraybackslash\hsize=1.1\hsize}X@{}}
\toprule
quantity (where reported) & program & second program \\
\midrule
\endhead
binary censuses at $n\le6$ and at $n=7$ on $\{d\le4\}$ (Sections 5.4, 6.1 to 6.5, 8.1, 8.3, Appendix E, Table A.4) & \nolinkurl{census/census.py}, \nolinkurl{census_d4.py}, \nolinkurl{grading.py} & the C engine \nolinkurl{second_moment/census_engine.c} reproduces the counts of Sections 6.3 to 6.5 on the populations of Table D.1 \\
second moment (Sections 4.1, 6.6; Tables A.1, D.1) & \nolinkurl{second_moment/census_engine.c} & \nolinkurl{second_moment/independent_engine} (Python, no shared code): 96 of 96 integer fields agree on binary $n=3,4,5$ and ternary $n=4$, in both conventions \\
membership test (Appendix D) & \nolinkurl{second_moment/membership.py} at $n\le5$; \nolinkurl{membership.rs}, a port of it, at $n=6$ & none \\
ternary populations (Sections 6.2, 6.3, 6.5; Appendix D; Table A.4) & \nolinkurl{ternary/ternary.py} & the C engine on the ternary $n=4$ population and the two ternary strata, in the ordered convention \\
Eulerian fibres (Section 5.2, Table 1) & \nolinkurl{eulerian_fibres/fibres.c} & \nolinkurl{census/census_d4.py} at $n\le7$; none at $n=8$ and $n=9$ \\
the witnesses of Propositions~\ref{prop:be-fails} and~\ref{prop:half} (Appendix C) & \nolinkurl{stationary/witnesses.py}, \nolinkurl{equal_weight_witnesses.py}, exact rational arithmetic & none \\
the family of Proposition~\ref{prop:family} at $n\le40$ (Appendix C) & \nolinkurl{stationary/family_witnesses.py}, exact rational arithmetic & none; the proposition is proved \\
uniform-weight censuses (Appendix C) & \nolinkurl{stationary/uniform_weights.c} & \nolinkurl{permutation_stratum.c} agrees on the automaton and stuck-subset counts and on the maximum on every binary population up to $n=8$; the $n=9$ stratum and the ternary $n=5$ population rest on one program each \\
incomparability of $\{B\ge0\}$ and $\{B_e\ge0\}$ (Appendix C) & \nolinkurl{stationary/uniform_weights.c} & none \\
census of $t^{\ast}$ at $n\le7$ (Section 7.3, Appendix C) & \nolinkurl{stationary/bias_census.c} & \nolinkurl{bias_census_exact.c}: every histogram bin agrees; the 133 attaining pairs re-verified exactly by \nolinkurl{boundary_classes.py}, with $2/3$ an exact root \\
$n=8$ stratum thresholds (Appendix C) & \nolinkurl{stationary/bias_census_exact.c} & the 448 pairs at or below $2/3$ re-verified exactly by \nolinkurl{exact_n8.py} \\
$n=9$ stratum thresholds (Appendix C) & \nolinkurl{stationary/permutation_stratum.c}, on the grid of step $1/400$, after reproducing the $n=8$ stratum digit for digit & the 1,984 pairs at or below the $2/3$ bin re-verified exactly by \nolinkurl{exact_thresholds.py} \\
$L=135$ (Section 7.3) & \nolinkurl{stationary/berlinkov_L.py} & none \\
identity~\eqref{eq:defect} and $\{\beta^{\ast}=0\}$ (Table A.4) & \nolinkurl{census/identity_check.py}; \nolinkurl{census_d4.py}; \nolinkurl{ternary/ternary.py} & checks of the statements, one program each \\
root location (Table A.5) & \nolinkurl{examples/root_location.py}; the random weightings by \nolinkurl{census/census_d4.py} & none \\
examples, flat walks, the family of Appendix E & \nolinkurl{examples/}, \nolinkurl{ks_family/} & none; each program checks the statements it prints \\
Eulerian ternary automata at $n=4$ (Section 5.2) & \nolinkurl{eulerian_fibres/ternary_converse_n4.py} & none \\
\bottomrule
\end{xltabular}
\endgroup

\noindent{\small\textbf{Table A.3.}\ Programs and agreement.  Paths are
relative to the released repository.\par}

Theorem~\ref{thm:main} and Corollaries~\ref{cor:strict}
and~\ref{cor:antisym} are proved, so their sweeps test the programs; the
sweeps of the non-synchronizing automata test the statements, since
there the boundary case can fail and does.  Table~A.4 lists them.  No
never-extendable subset occurs on any strongly connected population,
where a reset word followed by a path into $S$ gives $Sw^{-1}=Q$ for
every proper nonempty $S$; on the synchronizing but not strongly
connected automata such subsets do occur, every one of them with
$B(S)<0$ as Theorem~\ref{thm:main} forces.

\begin{center}
\renewcommand{\baselinestretch}{1}\footnotesize
\setlength{\tabcolsep}{3pt}
\begin{tabularx}{\linewidth}{@{}>{\raggedright\arraybackslash}X>{\raggedright\arraybackslash}X>{\raggedleft\arraybackslash}p{1.7cm}>{\raggedright\arraybackslash}p{2.5cm}>{\raggedright\arraybackslash}p{2.1cm}@{}}
\toprule
statement & population & automata & stuck subsets & violations \\
\midrule
Theorem~\ref{thm:main} & binary $n=5$ & 32,588 & 11,042 & 0 \\
 & binary $n=6$ & 1,122,529 & 186,497 & 0 \\
 & binary $n=7$, $\{d\le4\}$ & 22,079,307 & 1,853,891 & 0 \\
 & ternary $n=4$ & 395,299 & 43,894 & 0 \\
 & ternary $n=5$ sub-population & 580,893 & & 0 \\
 & in all & 24,210,616 & & 0 \\
Corollary~\ref{cor:strict} & binary $n=5$, unfiltered & 146,875 & & 0 \\
boundary case $B(S)=0$ & binary $n=5$, synchronizing not strongly connected & 82,392 & & 0 \\
boundary case $B(S)=0$ & binary $n=5$, not synchronizing & 31,895 & & 58,060 \\
Corollary~\ref{cor:antisym}: pairs stuck on both sides & binary $n=5$; $n=6$; $n=7$, $\{d\le4\}$ & & 11,042; 186,497; 1,853,891 one-sided & 0 \\
identity~\eqref{eq:defect} against word enumeration & binary $n=5$ (977,640 instances); ternary $n=4$ & & & 0 \\
$\{\beta^{\ast}=0\}$ is the Eulerian fibre, both directions & binary $n=4$ to $7$ (72; 954; 12,228; 192,582 with $\beta^{\ast}=0$); ternary $n=4$; ternary $n=5$ sub-population & & & 0 \\
Proposition~\ref{prop:family} & the family, $n=5$ to $40$ & 36 & 36 & 0 \\
Theorem~\ref{thm:weights}: no root of $q_c$ with $|z|\le k$ & 1,760 random weightings, $k=2,\dots,5$, $n=3,\dots,13$ & & & 0 (least $\min|z|/k$ observed 1.00075) \\
\bottomrule
\end{tabularx}
\end{center}

\noindent{\small\textbf{Table A.4.}\ Sweeps of the proved statements.
The 58,060 violations on the non-synchronizing automata are the proper
nonempty members of the flat locus of Section~\ref{sec:qcert}.\par}

The sharpness of the root bound of Theorem~\ref{thm:weights} needs no
sampling: taking $c=(\varepsilon,\dots,\varepsilon,1)$ makes $q_c$ tend
coefficientwise to $h_{n-1}$, at fixed degree $n-2$ since
$\gamma_{n-2}=c_{n-1}=1$, so its roots converge to those of $h_{n-1}$,
which lie on $|z|=k$; thus $\min|z|/k$ tends to $1$ while
Theorem~\ref{thm:weights} holds it strictly above $1$ at every
$\varepsilon>0$.  So at each fixed $n\ge3$ and $k$ the infimum of
$\min|z|/k$ over the positive cone of weightings is exactly $1$ and is not
attained: no bound $\min|z|/k\ge\rho>1$ with $\rho$ depending only on $n$
and $k$ is available, though weight-dependent ones are,
Enestr\"om--Kakeya giving $\min|z|/k\ge\min_jg_j/g_{j+1}$.  Table~A.5
collects the attainment checks and the remaining measured values.

\begin{center}
\renewcommand{\baselinestretch}{1}\footnotesize
\setlength{\tabcolsep}{3pt}
\begin{tabularx}{\linewidth}{@{}>{\raggedright\arraybackslash}X>{\raggedright\arraybackslash}X>{\raggedright\arraybackslash}X@{}}
\toprule
statement & population & value \\
\midrule
$n-1$ attained inside $\{B\ge0\}$ (Section 6.3) & ternary $n=4$; ternary $n=5$ sub-population & 11,385 and 9,698 subsets, at every subset size \\
maximum of $\mathrm{minext}$ over singletons off the Eulerian fibre & binary $n=5$, by deviation $d=2,4,6,8$ & $n$ at each \\
maximum reset threshold, as a check against \cite{cpr} & binary $n=4$, $5$, $6$; $n=7$ on $\{d\le4\}$ & 9, 16, 25, 36, at $d=2$ only, by the \v{C}ern\'y automata \\
$\#\{B<0\}=\#\{B>0\}$, as $B(Q\setminus S)=-B(S)$ requires & binary $n=5$ & 463,047 \\
$\min|\mathrm{root}(p)|/k$ (independent of $k$) & $n=3,\dots,40$ & falls strictly from $2$ to $1.05427$, above $1$ throughout \\
$\min|z|/k$ at $c=(\varepsilon,\dots,\varepsilon,1)$ & $n=13$, $k=2$, $\varepsilon=10^{-6}$ & $1.00000004$ \\
the coprimality conjecture with three letters (Section 5.2) & Eulerian ternary $n=4$ & the maximum of $\mathrm{minext}$ is $3$ at every subset size; 832 pairs (automaton, subset) of size $2$ attain it \\
\bottomrule
\end{tabularx}
\end{center}

\noindent{\small\textbf{Table A.5.}\ Attainment checks and the remaining
measured values.\par}
\section{Proofs of the fibre results}
\label{app:fibre-proofs}

\begin{proof}[Proof of Theorem~\ref{thm:gcd}]
Write $\ell=n-k$, $x_0=0$, $y_0=\ell$, $t_i=i$ for $i<\ell$ and
$c_j=\ell+j$ for $0\le j<k$: the letter $b$ maps the tail
$t_0\to t_1\to\dots\to t_{\ell-1}\to c_0$ onto the cycle
$c_0\to\dots\to c_{k-1}\to c_0$.  The state $x_0$ has the $a$-preimages
$x_0,y_0$ and no $b$-preimage, $y_0$ has the $b$-preimages
$t_{\ell-1},c_{k-1}$ and no $a$-preimage, and every other state has one
preimage under each letter; so the automaton is Eulerian, and strongly
connected since the $b$-orbit of $x_0$ is $Q$ and $a(y_0)=x_0$.

If $d=\gcd(k,n)>1$ then $d$ divides $\ell$, and the partition of $Q$ by
residues modulo $d$ is a congruence: $b$ maps the class of $i$ to the
class of $i+1$, also at $i=n-1$ since $n\equiv\ell\pmod d$, and $a$
maps $\ell$ to $0$.  Since the letters permute the classes of this
partition and the classes have equal size, every union of classes has
the same number of preimages under every word, so a proper nonempty
union never extends, against Theorem~\ref{thm:main}; hence the automaton
is not synchronizing.  If
$\gcd(k,n)=1$, give the cycle state $c_j$ the phase $j\in\mathbb Z_k$.
The letter $b$ advances every phase by one and brings every tail state
onto the cycle within $\ell$ steps; $a$ moves only $y_0=c_0$, to $x_0$,
which re-enters the cycle at phase $0$ after $\ell$ further steps of $b$,
while every other cycle state advances by $\ell\equiv n\pmod k$.  So for
two cycle states at phase difference $\Delta$ the word $b^{j}ab^{\ell}$,
with $j$ chosen to bring one of them to phase $0$, changes $\Delta$ by
$\pm n$ modulo $k$; as $n$ is invertible modulo $k$, every difference is
reduced to $0$, every pair of states merges, and the automaton is
synchronizing.

Call $T\subseteq Q$ balanced if $|Ta^{-1}|=|Tb^{-1}|=|T|$.  On an Eulerian
binary automaton the two preimage sizes sum to $2|T|$, so $T$ is balanced
if and only if it contains as many states with two $a$-preimages as
states with two $b$-preimages, here if and only if $x_0\in T\Leftrightarrow
y_0\in T$.  On an Eulerian automaton the average of $|Su^{-1}|$ over the
words of length $t$ is $|S|$, so $\mathrm{minext}(S)\ge n-1$ if and only if
$|Su^{-1}|=|S|$ for every $|u|\le n-2$, that is, if and only if every
$Su^{-1}$ with $|u|\le n-3$ is balanced; and then
$\mathrm{minext}(S)=n-1$ by Theorem~\ref{thm:main}.  A balanced $T$
satisfies $Ta^{-1}=T$, and $Tb^{-1}$ is the set of $b$-predecessors of the
states of $T$, both predecessors of $y_0$ included when $y_0\in T$.  Hence,
by induction on the length, if $Sb^{-s}$ is balanced for every $s<t$
then every $Su^{-1}$ with $|u|\le t$ is one of the sets $Sb^{-s}$ with
$s\le|u|$; and $Sb^{-t}$ is balanced if and only if
\begin{equation}
\label{eq:ct}
x_0\cdot b^{t}\in S\quad\Longleftrightarrow\quad y_0\cdot b^{t}\in S,
\end{equation}
so that $\mathrm{minext}(S)=n-1$ if and only if~\eqref{eq:ct} holds for
$0\le t\le n-3$.  Now $x_0\cdot b^{t}=t_t$ for $t<\ell$ and
$x_0\cdot b^{t}=c_{(t-\ell)\bmod k}$ for $t\ge\ell$, while
$y_0\cdot b^{t}=c_{t\bmod k}$.  Put $A=\{j:c_j\in S\}\subseteq\mathbb Z_k$
and $\nu=n\bmod k$.  For $t<\ell$, where $\ell-1\le n-4$ because $k\ge3$,
condition~\eqref{eq:ct} says $t_t\in S\Leftrightarrow t\bmod k\in A$: the
tail part of $S$ is determined by $A$.  For $\ell\le t\le n-3$, with
$j=t-\ell$ running over $0,\dots,k-3$, it says $j\in A\Leftrightarrow
j+\nu\in A$.  Since $\gcd(\nu,k)=1$ the map $j\mapsto j+\nu$ is a single
cycle on $\mathbb Z_k$, and the conditions for $j=0,\dots,k-3$ are all of
its edges except the two leaving $k-2$ and $k-1$; so $A$ is a union of the
two arcs $A_1,A_2$ into which those two edges cut the cycle, and $S$
proper nonempty means $A\in\{A_1,A_2\}$.  The two subsets so obtained are
complementary, since their cycle parts and their tail parts are.

For the sizes write $n=qk+\nu$, so $\ell=(q-1)k+\nu$, and let
$e=\nu^{-1}\bmod k$.  The arc ending at $k-1$ is
$A_2=\{-1-t\nu\bmod k:0\le t<e\}$, of $e$ elements, and for either arc
$|S|=|A|+\#\{i<\ell:i\bmod k\in A\}=q|A|+|A\cap[0,\nu)|$.  The residue
$-1-t\nu$ lies in $[0,\nu)$ exactly when the interval $[t\nu+1,(t+1)\nu]$
contains a multiple of $k$; for $0\le t<e$ these intervals, each of length
$\nu<k$, tile $[1,e\nu]=[1,1+hk]$, where $e\nu=1+hk$, which contains
exactly the $h$ multiples $k,\dots,hk$.  Hence $|A_2\cap[0,\nu)|=h$ and
$|S_{A_2}|=qe+h=(e(qk+\nu)-1)/k=(en-1)/k$, so $|S_{A_2}|\,k\equiv-1$ and
$|S_{A_1}|\,k=(n-|S_{A_2}|)\,k\equiv1\pmod n$.

Finally, given $n\ge4$ and $m$ coprime to $n$, the two residues $\pm m^{-1}$
modulo $n$ are coprime to $n$ and sum to $n$, so one of them is at least
$3$ (at $n=4$ neither can be $2$); for that $k$ the automaton
$\mathcal E(n,k)$ has attaining subsets of the sizes $\pm k^{-1}\equiv\pm m$
modulo $n$, that is, of sizes $m$ and $n-m$.
\end{proof}

\begin{proof}[Proof of Proposition~\ref{prop:gcd-partial}]
Let $p$ be a prime dividing both $|S|$ and $n$, and work over the field
$\mathbb F_p$.  Put $f=[S]^{\mathsf T}$, which has entries $0$ and $1$
and is not a multiple of $\mathbf 1$, so
$W_0=\mathrm{span}\{\mathbf1,f\}$ has dimension $2$; as
$\pi(x)\mathbf1=\mathbf1$, the space $W_t$ of Lemma~\ref{lem:stabilise}
is spanned by $\mathbf1$ and the vectors $\pi(u)f=[Su^{-1}]^{\mathsf T}$
with $|u|\le t$.  On an Eulerian automaton every $\sigma_t$ vanishes,
so by Lemma~\ref{lem:defect} the average of $|Su^{-1}|$ over the words
of length $t$ is $|S|$, and $\mathrm{minext}(S)=n-1$ gives
$|Su^{-1}|=|S|$ for every $|u|\le n-2$.  Hence every generator of
$W_{n-2}$ has coordinate sum $|S|$ or $n$, both zero in $\mathbb F_p$,
and Lemma~\ref{lem:stabilise} with $H$ the hyperplane of vectors with
zero coordinate sum, $h=n-1$ and $m=2$, gives $\pi(u)f\in H$ for every
word $u$: $p$ divides $|Su^{-1}|$ for every word $u$.  For every letter
$x$ and every $T\subseteq Q$,
\[|Tx^{-1}|-|T|=\sum_{q\in T}\bigl(|qx^{-1}|-1\bigr);\]
the negative summands, one for each state of $T$ with no $x$-preimage,
number at most $n-|Q\cdot x|=\kappa_x$, and the positive ones total at
most $\sum_{q}\max(0,|qx^{-1}|-1)$, which is again $\kappa_x$ because
$\sum_q(|qx^{-1}|-1)=0$; so the difference is an integer of absolute
value at most $\kappa_x$.  Applied to $T=Su^{-1}$, the difference
$|S(xu)^{-1}|-|Su^{-1}|$ is divisible by $p$ and at most $\kappa_x$ in
absolute value; if $p>\kappa_x$ for every letter $x$, all these
differences vanish, so $|Su^{-1}|=|S|$ for every word, which a reset
word contradicts.  Hence $p\le\max_x\kappa_x$.  On a binary Eulerian
automaton every state has two preimages in all, so a state with two
$a$-preimages has no $b$-preimage and conversely; as $|qa^{-1}|\le2$,
$\kappa_a=\sum_q\max(0,|qa^{-1}|-1)$ is the number of states with two
$a$-preimages, which is the number with no $b$-preimage, $\kappa_b$.
\end{proof}

\section{The stationary-vector functional: proofs, witnesses and censuses}
\label{app:stationary}

This appendix collects the proofs and the measurements behind
Section~\ref{sec:stationary}.

\begin{proof}[Proof of Proposition~\ref{prop:cesaro}]
$M$ is irreducible with row sums $k$, so by the Perron--Frobenius theorem
its peripheral eigenvalues are $k\omega$ for the $h$-th roots of unity
$\omega$, $h$ the period, each simple, and every other eigenvalue has
modulus strictly below $k$.  Hence
$M^{t}k^{-t}=\sum_{\omega^{h}=1}\omega^{t}\,\Pi_{\omega}+R_t$ with
$\|R_t\|\to0$ geometrically, where $\Pi_{\omega}$ is the spectral projection
at $k\omega$ and
$\Pi_1=\mathbf1e^{\mathsf T}/\langle\mathbf1,e\rangle$, since $\mathbf1$ and
$e^{\mathsf T}$ are the right and left eigenvectors at $k$.  So
\[\frac{\sigma_t(S)}{k^{t}}
=\frac{\mathbf1^{\mathsf T}M^{t}[S]^{\mathsf T}}{k^{t}}-|S|
=\frac{n\,\langle[S],e\rangle-|S|\,\langle\mathbf1,e\rangle}
{\langle\mathbf1,e\rangle}
+\sum_{\omega\ne1}\omega^{t}\,\mathbf1^{\mathsf T}\Pi_{\omega}[S]^{\mathsf T}
+o(1),\]
whose constant term is $B_e(S)/\langle\mathbf1,e\rangle$; the oscillating
terms average to zero over $t$, and are absent when $h=1$.
\end{proof}

\begin{proof}[Proof of Proposition~\ref{prop:be-fails}]
Letters are written as the list of images of $0,1,\dots,n-1$.  Each
witness was checked with exact rational arithmetic by two implementations
written in isolation: the integer eigenvector is recomputed from
$e^{\mathsf T}M=ke^{\mathsf T}$, strong connectivity and synchronization
by breadth-first search, and $\mathrm{minext}(S)$ by breadth-first search
over all preimage sets.

On $Q=\{0,\dots,4\}$ let $a=[0,2,1,1,3]$ and $c=[3,1,4,0,2]$, so that $c$
is the permutation $(0\,3)(2\,4)$.  With the three letters $a,a,c$ the
automaton is synchronizing and strongly connected, $e=(1,4,3,1,1)$, and
$S=\{2,4\}$ has $B_e(S)=5\cdot4-2\cdot10=0$ while
$\mathrm{minext}(S)=5$: the chain
$\{2,4\}\to\{1\}\to\{2,3\}\to\{1,4\}\to\{1,2\}\to\{1,2,3\}$ of preimages
under $a,a,a,c,a$ is a shortest extension.  With $a$ taken three times,
$e=(1,5,4,1,1)$ and $B_e(S)=1$; with $a$ taken once, $e=(1,3,2,1,1)$ and
$B_e(S)=-1$.

On $Q=\{0,\dots,6\}$ let $a=[0,0,2,4,3,3,5]$, $b=[1,4,5,3,6,2,0]$ and
$a'=[0,5,2,4,3,3,0]$, so that $a'=a\circ(1\,6)$.  The three letters are
pairwise distinct, the automaton is synchronizing and strongly connected,
$e=(3,1,1,4,3,1,1)$, and $S=\{0,1\}$ has $B_e(S)=7\cdot4-2\cdot14=0$ with
$\mathrm{minext}(S)=7$.  Finally let $x_1=[5,3,2,4,4,0,6]$,
$x_2=[1,0,0,2,3,6,5]$, $x_3=[0,3,2,4,4,5,6]$, $x_4=[0,3,2,4,4,6,5]$ and
$x_5=[5,3,2,4,4,6,0]$, five pairwise distinct letters on seven states.
The automaton is synchronizing and strongly connected,
$e=(5,1,4,4,16,5,5)$, and $S=Q\setminus\{6\}$ has
$B_e(S)=7\cdot35-6\cdot40=5>0$ and $\mathrm{minext}(S)=7$.
\end{proof}

All four witnesses are instances of one construction.  Give the letters weights
$\pi_x>0$ summing to $1$ and let $e_\pi$ be the stationary vector of
$P=\sum_x\pi_x\pi(x)$, normalised to $e_\pi(Q)=1$; the uniform weighting
recovers $e$ up to scale, and a letter occurring $r$ times among $k$
carries weight $r/k$.  The first witness is therefore the two-letter
automaton $(a,c)$ with weights $(2/3,1/3)$.  At weights $(t,1-t)$ its
stationary vector is proportional to $(1-t,\,2-t,\,1,\,1-t,\,1-t)$ and
$5\,e_t(S)-2$ is a positive multiple of $3t-2$: the subset $S$ is stuck at
every $t$ and carries at least the average weight per state exactly when
$t\ge2/3$.  Repeating a letter is one way to reweight the letters; the
second witness reweights without repetition.
In the second witness the states $1$, $2$, $5$ and $6$ have equal
stationary weight at every $t$, so $e_t\pi(a\circ\sigma)=e_t\pi(a)$ for
every permutation $\sigma$ of these four states: the letter $a'$ leaves
the stationary vector unchanged, though not the preimage tree of $S$, so
the three-letter automaton has the stationary vector of $(a,b)$ at
weights $(2/3,1/3)$.  The third witness does the same with four
letters that act on $e$ like one letter against a fifth, weights
$(4/5,1/5)$, where $B_e(S)$ is strictly positive.  So no hypothesis
on the multiset of letters, such as pairwise distinctness or a bound on
multiplicities, restores the implication: the stationary vector depends
on the letters only through their weights, and a family of
distinct letters realises any rational weighting whenever the extra
letters do not shorten the extension of $S$.  The least weighting at
which the implication fails is measured below.

One construction gives a witness on every number of states from five
on.

\begin{proposition}
\label{prop:family}
For $n\ge5$ write $n=2L+1$ or $n=2L+2$ with $L\ge2$, put $r=n-L-2$,
and let $\mathcal A_n$ be the automaton on the
states $\gamma_1,\dots,\gamma_L,h_1,h_2,z_1,\dots,z_r$ whose letter $a$
sends $\gamma_i$ to $\gamma_{i+1}$ with indices modulo $L$, $h_2$ to
$h_1$, $h_1$ to $\gamma_1$ and fixes every $z_i$, and whose letter $c$
is the permutation $(\gamma_L\,h_2)(\gamma_{L-1}\,h_1\,z_1\cdots z_r)$.
The automaton $\mathcal A_n$ is synchronizing and strongly connected,
$S=\{\gamma_L,h_2\}$ has $\mathrm{minext}(S)=2L+2\ge n$, and with the
weight $t$ on $a$ and $1-t$ on $c$ the stationary vector normalised by
$e_t(\gamma_L)=1$ satisfies
\[
n\,e_t(S)-|S|\,e_t(Q)=(n-2L)-(1-t)(3n-8).
\]
The inequality $n\,e_t(S)\ge|S|\,e_t(Q)$ holds if and only if
$t\ge1-(n-2L)/(3n-8)$, with equality at the threshold, where $B_e(S)=0$: the letter $a$ taken
$3n-9$ times against one $c$ realises the threshold for odd $n$, and
taken $(3n-10)/2$ times for even $n$.
\end{proposition}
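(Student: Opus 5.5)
The plan is to verify the four claims in the order they are stated: strong connectivity and synchronization, the value of $\mathrm{minext}(S)$, the stationary vector, and the threshold. Only the value of $\mathrm{minext}(S)$ needs real work.

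\emph{Strong connectivity and synchronization.} The letter $a$ is a permutation of $\gamma_1,\dots,\gamma_L$ and fixes each $z_i$. The chain $h_2\to h_1\to\gamma_1$ feeds into the $a$-cycle. The cycle of $c$ carries $\gamma_{L-1}$ to $h_1$ and then through $z_1,\dots,z_r$ back to $\gamma_{L-1}$, and its transposition carries $\gamma_L$ to $h_2$. So every state is reachable from $\gamma_1$ and $\gamma_1$ is reachable from every state, which gives strong connectivity. For synchronization I would use the pair criterion: it suffices that every pair of states can be merged. The only collision of $a$ is $\gamma_L\cdot a=h_1\cdot a=\gamma_1$. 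Since the automaton is strongly connected, it is enough to bring any pair to $\{\gamma_L,h_1\}$. Powers of $a$ rotate the $\gamma$'s, and $c$ moves a state between the $\gamma$-cycle and the $h/z$ part. I would do a short case analysis on where the two states lie (both on the cycle, one on the cycle, neither). In each case I would write an explicit word that lands the pair on $\{\gamma_L,h_1\}$, or merges it sooner.

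\emph{The stationary vector and the threshold.} Put $P=t\,\pi(a)+(1-t)\,\pi(c)$ and solve $e_tP=e_t$ coordinate by coordinate. The incoming mass at each state comes from its $a$-preimages, with weight $t$, and its $c$-preimage, with weight $1-t$. The $z_i$ have one incoming arrow from each letter, $a$ from $z_i$ itself and $c$ from $z_{i-1}$ (or from $h_1$), so they all share one value. Propagating around the $\gamma$-cycle should then give values that are affine in $t$ and $1-t$. I would guess a closed form with the normalisation $e_t(\gamma_L)=1$, check it against all $n$ balance equations, and then sum over $Q$ and over $S$. Expanding $n\,e_t(S)-2\,e_t(Q)$ should give the stated $(n-2L)-(1-t)(3n-8)$. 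Solving this linear expression for $t$ gives the threshold. Substituting $t=r_a/(r_a+1)$, for $a$ repeated $r_a$ times against one $c$, recovers $r_a=3n-9$ when $n=2L+1$ and $r_a=(3n-10)/2$ when $n=2L+2$. This part is mechanical.

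\emph{The value of $\mathrm{minext}(S)$, which is the main obstacle.} I would analyse the preimage tree of $S$. The set $Sc^{-1}$ equals $S$. The set $Sa^{-1}$ equals $\{\gamma_{L-1}\}$, which shrinks, because $h_2$ has no $a$-preimage. So every preimage of $S$ must pass through sets of size at most $2$ for a while. My plan is to find an explicit small family $\mathcal F$ of subsets containing $S$ and to show that every set in $\mathcal F$ has all its one-letter preimages either in $\mathcal F$ or of smaller size. I would also attach to each set a distance, counting the $a$-steps still needed to reach the collision $\{\gamma_1\}a^{-1}=\{\gamma_L,h_1\}$, and show that this distance decreases by at most one per letter. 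The intended shape of $\mathcal F$ is singletons together with pairs $\{\gamma_i,\cdot\}$ travelling around the $\gamma$-cycle. Starting from $S$, such a set needs about $L$ steps to reach $\gamma_1$, a $c$ to switch, and about $L$ more steps before the collision yields a set of size $3$. That gives the lower bound $2L+2$. An explicit word of length $2L+2$ reaching size $3$ gives the matching upper bound. If the hand-built invariant proves awkward, my fallback is a potential function: the sum of the cyclic distances of the elements of the set to $\gamma_L$, which must drop to $0$ before a third state can appear.
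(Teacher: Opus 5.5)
Your stationary-vector and threshold computations are correct and follow the paper. The $z_i$ all take the value $e_t(h_1)=2(1-t)$, the vector is affine in $1-t$, and the multiplicities come from $t=m/(m+1)$. The synchronization plan is sound in outline. Strong connectivity is not what reduces it to reaching $\{\gamma_L,h_1\}$, though: that pair is simply the only one any letter merges, and the explicit words do all the work.

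The genuine gap is the lower bound $\mathrm{minext}(S)\ge 2L+2$, which you rightly call the main obstacle but do not prove. Use $|Tc^{-1}|=|T|$ and $|Ta^{-1}|=|T|+[\gamma_1\in T]-[h_2\in T]$. Since $Sc^{-1}=S$ and $Sa^{-1}=\{\gamma_{L-1}\}$, every extending word first drops to a singleton. So the size must rise twice: once by the split $\{\gamma_1\}a^{-1}=\{\gamma_L,h_1\}$, and once more from a pair. Your distance is the least distance from a state of $T$ to $\gamma_1$ in the preimage digraph, which drops by at most one per letter. It shows only that the first split needs at least $L$ letters. Since $\{\gamma_L,h_1\}$ is again at distance $L-1$, it then gives at least $2L$ letters for the extension, which is two short. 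Moreover the size is not monotone: the chain may fall back to a singleton and re-split any number of times. Neither an invariant family nor a monotone distance controls that by itself.

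The missing idea is to take the last split of $\{\gamma_1\}$ before the extension, and the last visit to $\{\gamma_{L-1},h_2\}$ after it.
\begin{itemize}
\item After that split the size must stay $2$.
\item Both letters send $\{\gamma_L,h_1\}$ to $\{\gamma_{L-1},h_2\}$, which cannot itself extend because it contains $h_2$.
\item From there $a^{-1}$ gives the singleton $\{\gamma_{L-2}\}$ when $L\ge3$, against the choice of the last split. When $L=2$ it returns to $\{\gamma_L,h_1\}$, against the choice of the last visit.
\item So the next letter is forced to be $c$, giving $\{z_r,\gamma_L\}$. This set is again at distance $L-1$ from $\gamma_1$, since $d(\gamma_L)=L-1$ and $d(z_r)=L-1+r$.
\end{itemize}
The count therefore becomes $(L+1)+1+(L-1)+1=2L+2$. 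Your phrase ``about $L$ steps, a $c$ to switch, about $L$ more'' asserts this detour without deriving it.

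Two details of your plan would also fail as written.
\begin{itemize}
\item The family of singletons and pairs $\{\gamma_i,\cdot\}$ is not closed. For example $S(c\,c\,a^{L+1})^{-1}=\{z_{r-1},h_2\}$ (with $z_0=h_1$) contains no $\gamma_i$, and the word has length $L+3\le 2L+1$.
\item The fallback potential is ill-posed. The states $h_1,h_2$ and the $z_i$ have no cyclic distance to $\gamma_L$. A sum of distances can drop by $|T|$ per letter, and it cannot reach $0$ on a two-element set. The relevant quantity is a minimum, taken to the state $\gamma_1$ with two $a$-preimages, with $h_2$ excluded.
\end{itemize}
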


\begin{proof}
Write $\Gamma=\{\gamma_1,\dots,\gamma_L\}$ and $Z=\{z_1,\dots,z_r\}$.
Every state reaches $\gamma_1$, the states of $\Gamma\cup\{h_1,h_2\}$
by $a$ and each $z_i$ by $c$ through $\gamma_{L-1}$; and $\gamma_1$
reaches every $\gamma_i$ by $a$, $h_2$ by $c$ from $\gamma_L$, $h_1$ by
$a$ from $h_2$ and each $z_i$ by $c$ from $h_1$.  The automaton is
therefore strongly connected.  It is synchronizing because every pair of states
is merged by some word.  The word $a^{2}$ sends a pair into
$\Gamma\cup Z$.  The word $acaa$ sends $\{\gamma_{L-1},z_r\}$ to
$\{\gamma_1\}$ and $c\,a^{L-1}$ sends $\{\gamma_{L-1},z_j\}$ to
$\{\gamma_{L-1},z_{j+1}\}$ for $j<r$, so every pair
$\{\gamma_{L-1},z_j\}$ is merged.  A pair $\{\gamma_k,z_j\}$ becomes one of these under a power
of $a$, and a pair $\{z_i,z_j\}$ under a power of $c$; a pair
$\{\gamma_k,\gamma_{L-1}\}$ with $k\le L-2$ becomes
$\{\gamma_{L-1},z_1\}$ under $c\,c\,a^{L-1-k}$, the pair
$\{\gamma_{L-1},\gamma_L\}$ becomes it under $c\,c\,a^{L-1}$, and a
power of $a$ sends every pair in $\Gamma$ to one of these two.

For a set $T$ the preimage $Tc^{-1}$ has the size of $T$ and
$|Ta^{-1}|=|T|+[\gamma_1\in T]-[h_2\in T]$, since $\gamma_1$ is the
only state with two $a$-preimages and $h_2$ the only one with none.
Call a split an application of $a^{-1}$ to a set containing $\gamma_1$
and not $h_2$; only a split raises the size, and an extension of $S$
is a split of a set of size $2$.  In the digraph on $Q$ with an edge
$p\to p'$ whenever $p'\cdot x=p$ for some letter $x$, let $d(p)$ be
the distance from $p$ to $\gamma_1$; one letter lowers
$\min_{p\in T}d(p)$ by at most one, as every state of $Tx^{-1}$ is a
preimage of a state of $T$.  The preimages of $\gamma_i$ for
$2\le i\le L-2$ are $\gamma_{i-1}$ and $\gamma_i$, those of
$\gamma_{L-1}$ are $\gamma_{L-2}$ and $z_r$ when $L\ge3$, those of $\gamma_L$ are
$\gamma_{L-1}$ and $h_2$, those of $h_1$ are $h_2$ and $\gamma_{L-1}$,
that of $h_2$ is $\gamma_L$, those of $z_1$ are $z_1$ and $h_1$ and
those of $z_i$ are $z_i$ and $z_{i-1}$.  The distances are therefore
$d(\gamma_i)=i-1$, $d(h_1)=L-1$, $d(h_2)=L$ and $d(z_i)=L-1+i$.  As
$Sc^{-1}=S$ and
$Sa^{-1}=\{\gamma_{L-1}\}$, every extending word passes through
$\{\gamma_{L-1}\}$, and its first split, of $\{\gamma_1\}$ into
$\{\gamma_L,h_1\}$, comes after at least $1+(L-2)+1=L$ letters.  Let
$s\ge L$ count the letters up to the last split before the extension,
which turns $\{\gamma_1\}$ into $\{\gamma_L,h_1\}$.  Both letters send
$\{\gamma_L,h_1\}$ to $\{\gamma_{L-1},h_2\}$; let $s'\ge s+1$ count the
letters up to the last visit to $\{\gamma_{L-1},h_2\}$ before the
extension.  There $a$ gives $\{\gamma_{L-2}\}$ if $L\ge3$, a singleton
that must split again before any extension, against the choice of $s$,
and $\{\gamma_L,h_1\}$ if $L=2$, from which both letters lead back to
$\{\gamma_{L-1},h_2\}$, against the choice of $s'$.  The letter $c$
therefore follows, giving $\{z_r,\gamma_L\}$, at distance $L-1$ from $\gamma_1$, and the
extension comes after at least $s'+1+(L-1)+1\ge2L+2$ letters.  The
preimages under $a$ taken $L+1$ times, then $c$, then $a$ taken $L$
times, run through $\{\gamma_{L-1}\},\dots,\{\gamma_1\}$,
$\{\gamma_L,h_1\}$, $\{\gamma_{L-1},h_2\}$, $\{z_r,\gamma_L\}$,
$\{z_r,\gamma_{L-1}\},\dots,\{z_r,\gamma_1\}$ and end at
$\{z_r,\gamma_L,h_1\}$; so $\mathrm{minext}(S)=2L+2$, which is $n+1$
for odd $n$ and $n$ for even $n$.

For the stationary vector write $u=1-t$, so that
$e_q=t\,e(qa^{-1})+u\,e(qc^{-1})$ at every state $q$.  At $q=z_i$ this
gives $e(z_i)=e(h_1)$; at $q=h_2$, $e(h_2)=u\,e(\gamma_L)$; at
$q=\gamma_L$, $(1-u^{2})e(\gamma_L)=t\,e(\gamma_{L-1})$, that is
$e(\gamma_{L-1})=(1+u)e(\gamma_L)$; at $q=h_1$,
$e(h_1)=t\,e(h_2)+u\,e(\gamma_{L-1})=2u\,e(\gamma_L)$; at $q=\gamma_i$
for $2\le i\le L-2$, $e(\gamma_i)=e(\gamma_{i-1})$; and at $q=\gamma_1$
for $L\ge3$, $e(\gamma_1)=e(\gamma_L)+e(h_1)$.  With
$e(\gamma_L)=1$ the vector is $1+2u$ on $\gamma_1,\dots,\gamma_{L-2}$,
$1+u$ at $\gamma_{L-1}$, $u$ at $h_2$ and $2u$ at $h_1$ and at every
$z_i$, whence $e(Q)=L+2(n-2)u$, $e(S)=1+u$ and
$n\,e(S)-2\,e(Q)=(n-2L)-(3n-8)u$.  A letter occurring $m$ times among
$k$ carries weight $m/k$, which gives the two multiplicities.
\end{proof}

Table~A.4 records the exact check of the proposition for $n=5$ to
$40$.

\begin{proof}[Proof of Theorem~\ref{thm:singleton}]
Call a state fat if some letter has at least two preimages at it and thin
otherwise, and let $F$ be the set of fat states; $F\ne\emptyset$, since
otherwise every letter is a permutation and no proper nonempty subset ever
extends.  In the digraph on $Q$ with an edge $p\to p'$ whenever
$p'\cdot x=p$ for some letter $x$, let $d(p)$ be the distance from $p$ to
$F$, finite by strong connectivity.  Then
$\mathrm{minext}(\{p\})=d(p)+1$.  Along a shortest path
$p=p_0\to p_1\to\dots\to p_d\in F$ with $p_{i+1}\cdot x_i=p_i$ the states
$p_0,\dots,p_{d-1}$ are thin, so $\{p_i\}x_i^{-1}=\{p_{i+1}\}$ exactly,
and a letter $y$ with two preimages at $p_d$ gives
$|\{p\}(yx_{d-1}\cdots x_0)^{-1}|\ge2$.  Conversely, if
$|\{p\}u^{-1}|\ge2$, the preimages of $\{p\}$ under the successive
suffixes of $u$ are singletons joined by edges of the digraph up to the
first one of size at least $2$, whose predecessor is fat, so
$d(p)\le|u|-1$.

Hence $\{q\}$ stuck means $d(q)\ge n-1$.  A shortest path has distinct
vertices, so $d(q)=n-1$, the path $q=p_{n-1}\to\dots\to p_0$ exhausts
$Q$, $d(p_i)=i$, $F=\{p_0\}$, and every other state is thin.  Since
$p'\cdot x=p$ forces $d(p)\le d(p')+1$, every letter maps $p_j$ to some
$p_i$ with $i\le j+1$.  For $1\le i\le n-1$ put
$A_i=\{p_i,\dots,p_{n-1}\}$; the only transitions into $A_i$ from outside
come from $p_{i-1}$, with total weight $m_i>0$, and for $p\in A_i$ let
$u_p$ be the total weight of the letters taking $p$ out of $A_i$,
$U_i=\sum_{p\in A_i}u_p$.  A thin state has weighted in-degree
$\sum_x\pi_x|\{p\}x^{-1}|\le1$, so the total weight of the transitions
into $A_i$, which is $(|A_i|-U_i)+m_i$, is at most $|A_i|$: $m_i\le U_i$.
Summing the stationarity equations $e_p=\sum_x\pi_x\,e(\{p\}x^{-1})$ over
$p\in A_i$ gives $e(A_i)=\sum_{p\in A_i}(1-u_p)e_p+m_ie_{p_{i-1}}$, that
is $m_ie_{p_{i-1}}=\sum_{p\in A_i}u_pe_p\ge U_i\min_{A_i}e\ge
m_i\min_{A_i}e$.  So $e_{p_{i-1}}\ge\min_{A_i}e$, and downward induction
from $i=n-1$ gives $e_q=\min_Qe$.

Then $e_q\le e(Q)/n$, with equality only if $e$ is constant, that is, only
if every weighted in-degree equals $1$.  In that case
$\sum_x\pi_x|Tx^{-1}|=|T|$ for every $T\subseteq Q$, so for a stuck $S$
induction on the length shows $|Su^{-1}|=|S|$ for every $|u|\le n-1$,
which is~\eqref{eq:flat}, and Lemma~\ref{lem:stabilise} with a reset
word contradicts it as in the proof of Theorem~\ref{thm:main}.  Hence
$e_q<e(Q)/n$, and in
the integer normalisation $B_e(\{q\})=ne_q-e(Q)\le-1$.
\end{proof}

The least letter weight at which a stuck subset has $B_e(S)\ge0$ is
measured with the letter weights $(t,1-t)$ of
Section~\ref{sec:stationary}: for each stuck subset $S$, $t^{\ast}(S)$ is
the least $t\ge1/2$ at which $n\,e_t(S)\ge|S|\,e_t(Q)$, either letter
being allowed to carry the weight $t$.  Over every stuck subset of every synchronizing strongly
connected binary automaton in the quotient convention, exhaustively at
$n=5$, $6$ and $7$ (32,588, 1,122,529 and 42,605,958 automata with
11,042, 186,497 and 2,827,614 stuck subsets), the minimum of $t^{\ast}$
is exactly $2/3$, and no pair has $t^{\ast}$ in $(1/2,2/3)$.  The minimum
is attained by 9, 8 and 116 pairs (automaton, subset, letter of weight
$t$), which fall into one, one and four isomorphism classes.  The two-letter automata
underlying the first two witnesses above are two of these six classes.
In every attaining pair the letter of weight $1-t^{\ast}$ is a permutation
and the letter of weight $t^{\ast}$ has deficiency $1$ or $2$, and for each class
$n\,\hat e_t(S)-|S|\,\hat e_t(Q)$, computed with the matrix-tree
normalisation of $e_t$, is $(3t-2)$ times a monomial $c\,t^{i}(1-t)^{j}$.
In particular, with pairwise distinct letters and uniform weights, no
stuck subset has $B_e(S)\ge0$ on the exhaustive binary populations at
$n\le7$, on the strata of the binary $n=8$ and $n=9$ populations whose
first letter is a permutation (18,609,570 and 278,959,029 stuck
subsets), or on the exhaustive ternary population at $n=5$ (Table~A.2).
The largest value of $B_e$ on a stuck subset is $-1$ on each of them.  The threshold
is not $2/3$ at every size.  On the stratum of the binary $n=8$
population whose first letter is a permutation, the automaton with
$a=(0\,1)(2\,3)(4\,5)(6\,7)$ and $b=[0,2,3,4,2,6,1,0]$ has the stuck subset
$S=\{4,5\}$, with $\mathrm{minext}(S)=8$, and with weight $t$ on $b$ its
excess $8\,e_t(S)-2\,e_t(Q)$ is a positive multiple of $-(t^{2}-4t+2)$, so
$8\,e_t(S)\ge2\,e_t(Q)$ from $t=2-\sqrt2\approx0.586$ on.  Over the whole
stratum (123,014,054 automata, 18,609,570 stuck subsets) this is the
least threshold: 416 pairs lie below $2/3$, 384 of them at $2-\sqrt2$ and
32 at the root $0.603$ of $2t^{3}-4t^{2}+5t-2$, and none has
$t^{\ast}=1/2$.  On the same stratum at $n=9$, on the grid of step $1/400$
with the pairs at or below the $2/3$ bin refined exactly, the least
threshold is the root $0.594$ of $11t^{2}-20t+8$, attained by 384 pairs,
with 32 pairs at the root $0.643$ of $t^{3}+8t^{2}-18t+8$ and 1,568 at
$2/3$, and again none has $t^{\ast}=1/2$.

\begin{proof}[Proof of Lemma~\ref{lem:twovalued}]
Write $W=\pi_a+\pi_b$ and $v$ for the vector.  Since $b$ is a permutation,
$b(Y)\sqcup b(Z)=Q$.  A state $r$ receives, under $a$, one preimage in $Y$
if $r\in b(Y)$ and none otherwise, and one preimage in $Z$ if $r\in Z$ and
none otherwise; under $b$ it receives its unique preimage $b^{-1}(r)$,
which lies in $Y$ if $r\in b(Y)$ and in $Z$ if $r\in b(Z)$.  Hence
\begin{align*}
W\,(vP)(r)&=\pi_a\bigl(\pi_b[r\in b(Y)]+W[r\in Z]\bigr)
+\pi_b\bigl(\pi_b[r\in b(Y)]+W[r\in b(Z)]\bigr)\\
&=W\bigl(\pi_b+\pi_a[r\in Z]\bigr)=W\,v(r).
\end{align*}
\end{proof}

\begin{proof}[Proof of Proposition~\ref{prop:half}]
In both cases strong connectivity, synchronization (the words
$bbabaaababaaabababb$ and $abbbababbbabababbbababa$ send every state to
$5$, respectively $7$), $\mathrm{minext}(S)$ (every word of length at most
$n-1$ has $|Su^{-1}|\le4$; $bbababbab$, respectively $ababababab$, gives
$5$) and the integer eigenvector of (i) were checked by exhaustive
enumeration and exact rational arithmetic in four implementations written
independently.  For (ii) apply Lemma~\ref{lem:twovalued} with
$Y=\{0,3,6,8,9\}$ and $Z=\{1,2,4,5,7\}$: $b$ is a permutation, $a$ fixes
$2$, $4$ and $7$ and swaps $1$ and $5$, and $a(Y)=\{0,3,4,7,8\}=b(Y)$; the
subset $S$ has two states in each block.
\end{proof}

In (i) neither letter is a permutation and at every length from $1$ to $8$
some word shrinks the subset; in (ii) the subset is flat to depth $7=n-3$,
then loses a state on exactly one word of length $8$ and on four words of
length $9$, never more than one.  

The certified regions of $B$ and $B_e$ are incomparable: on the
exhaustive binary $n=5$ population, $\{B_e\ge0\}$ holds 17,037
proper-subset instances that $\{B\ge0\}$ misses and misses 9,793 that it
holds, and the same happens at $n=4$.  The sweep behind this and the next
statements ran over the exhaustive synchronizing strongly connected
populations at $n=4$, $5$ and $6$ binary and $n=4$ ternary, 76,125,984
proper-subset instances in all.  Moreover $\{B_e\ge0\}$ is not contained in
$C^{\sharp}$, so no positive reweighting of $\sigma_1,\dots,\sigma_{n-1}$
contains it.  A four-state witness settles the last point.  On
$Q=\{0,1,2,3\}$ let $a$ send every state to $0$ and $b$ send $0,1,2,3$ to
$1,2,3,2$; the automaton is synchronizing and strongly connected, with
$e=(6,3,2,1)$.  The subset $S=\{1\}$ has $\sigma_1(S)=-1$ and
$\sigma_2(S)=\sigma_3(S)=0$, so $S\in H^{\sharp}$ and no functional of
Proposition~\ref{prop:region} certifies it, while
$\varepsilon_1=4\cdot3-12=0$, so $B_e(S)=0\ge0$; and $\mathrm{minext}(S)=2$,
by the word $ab$, which sends every state to $1$.  At $n=4$, 115 of the
7,460 proper-subset instances in $H^{\sharp}$ have $B_e\ge0$, none of them
stuck.  On the family of \ref{sec:bounded-deviation} the maximising
subsets have $B_e=-13$ at $n=5$, falling with $n$, so the family
on which no constant per-step bound holds contains no witness either.

\section{Measurements for the second-moment test}
\label{app:qcert}

This appendix collects the membership test, the cost and the measured
shares behind Section~\ref{sec:qcert}.

The attainable second moments are sparse, which is why the threshold of
(Q-CERT+) is not the sharpest test on the same data.  Since
$x^{2}\equiv x$ modulo $2$, the second moment
has the parity of $\sigma_t(S)$; at $k^{t}=4$, $|S|=3$ and $-\sigma_t(S)=3$
the values attainable with every $x_u\le0$ are $3$, $5$ and $9$, so the
observation $7$, below the threshold $9$, refutes the possibility
that every $x_u\le0$.  Moving one unit from a part of size $a$ to a part
of size $b<a$ lowers the second moment by $2(a-b-1)$, so the values
attainable below the threshold $q\,|S|^{2}+r^{2}$ are spaced by even gaps
that depend on which moves the bounds $0\le|Su^{-1}|\le|S|$ allow; in the
cell above the largest attainable value below the threshold falls short of
it by $9-5=4=2(|S|-1)$.  The membership
test on $\bigl(|S|,\sigma_t(S),\sum_{|u|=t}x_u^{2}\bigr)$ is
pseudo-polynomial in $k^{t}$ and is not adopted here.  On the exhaustive
synchronizing strongly connected binary populations, one row per proper
nonempty subset and length $t\le n-1$, it certifies 0 rows of 708 at $n=3$
that the threshold misses, 80 of 52,080 at $n=4$, 15,053 of 3,910,560 at
$n=5$ and 2,083,316 of 347,983,990 at $n=6$, that is 0, 0.15, 0.38 and
0.60 percent, rising at each size; $n\ge7$ is untested.

The computation is a pair count.  From the automaton on $Q\times Q$
carrying the diagonal action of $\Sigma$,
\[\begin{gathered}
\sum_{|u|=t}x_u^{2}=\mathrm{pairdeg}_t(S)-2|S|\,\mathrm{indeg}_t(S)+k^{t}|S|^{2},\\
\mathrm{pairdeg}_t(S)=\sum_{p,q\in S}\mathrm{pairdeg}_t(p,q),
\end{gathered}\]
where $\mathrm{pairdeg}_t(p,q)$ is the $t$-step in-degree of the state
$(p,q)$ in that automaton.  This costs a factor $n$ and the linearity:
$O(kn^{3})$ for all lengths $t\le n-1$ against $O(kn^{2})$ for
$\beta^{\ast}$, and $O(2^{n}n)$ over all subsets against $O(2^{n})$ for $B$.
At that price, in proper-subset instances, the fraction of $\{B<0\}$
removed on each of the eleven populations is given in Table~D.1.

\begin{center}
\renewcommand{\baselinestretch}{1}\footnotesize
\begin{tabular}{@{}llrrr@{}}
\toprule
population & convention & $C^{\sharp}$ (Section~\ref{sec:csharp}) & (Q-CERT) & (Q-CERT+) \\
\midrule
binary $n=3$ & quotient & 0.0000\% & 60.14\% & 60.14\% \\
binary $n=4$ & quotient & 3.4179\% & 79.29\% & 82.07\% \\
binary $n=4$, $d\le2$ & quotient & 0.6422\% & 77.09\% & 80.65\% \\
binary $n=5$ & quotient & 5.7934\% & 86.83\% & 88.98\% \\
binary $n=5$, $d\le2$ & quotient & 1.8838\% & 83.44\% & 86.33\% \\
binary $n=6$ & quotient & 9.0504\% & 90.03\% & 91.44\% \\
binary $n=6$, $d\le2$ & quotient & 3.6524\% & 86.74\% & 88.80\% \\
binary $n=7$, $d\le2$ & quotient & 4.8300\% & 89.79\% & 91.23\% \\
ternary $n=4$ & multiset & 3.0631\% & 90.04\% & 90.64\% \\
ternary $n=4$, $d\le2$ & ordered & 0.6118\% & 90.03\% & 90.81\% \\
ternary $n=5$, $d\le2$ & ordered & 1.0314\% & 95.24\% & 95.51\% \\
\bottomrule
\end{tabular}
\end{center}

\noindent{\small\textbf{Table D.1.}\ What the second moment certifies of
$\{B<0\}$.  For each of the eleven populations of Section~\ref{sec:qcert}
(Table~A.2), the fraction of the proper-subset instances with $B(S)<0$ on
which each test establishes $\mathrm{minext}(S)\le n-1$.  Binary
populations are in the quotient convention of \ref{sec:conventions}; the
ternary $n=4$ population counts the two letters after the first as a
multiset, as \ref{sec:conventions} prescribes, while the two ternary
strata were enumerated with those letters ordered, the only convention in
which they were run.  The $C^{\sharp}$ column is the series of
Section~\ref{sec:csharp} at the precision measured.  Populations at
different sizes and deviations are separate and are not to be merged.\par}
The three columns differ by an order of magnitude.  (Q-CERT) removes
between 9.95 and 147.16 times as much of $\{B<0\}$ as $C^{\sharp}$ on
every population where the $C^{\sharp}$ share is nonzero, the least ratio at
$n=6$ over all deviations and the greatest at ternary $n=4$ on $\{d\le2\}$,
while at $n=3$, where $C^{\sharp}=\{B\ge0\}$ and the enlargement does
nothing, it still certifies 60.14 percent.  So the second moment is not a
refinement of $C^{\sharp}$.  Nor does the gap close at $k=3$: on the exhaustive ternary $n=4$
population (Q-CERT)'s share moves from 79.29 percent at $k=2$ to 90.04 at
$k=3$ and (Q-CERT+)'s from 82.07 to 90.64, while the $C^{\sharp}$ gain moves
from 3.42 to 3.06, so the gap is wider at $k=3$.

The second inclusion of the display of Section~\ref{sec:qcert} is strict, because extending in exactly
$t$ steps and extending within $t$ steps are different conditions.  On the
automaton at $n=4$ in which $a$ sends $0$, $1$ and $2$ to $0$ and $3$ to
$1$, and $b$ sends $0$ to $2$, $1$ to $3$, $2$ to $3$ and $3$ to $0$, which
is synchronizing and strongly connected, the subset $S=\{3\}$ at $t=2$ has
$x_u$ equal to $-1$, $-1$, $0$, $0$ and $\mathrm{minext}(S)=1$.  It is the
union over $t\le n-1$ that recovers the subsets extending within $n-1$.
Lemma~\ref{lem:monotone} settles where the gap can occur.

\begin{lemma}
\label{lem:monotone}
If $d\le2(k-1)$ then for every $S\subseteq Q$ the maximum of $|Su^{-1}|$
over the words $u$ of length $\ell$ is non-decreasing in $\ell$; in
particular extending within $t$ implies extending at length exactly $t$,
and the gap is empty on $\{d\le2(k-1)\}$.
\end{lemma}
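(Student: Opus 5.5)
The plan is to prove the monotonicity by prepending a single letter. Take a word $u$ of length $\ell$ that attains the maximum of $|Su^{-1}|$, and put $T=Su^{-1}$. For any letter $x$, the identity $S(xu)^{-1}=(Su^{-1})x^{-1}=Tx^{-1}$ holds, because $q\cdot xu\in S$ exactly when $q\cdot x\in Su^{-1}$. So it is enough to find one letter $x$ with $|Tx^{-1}|\ge|T|$. Then $xu$ has length $\ell+1$ and satisfies $|S(xu)^{-1}|\ge|Su^{-1}|$, which shows that the maximum at length $\ell+1$ is at least the maximum at length $\ell$.

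To find that letter, I would use the one-step count: the $t=1$ case of Lemma~\ref{lem:defect}, or directly the identity~\eqref{eq:identity}. Applied to $T$ it gives
\[\sum_{x\in\Sigma}|Tx^{-1}|=\mathrm{indeg}(T)=k|T|+w(T),\qquad w(T)=\sum_{q\in T}w_q .\]
Since $\sum_q w_q=0$, the negative entries of $w$ total exactly $-d/2$. Hence $w(T)\ge -d/2\ge-(k-1)$ under the hypothesis $d\le2(k-1)$. Now suppose every letter had $|Tx^{-1}|\le|T|-1$. The sum would then be at most $k|T|-k<k|T|-(k-1)$, contradicting the lower bound. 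So some letter has $|Tx^{-1}|\ge|T|$, which completes the monotonicity.

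For the ``in particular'' clause, suppose $\mathrm{minext}(S)=s\le t$. Then the maximum of $|Su^{-1}|$ at length $s$ exceeds $|S|$. By monotonicity, the maximum at length $t$ also exceeds $|S|$, so some word of length exactly $t$ extends $S$. That word has some $x_u>0$, which is the condition in the middle of the display of Section~\ref{sec:qcert}. Hence the second inclusion there is an equality on $\{d\le2(k-1)\}$, that is, the gap is empty.

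I do not expect a real obstacle. The only points to handle carefully are the order of composition in $(xu)^{-1}$ and the bound $w(T)\ge-d/2$, which uses $\sum_q w_q=0$.
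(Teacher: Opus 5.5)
Your proof is correct and follows the paper's argument essentially verbatim: the one-step count $\sum_{x\in\Sigma}|Tx^{-1}|=k|T|+w(T)\ge k|T|-d/2$ forces some letter to satisfy $|Tx^{-1}|\ge|T|$ when $d\le2(k-1)$, and this is applied to a maximising $T=Su^{-1}$ with that letter prepended to $u$. The identity $S(xu)^{-1}=(Su^{-1})x^{-1}$ and the ``in particular'' clause, which you check explicitly, are steps the paper leaves implicit.
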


\begin{proof}
For any $A\subseteq Q$, grouping the pairs $(q,x)$ with $q\cdot x\in A$ by
letter and by state gives
$\sum_{x\in\Sigma}|Ax^{-1}|=k|A|+\sum_{q\in A}w_q\ge k|A|-d/2$, and if every
letter had $|Ax^{-1}|\le|A|-1$ the sum would be at most $k|A|-k$.  So
$d\le2(k-1)$ leaves a letter with $|Ax^{-1}|\ge|A|$; applied to a maximising
$A=Su^{-1}$ at length $\ell$, this is the statement at length $\ell+1$.
\end{proof}

Since $d$ is even, a gap needs $d\ge2k$, a value attained at both
alphabet sizes measured.  The witness above has $d=4$ with $k=2$.  Over
the exhaustive ternary population at $n=4$ the gap instances number 1,952,
1,157 and 125 at $d=6$, $8$ and $10$, and none lie below.  On the binary
sweeps the gap is empty on every $\{d\le2\}$ population and every instance
found lies at $d\ge4$.  

\section{Bounded deviation and the per-step constant}
\label{sec:bounded-deviation}

No result of Sections~\ref{sec:certificate} to~\ref{sec:sharpness} uses a
degree hypothesis.  A degree hypothesis bounds instead the per-step
constant of the extension method; the family below shows that no
constant suffices.
The Kisielewicz--Szyku\l{}a family \cite{ks} (a
permutation letter with two cycles of coprime lengths $m-1$ and $m$, and a
second letter whose image has $m+1$ states, on $n=2m-1$ states) has
deviation $d=n-3$, and the maximum of $\mathrm{minext}$ over its subsets of
size at least $2$ and at most $n-1$, attained at a single subset of size
$m-1$, is given in Table~E.1:

\begin{center}
\renewcommand{\baselinestretch}{1}\normalsize
\begin{tabular}{@{}lcccccc@{}}
\toprule
$n$ & 5 & 7 & 9 & 11 & 13 & 15 \\
\midrule
$d=n-3$ & 2 & 4 & 6 & 8 & 10 & 12 \\
$\mathrm{minext}$ & 8 & 14 & 22 & 31 & 44 & 56 \\
$2n$ & 10 & 14 & 18 & 22 & 26 & 30 \\
\bottomrule
\end{tabular}
\end{center}

\textbf{Table E.1.} The Kisielewicz--Szyku\l{}a family at $n=2m-1$. Deviation
$d$, the maximum of $\mathrm{minext}$ over subsets of size at least $2$ and
at most $n-1$, and $2n$ for comparison. The maximum is below $2n$ at
$d=2$, equal to it at $d=4$, and above it from $d=6$ on.

The description names the family but does not determine the second
letter; the entries are $\mathrm{minext}$ of the maximising subset on the
family as \cite{ks} defines it, agreeing with the length of the
extending word produced by the strategy of \cite{ks}, $m^{2}-m+2$ for odd
$m$ and $m^{2}-3m/2+4$ for even $m$, that is $(n^{2}+7)/4$ for
$n\equiv1\pmod 4$ and $(n^{2}-n+14)/4$ for $n\equiv3\pmod 4$.

The family refutes the statement that every subset of size at least $2$
and at most $n-1$ extends within $2n$: it can hold at best on $\{d\le4\}$,
since it fails on $\{d\le6\}$.  On $\{d\le4\}$ an exhaustive test attains
equality:
over the full $\{d\le4\}$ population at $n=7$, 22,079,307 synchronizing
strongly connected binary automata with every proper subset measured, the
maximum of $\mathrm{minext}$ over subsets of size at least 2 is $14=2n$;
the same sweep at $n=5$ and $n=6$ (28,968 and 808,403 automata) gives a
maximum of $2n-1$ at $n=5$ and $2n$ at $n=6$.  This is a measurement at three
sizes; $n\ge8$ is untested.

The maxima of the reset threshold are attained in the class $\{d\le2\}$:
the maxima $9,16,25$ over the ambient populations at $n=4,5,6$, and $36$
over the $\{d\le4\}$ population at $n=7$, are all attained at $d=2$, by the
\v{C}ern\'y automata $C_n$ \cite{ce} ($a$ the cycle
$0\to1\to\dots\to n-1\to0$ on $Q=\{0,\dots,n-1\}$, $b$ sending $0$ to
$1$ and fixing every other state), of reset threshold $(n-1)^{2}$.  The
maxima of $\mathrm{minext}$ are not: at $n=5$ the maximum of
$\mathrm{minext}$ over subsets of size 3 is larger on the stratum $d=4$
than on $d=2$, and at $n=7$ the maximum over subsets of size $n-1$ is larger
on $d=4$ than on the whole class $\{d\le2\}$.


\end{document}